
\documentclass{LMCS}

\def\dOi{11(2:2)2015}
\lmcsheading%
{\dOi}
{1--26}
{}
{}
{Mar.~24, 2014}
{Apr.~17, 2015}
{}

\ACMCCS{[{\bf Theory of computation}]: Models of computation---Concurrency---Distributed computing models; Logic---Verification by
  model checking}

\keywords{Lossy channel systems; Post Embedding Problem; Automatic 
  verification of programs}

\hyphenation{LCSes}

\usepackage{pslatex}
\usepackage[T1]{fontenc}
\usepackage{amsmath}
\usepackage{amsfonts}
\usepackage{amssymb}
\usepackage{bm}
\usepackage{mathtools}
\usepackage[pdflatex,recompilepics=false]{gastex}
\usepackage{stmaryrd} 
\usepackage[basic]{complexity}
\usepackage{ifthen}
\newboolean{short_lncs}
\usepackage{url}
\usepackage{hyperref}
\pdfstringdefDisableCommands{
     \renewcommand*{\bm}[1]{#1}
}

\usepackage{ifpdf}
\ifpdf
        \usepackage{tikz}
         \usetikzlibrary{fit,positioning}
        \usetikzlibrary{arrows}
        \usetikzlibrary{automata}
        \usetikzlibrary{calc}
        \usetikzlibrary{decorations.pathmorphing}
\fi




\newcommand{\pjshuffle}{\|} 
\newcommand{\sizeA}{\scriptsize}
\newcommand{\sizeB}{\footnotesize}

\newcommand{\calT}{\mathcal{T}}
\newcommand{\calH}{\mathcal{H}}
\newcommand{\calP}{\mathcal{P}}

\newcommand{\Saux}{S_{\text{aux}}}




\newcommand{\PEPpcod}{\PEP^{\textnormal{partial}}_{\textnormal{codir}}}

\newcommand{\Act}{{\mathit{Act}}}
\newcommand{\Pcal}{{\mathcal{P}}}
\newcommand{\Mess}{{\mathtt{M}}}
\newcommand{\Conf}{{\mathit{Conf}}}
\newcommand{\Confre}{{\Conf_{\!\!\ttr=\epsilon}}}
\newcommand{\Ch}{{\mathtt{Ch}}}
\newcommand{\ttb}{{\mathtt{b}}}
\newcommand{\ttc}{{\mathtt{c}}}
\newcommand{\ttl}{{\mathtt{l}}}
\newcommand{\ttn}{{\mathtt{n}}}
\newcommand{\ttr}{{\mathtt{r}}}
\newcommand{\ttz}{{\mathtt{z}}}
\newcommand{\pad}{{\textit{pad}}}
\newcommand{\testc}[1]{{\ttc{:}#1}}
\newcommand{\testcp}[1]{{\ttc'{:}#1}}
\newcommand{\testl}[1]{{\ttl{:}#1}}
\newcommand{\testr}[1]{{\ttr{:}#1}}
\newcommand{\ttsharp}{{\mathtt{\#}}}
\newcommand{\wl}{\mathit{write}\_{\ttl}}
\renewcommand{\wr}{\mathit{write}\_{\ttr}}
\newcommand{\rl}{\mathit{read}\_{\ttl}}
\newcommand{\rr}{\mathit{read}\_{\ttr}}

\newcommand{\init}{{\textnormal{in}}}
\newcommand{\newinit}{{\textnormal{new}}}
\newcommand{\final}{{\textnormal{fi}}}
\newcommand{\xloop}{{\textnormal{loop}}}
\newcommand{\Even}{{\mathit{Even}}}
\newcommand{\Odd}{{\mathit{Odd}}}

\newcommand{\sol}{{\textnormal{sol}}}
\newcommand{\prx}{{\textnormal{proxy}}}
\newcommand{\rel}{{\textnormal{rel}}}
\newcommand{\los}{{\textnormal{los}}}
\newcommand{\wrlo}{{\textnormal{wrlo}}}
\newcommand{\Pre}{{\textnormal{Pre}}}
\newcommand{\up}{{\mathord{\uparrow}}}


\newcommand{\Nat}{{\mathbb{N}}} 
\renewcommand{\emptyset}{\varnothing}
\renewcommand{\epsilon}{\varepsilon} 
\renewcommand{\setminus}{\smallsetminus} 

\newcommand{\overto}[1]{\xrightarrow{\!\!#1\!\!}}
\newcommand{\step}[1]{\overto{#1}} 

\newcommand{\egdef}{\stackrel{\mbox{\begin{scriptsize}def\end{scriptsize}}}{=}}
\newcommand{\equivdef}{\stackrel{\mbox{\begin{scriptsize}def\end{scriptsize}}}{\Leftrightarrow}}

\newcommand{\inang}[1]{\langle #1 \rangle}

\newcommand{\size}[1]{{\mathopen{\mid}#1\mathclose{\mid}}}

\newcommand{\subword}{\sqsubseteq}
\newcommand{\supword}{\sqsupseteq}

\renewcommand{\subsetneq}{\varsubsetneq}

\theoremstyle{definition}\newtheorem{claim}[thm]{Claim}

\begin{document}



\title[On Reachability for Unidirectional Channel Systems]{
On Reachability for Unidirectional Channel Systems Extended with Regular Tests\rsuper*}


\author[P.~Jan\v{c}ar et al.]{Petr Jan\v{c}ar\rsuper a}
\address{{\lsuper a}FEI, Techn. Univ. Ostrava}
\email{Petr.Jancar@vsb.cz}
\thanks{{\lsuper a}Supported by the project GA\v{C}R:P202/11/0340.}

\author[P.~Karandikar]{Prateek Karandikar\rsuper b}
\address{{\lsuper b}Chennai Mathematical Institute and LSV, ENS Cachan}
\email{prateek@cmi.ac.in}
\thanks{{\lsuper b}Partially funded  by Tata Consultancy Services.}

\author[Ph.~Schnoebelen]{Philippe Schnoebelen\rsuper c}
\address{{\lsuper c}LSV, ENS Cachan, CNRS}
\email{phs@lsv.ens-cachan.fr}
\thanks{{\lsuper c}Supported by Grant ANR-11-BS02-001.}

\subjclass{Theory of Computation/Models of Computation/Distributed computing models; Theory of Computation/Logic/Verification by model checking}
\titlecomment{{\lsuper*}A preliminary version of this article appeared in the
  proceedings of the 7th IFIP International Conference on Theoretical
  Computer Science (IFIP-TCS 2012)~\cite{JKS-tcs2012}.}

\begin{abstract}
``Unidirectional channel systems'' (Chambart \& Schnoebelen, CONCUR 2008)
  are finite-state systems where one-way communication from a Sender to a Receiver goes
  via one reliable and one unreliable unbounded fifo channel. While
  reachability is decidable for these systems, equipping them
  with the possibility of testing regular properties on the
  contents of channels makes it undecidable. Decidability is
  preserved when only emptiness and nonemptiness tests are considered: the
  proof relies on an elaborate reduction to
  a generalized version of Post's Embedding Problem.
\end{abstract}



\maketitle


\section{Introduction}
\label{sec-intro}

\emph{Channel systems}
are a family of computational models where concurrent agents
communicate via (usually unbounded) fifo communication
channels~\cite{brand83}. They are sometimes called \emph{queue
  automata} when there is only one finite-state agent using the
channels as fifo memory buffers. These models are well-suited to the
formal specification and algorithmic analysis of communication
protocols and concurrent
programs~\cite{boigelot99b,bouajjani99b,muscholl2010}.

A particularly interesting class of channel systems are the
\emph{lossy channel systems}, ``LCSes'' for short, popularized by
Abdulla, Bouajjani, Jonsson, Finkel, \textit{et
  al.}~\cite{cece95,abdulla96b,abdulla-forward-lcs}. Lossy channels
are unreliable and can lose messages nondeterministically and without
any notification. This weaker model is easier to analyse: safety,
inevitability and several more properties are decidable for
LCSes~\cite{cece95,abdulla96b,ABRS-icomp,BBS-tocl} while they are
undecidable when channels are reliable.

Let us stress that LCSes also are an important and fundamental
computation model \textit{per se}. During the last decade, they have
been used as an automaton model to prove the decidability (or the
hardness) of problems on Timed Automata, Metric Temporal Logic, modal
logics,
etc.~\cite{abdulla-icalp05,ouaknine2006,kurucz06,konev06,lasota2008,BMOSW-fac2012,lazic2013,barcelo2013}.
They also are a very natural low-level computational model that
captures some important complexity classes in the ordinal-recursive
hierarchy~\cite{CS-lics08,SS-icalp11,KS-fossacs2013,SS-concur13,schmitz2013}.

\medskip

\noindent
\emph{Unidirectional channel systems,} ``UCSes'' for short, are channel systems
where a Sender process communicates to a Receiver process via
\emph{one reliable} and \emph{one lossy} channel, see
Fig.~\ref{fig-example-ucst}. They were introduced by Chambart and Schnoebelen who
identified them as a minimal setting to which one can reduce reachability problems
for more complex combinations of lossy and reliable channels~\cite{CS-concur08}.
\begin{figure}[htbp]
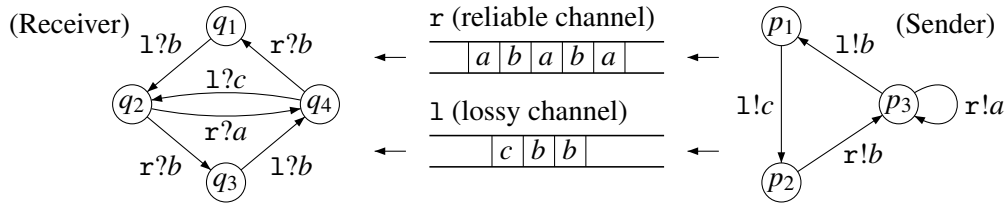

\centering
{\setlength{\unitlength}{1.04mm}
\begin{gpicture}(124,24)(-12,-2)

\gasset{ELside=r,ilength=4,flength=4,Nw=5,Nh=5,Nmr=999,loopdiam=5}

\node(q1)(14,20){$q_1$}
\node(q2)(2,10){$q_2$}
\node(q3)(14,0){$q_3$}
\node(q4)(26,10){$q_4$}
\node[Nframe=n](name)(-6,20){(Receiver)}
 \drawedge(q1,q2){$\ttl?b$}
 \drawedge(q2,q3){$\ttr?b$}
 \drawedge(q3,q4){$\ttl?b$}
 \drawedge(q4,q1){$\ttr?b$}
{\gasset{curvedepth=-1.5,ELside=r,ELdist=0.6}
 \drawedge(q2,q4){$\ttr?a$}
 \drawedge(q4,q2){$\ttl?c$}
}

\node(p1)(85,20){$p_1$}
\node(p2)(85,0){$p_2$}
\node(p3)(100,10){$p_3$}
\node[Nframe=n](name)(106,20){(Sender)}
 \drawedge(p1,p2){$\ttl!c$}
 \drawedge[ELdist=0.2,ELpos=60](p2,p3){$\ttr!b$}
 \drawedge(p3,p1){$\ttl!b$}
 \drawloop[ELside=l,loopangle=0](p3){$\ttr!a$}

{\gasset{AHnb=0,Nframe=n}

 {\gasset{linewidth=0.2}
 \drawline(40,2)(70,2)
 \drawline(40,6)(70,6)
 \drawline(40,14)(70,14)
 \drawline(40,18)(70,18)}

\put(40,20){$\ttr\text{ (reliable channel)}$}
\put(40,8){$\ttl\text{ (lossy channel)}$}

 \drawline(45,14)(45,18) \put(46,15){$a$}
 \drawline(49,14)(49,18) \put(50,15){$b$}
 \drawline(53,14)(53,18) \put(54,15){$a$}
 \drawline(57,14)(57,18) \put(58,15){$b$}
 \drawline(61,14)(61,18) \put(62,15){$a$}
 \drawline(65,14)(65,18)

 \drawline(48,2)(48,6) \put(49,3){$c$}
 \drawline(52,2)(52,6) \put(53,3){$b$}
 \drawline(56,2)(56,6) \put(57,3){$b$}
 \drawline(60,2)(60,6)

 \drawline[AHnb=1](37,16)(33,16)
 \drawline[AHnb=1](37,4)(33,4)
 \drawline[AHnb=1](77,16)(73,16)
 \drawline[AHnb=1](77,4)(73,4)
}

\end{gpicture}}
\caption{UCS = buffered one-way communication via one reliable and one lossy channels}
\label{fig-example-ucst}
\end{figure}

%
UCSes are limited to one-way communication: there are no channels
going from Receiver to Sender. One-way communication appears, e.g., in
half-duplex protocols~\cite{ibarra2003} or in the acyclic networks of~\cite{latorre2008,atig2008}.

The reachability problem for UCSes is quite challenging: it was proved
decidable by reformulating it more abstractly as the
\emph{(Regular) Post Embedding Problem} ($\PEP$),  which is easier to
analyze~\cite{CS-fsttcs07,CS-omegapep,CS-icalp2010}. We want to stress
that, while $\PEP$ is a natural variant of Post's Correspondence Problem,
it was first identified through questions on UCSes. Recently,
$\PEP$ has proved useful in other areas:
graph logics for databases~\cite{barcelo2013} and fast-growing complexity~\cite{KS-fossacs2013}.

\subsubsection*{Testing channel contents}
In basic channel systems, the agents are not allowed to inspect the contents of the
channels. However, it is sometimes useful to enrich the basic setup
with tests. For example, a multiplexer process will check each of its
input channels in turn and will rely on emptiness and/or non-emptiness
tests to ensure that this round robin policy does not block when one
input channel is empty~\cite{rosier86}. In other settings, channel
systems with insertion errors becomes more expressive when emptiness
tests are allowed~\cite{BMOSW-fac2012}.

In this article we consider such emptiness and non-emptiness tests, as
well as more general tests given by arbitrary regular predicates on
channel contents. A simple example is given below in
Fig.~\ref{fig-reduc1} (see page~\pageref{fig-reduc1}) where some of
Sender's actions depend on the parity of the number of messages
currently in $\ttr$. When verifying plain UCSes, one can reorder steps
and assume a two-phase behaviour where all Sender
steps occur before all Receiver steps. When one has tests,
one can no longer assume this.

\subsubsection*{Our contribution}
We extend UCSes with the possibility of testing channel contents with
regular predicates (Section~\ref{sec-ucsdef}). This makes reachability
undecidable even with restricted sets of simple tests
(Section~\ref{sec-undec}). Our main result (Theorem~\ref{thm-main}) is
that reachability is decidable for UCSes extended with emptiness and
non-emptiness tests. The proof goes through a series of reductions,
some of them nontrivial, that leave us with UCSes extended by only
emptiness tests on a single side of a single channel, called
``$Z_1^\ttl$ tests'' (sections~\ref{sec-reduc}
and~\ref{sec-ucsz-to-zl}). This minimal extension is then reduced
(Section~\ref{sec-ucszl}) to $\PEPpcod$, or ``$\PEP$ with partial
codirectness'', a nontrivial extension of $\PEP$ that was recently
proved decidable~\cite{KS-msttocs}. This last reduction extends the
reduction from UCS to $\PEP$ in~\cite{CS-omegapep}. Finally,
Section~\ref{sec-undec2} proves that emptiness and/or non-emptiness
tests strictly enrich the basic UCS model.

\subsubsection*{Related work.}
Emptiness and non-emptiness tests have been considered already
in~\cite{rosier86}, while Promela (SPIN's input language) offers head
tests (that test the first available message without consuming
it)~\cite{holzmann91}. Beyond such \emph{specific} tests, we are not
aware of results that consider models with a general notion of tests
on channel contents (except in the case of LCSes where very general
tests can be allowed without compromising the main decidability
results, see~\cite[sect.~6]{BS-fmsd2013}).

Regarding unidirectional channels, the decidability results
in~\cite{atig2008,latorre2008,heussner2012,heussner2012b,clemente2013}
apply to systems where communication between two agents is limited to
a \emph{single} one-way channel (sometimes complemented with a finite
shared memory, real-time clock, integer-valued counter, or local
pushdown stack). Finally let us mention the recent work by Clemente \textit{et al}.\
where fifo and ``bag'' channels can be mixed: one can see bag channels as
unreliable channels where the temporal ordering of messages is not preserved~\cite{clemente2014}.



\section{Unidirectional Channel Systems}
\label{sec-ucsdef}

\subsection{Unidirectional Channel System with Tests}
A \emph{UCST} is a tuple
$S=(\Ch, \Mess, Q_1, \Delta_1, Q_2, \Delta_2)$, where $\Mess$ is the finite
alphabet of \emph{messages}, $Q_1$, $Q_2$ are the disjoint
finite sets of \emph{states} of Sender and Receiver, respectively,
and $\Delta_1$, $\Delta_2$ are the finite sets  of \emph{rules} of
Sender and Receiver, respectively.
$\Ch=\{\ttr,\ttl\}$ is a fixed set of channel names, just
\emph{channels} for short, where $\ttr$
is \emph{reliable} and $\ttl$ is \emph{lossy} (since messages
in $\ttl$ can spontaneously disappear).

A rule $\delta\in\Delta_i$ is a tuple $(q,\ttc,\alpha,q')\in Q_i\times
\Ch\times \Act\times Q_i$ where the set of actions $\Act$ contains
\emph{tests}, checking whether the contents of $\ttc\in\Ch$ belongs to
some regular language $R\in\Reg(\Mess)$, and \emph{communications}
(sending a message $a\in\Mess$ to $\ttc$ in the case of Sender's actions, reading it for
Receiver's). Allowed actions also include the \emph{empty action} (no test, no
communication) that will be treated as ``sending/reading the empty
word $\epsilon$''; formally we put $\Act\egdef \Reg(\Mess)\cup \Mess\cup\{\epsilon\}$.

We also write a rule  $(q,\ttc,\alpha,q')$ as
$q\step{\ttc,\alpha}q'$, or specifically
$q\step{\testc{R}}q'$ for a rule where the action is a
test on $\ttc$, and $q\step{\ttc!a}q'$ or $q\step{\ttc?a}q'$ when the action
is a communication by Sender or by Receiver, respectively.
We also write just $q\step{}q'$ or  $q\step{\top}q'$ when the action is
empty.

In graphical representations like Fig.~\ref{fig-example-ucst}, Sender and
Receiver are depicted as two disjoint directed graphs, where states appear
as nodes and where rules $q\step{\ttc,\alpha}q'$ appear as edges from $q$ to
$q'$ with the corresponding labellings.

\subsection{Operational Semantics}
The behaviour of a UCST is defined via an operational semantics
along
standard lines. A \emph{configuration} of $S=(\Ch, \Mess, Q_1, \Delta_1, Q_2,
\Delta_2)$ is a tuple $C\in \Conf_S\egdef Q_1\times Q_2\times
\Mess^*\times \Mess^*$. In $C=(q_1,q_2,u,v)$, $q_1$ and $q_2$ are the
current states of Sender and Receiver, respectively, while $u$ and
$v$ are the current contents of $\ttr$ and $\ttl$, respectively.

The rules in $\Delta_1\cup\Delta_2$ give rise to transitions in the expected way. We use two
notions of transitions, or ``steps'', between configurations. We start with so-called
``reliable'' steps: given two configurations $C=(q_1,q_2,u,v)$,
$C'=(q'_1,q'_2,u',v')$ and a rule $\delta=(q,\ttc,\alpha,q')$, there is a
reliable step denoted $C\step{\delta} C'$ if, and only if, the
following four conditions are satisfied:
\begin{description}
\item[states] $q=q_1$ and $q'=q'_1$ and $q_2=q'_2$ (for Sender rules), or
$q=q_2$ and $q'=q'_2$ and $q_1=q'_1$ (for Receiver rules);
\item[tests]
if $\delta$ is a test rule $q\step{\testc{R}}q'$, then $\ttc=\ttr$ and $u\in R$, or
$\ttc=\ttl$ and $v\in R$, and furthermore $u'=u$ and $v'=v$;
\item[writes]
if $\delta$ is a writing rule $q\step{\ttc!x}q'$ with
$x\in\Mess\cup\{\epsilon\}$, then $\ttc=\ttr$ and $u'=u\, x$
and $v'=v$, or $\ttc=\ttl$ and $u'=u$ and $v'=v\, x$;
\item[reads]
if $\delta$ is a reading rule $q\step{\ttc?x}q'$, then $\ttc=\ttr$ and $u=x\, u'$
and $v'=v$, or $\ttc=\ttl$ and $u'=u$ and $v=x\, v'$.
\end{description}

\noindent This reliable behaviour is completed with message losses. For
$v,v'\in\Mess^*$, we write $v'\subword_1 v$ when $v'$ is obtained by
deleting a single (occurrence of a) symbol from $v$, and we let
$\subword$ denote the reflexive-transitive closure of $\subword_1$.
Thus $v'\subword v$ when $v'$ is a scattered subword, i.e., a
subsequence, of $v$.
(E.g., $aba\subword_1 abba$ and  $aa\subword abba$.)
This is extended to configurations and we write
$C'\subword_1 C$ or $C' \subword C$ when $C'=(q_1,q_2,u,v')$ and
$C=(q_1,q_2,u,v)$ with $v'\subword_1 v$ or $v' \subword v$,
respectively.
Now, whenever $C'\subword_1 C$, the operational semantics of $S$
includes a step from $C$ to $C'$, called a \emph{message loss} step,
and denoted $C\step{\los}C'$, considering that ``$\los$'' is an extra,
implicit rule that is always allowed.

Thus a step $C\step{\delta}C'$ of $S$ is either a reliable step, when
$\delta\in\Delta_1\cup\Delta_2$, or a (single) message loss, when
$\delta=\los$.

\begin{rem}[On reliable steps]
As is usual with unreliable channel systems, the reliable semantics
plays a key role even though the object of our
study is reachability via not necessarily reliable steps. First
it is a normative yardstick from which one defines the unreliable
semantics by extension.
Then many hardness results on lossy systems are proved
via reductions where a lossy system simulates in some way the
reliable (and Turing-powerful) behaviour: proving the correctness of
such reductions requires having
the concept of reliable steps.
\qed
\end{rem}

\begin{rem}[UCSTs and well-structured systems]
\label{remark-higman-and-Conf-subword}
It is well-known that $(\Mess^*,\subword)$ is a well-quasi-order (a
wqo): any infinite sequence $v_0,v_1,v_2,\ldots$ of words over $\Mess$
contains an infinite increasing subsequence $v_{i_0}\subword
v_{i_1}\subword v_{i_2} \subword \cdots$ This classic result, called
Higman's Lemma,  plays a fundamental role in the algorithmic
verification of lossy channel systems and other well-structured
systems~\cite{cece95,finkel98b}.
Here we
note that $(\Conf,\subword)$ is \emph{not} a wqo since $C\subword D$
requires equality on channel $\ttr$, so that UCSTs are not
well-structured systems despite the presence of a lossy channel.
\qed
\end{rem}

\subsection{Reachability}

A \emph{run} from $C_0$ to $C_n$ is a sequence of chained steps
$C_0\step{\delta_1}C_1\step{\delta_2}C_2\cdots\step{\delta_n}C_n$,
abbreviated as $C_0\step{*}C_n$ (or $C_0\step{+}C_n$ when we rule
out zero-length runs).

The \emph{(Generalized) Reachability Problem}, or just ``G-G-Reach''
for short, is the question, given a UCST $S=(\Ch, \Mess, Q_1, \Delta_1, Q_2, \Delta_2)$, some states
$p_\init,p_\final\in Q_1$, $q_\init,q_\final\in Q_2$, some regular
languages $U, V, U', V' \in\Reg(\Mess)$, whether there are some
$u\in U$, $v\in V$, $u'\in U'$ and $v'\in V'$ such that $S$ has a
run $C_\init =(p_\init,q_\init,u,v) \step{*}
C_\final=(p_\final,q_\final,u',v')$.

Since $U$, $V$, $U'$, $V'$ can be taken as singleton sets, the
G-G-Reach problem is more general than asking whether $S$ has a run
$C_\init\step{*}C_\final$ for some given initial and final
configurations. We shall need the added generality in
Section~\ref{sec-ucsz-to-zl} in particular.
However, sometimes we will also need to put
restrictions on $U$, $V$, $U'$, $V'$. We use E-G-Reach to
denote the reachability problem where $U=V=\{\epsilon\}$, i.e., where
$C_\init$ has empty channels (E is for ``Empty''), while $U',V'\in\Reg(\Mess)$ are not constrained. We will also
consider the E-E-Reach restriction where $U=V=U'=V'=\{\epsilon\}$.
It is known ---see \cite[Theo~3.1]{CS-concur08}--- that E-E-Reach is
decidable for UCSes, i.e., UCSTs that do not use tests.



\section{Testing channels and the undecidability of reachability}
\label{sec-undec}

Despite their similarities, UCSes and LCSes (lossy channel systems)
behave differently. The algorithms deciding reachability for LCSes can
easily accommodate regular (or even more expressive)
tests~\cite[Sect.~6]{BS-fmsd2013}. By contrast, UCSes become
Turing-powerful when equipped with regular tests. The main result of
this section is the undecidability of reachability  for
UCSTs.
To state the respective theorem in a stronger version, we first
introduce a notation for restricting the (regular) tests.

\subsection{Restricted sets of tests}

When
$\calT\subseteq\Reg(\Mess)$, we write UCST[$\calT$] to denote the class of
UCSTs where only tests, i.e. languages,
belonging to $\calT$ are allowed. Thus UCSTs and
UCSes coincide with UCST[$\Reg(\Mess)$] and UCST[$\emptyset$],
respectively.
We single out some simple tests
(i.e., languages) defined via regular expressions:
\begin{xalignat*}{5}
	\Even&\egdef(\Mess.\Mess)^*, &
	\Odd&\egdef\Mess.\Even, &
	Z&\egdef\epsilon, &
N&\egdef\Mess^+, &
H_a&\egdef a.\Mess^*.
\end{xalignat*}
Thus $\calP=\{\Even,\Odd\}$ is the set of \emph{parity} tests,
$Z$ is the \emph{emptiness} (or ``zero'') test, $N$ is the
\emph{non-emptiness} test and $\calH=\{H_a\mid a\in\Mess\}$
is the set of \emph{head} tests (that allows
checking what is the first message in a channel \emph{without consuming it}).
Note that the non-emptiness test
can be simulated with head tests.

Before proving (in later sections) the decidability of G-G-Reach for
UCST[$\{Z,N\}$], we start by showing that E-E-Reach is undecidable for
both UCST[$\calP$] and UCST[$\calH$]:
this demonstrates that we get undecidability not only with simple
``global'' tests (parity tests) whose
outcome depends on the entire contents of a
channel, but also with simple ``local'' tests (head tests).

In fact, we even show the
stronger statement that E-E-Reach is undecidable for UCST[$\calP^\ttr_1$] and
UCST[$\calH_1^\ttr$], where the use of subscripts and/or superscripts
means that we consider restricted systems where only Sender (for
subscript $1$, only Receiver for subscript $2$) may use the tests, and
that the tests may only apply on channel $\ttr$ or $\ttl$ (depending on the
superscript). E.g., in UCST[$\calP^\ttr_1$] the only allowed tests are
parity tests performed by Sender on channel $\ttr$.

\begin{thm}
\label{thm-UCST-P1r-undec}
Reachability (E-E-Reach) is undecidable for both UCST[$\calP_1^\ttr$]
and UCST[$\calH_1^\ttr$].
\end{thm}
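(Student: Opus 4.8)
The plan is to reduce from an undecidable problem — I would use the halting problem for deterministic Turing machines (equivalently, two-counter Minsky machines) — by having Sender and Receiver jointly and faithfully simulate the machine. The unbounded reliable storage comes from the channel $\ttr$, and the decisive ingredient is that the restricted tests on $\ttr$ performed by Sender provide a thin back-channel: Sender learns something about how far Receiver has consumed $\ttr$. This is exactly what destroys the two-phase (all-Sender-then-all-Receiver) normal form available for test-free UCSes, and hence what yields Turing power. Note that only the \emph{tests} are restricted to $\ttr$ and to Sender; \emph{communications} are unrestricted, so I am free to let Sender also write control messages on the lossy channel $\ttl$ and let Receiver read them, relying on the standard fact that losses on $\ttl$ can only block progress and never create a spurious accepting run. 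For E-E-Reach the target configuration has empty channels, so the target is reachable iff everything written to $\ttr$ has been drained, i.e.\ iff the simulation ran to completion.

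For the head-test class UCST[$\calH_1^\ttr$] I would make the back-channel reveal actual content. Sender inspects the symbols it has placed on $\ttr$ \emph{non-destructively}, via head tests, while Receiver drains them destructively from the front, paced one symbol at a time by control tokens Sender sends on $\ttl$. This realizes the classical ``rotating queue'' simulation of a Turing tape, with the read end and write end split across the two agents: the front symbol is inspected by Sender, rewritten — suitably modified near the simulated head — to the back of $\ttr$, then consumed by Receiver. Because $\ttr$ is reliable and Sender's move is \emph{forced} by the symbol it reads, every run reaching the target encodes a faithful halting computation, and conversely a halting computation induces such a run.

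For the parity class UCST[$\calP_1^\ttr$] the back-channel is far thinner: a parity test reveals only $|\ttr| \bmod 2$, i.e.\ the number of Receiver reads modulo two, and in particular Sender cannot see the symbols on $\ttr$. The plan is to turn this one bit into a strict \emph{rendezvous}: by gating every Receiver read with a control token on $\ttl$, exactly one read occurs per step, so a single parity toggle unambiguously signals ``Receiver consumed one symbol'', keeping Sender in exact lock-step with Receiver. With lock-step in place, each simulated symbol is transmitted as a short gated block whose length parity, read off through these one-bit rendezvous, reconstructs the symbol, reducing the parity case to the head-test-like faithful simulation above. I expect the \emph{main obstacle} to be precisely the correctness of this synchronization — the forward ($\Rightarrow$) direction — namely showing that no adversarial interleaving of the two agents and no pattern of losses on $\ttl$ can cheat the one-bit-per-step protocol into accepting when the machine does not halt. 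Making the gating and the parity bookkeeping force a \emph{unique} faithful interleaving is the delicate heart of the argument; the backward direction (a halting computation yields an accepting run) is routine.
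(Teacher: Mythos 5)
Your high-level architecture is the paper's: simulate a Turing-powerful queue automaton with $\ttr$ as the queue, Receiver acting as a proxy that performs the destructive reads on Sender's behalf, and the restricted tests on $\ttr$ serving as the back-channel by which Sender tracks Receiver's progress. However, both of your concrete protocols omit the one ingredient that makes the forward direction go through, and in each case it is exactly the ingredient the paper supplies. For UCST[$\calH_1^\ttr$], your pacing mechanism is ``control tokens Sender sends on $\ttl$'', but such tokens flow only from Sender to Receiver, and Sender cannot test $\ttl$ in this fragment, so they give Sender no information about whether Receiver has actually consumed the head of $\ttr$. After Sender head-tests, sees $a$, writes the rotated symbol to the back, and emits a token, its next head test still sees $a$ if Receiver has not yet acted --- and if the next genuine symbol is also $a$, Sender cannot distinguish the two situations; it may then process the same occurrence twice, duplicate a symbol on the reliable channel, and corrupt the simulation in a way that can reach the target configuration spuriously. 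The paper closes this hole by having Sender write to $\ttr$ in two strictly alternating ``colours'', so that consecutive symbols always differ and a head test reliably detects that consumption has occurred.

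For UCST[$\calP_1^\ttr$], your lock-step-via-parity-toggle mechanism is essentially the paper's protocol, but the step where Sender ``reconstructs the symbol'' from the length parity of a gated block --- the step you yourself identify as the delicate heart --- is not worked out and is doubtful as stated: the only events visible through parity tests are reads of $\ttr$, so encoding the head symbol into a number of reads would force Receiver to destroy queue content in a content-dependent way, and Sender has no means of knowing where such a block ends. The paper sidesteps this entirely: Sender never learns the head symbol; it nondeterministically \emph{guesses} it, sends the guessed symbol on $\ttl$ as the request, and Receiver attempts to read exactly that symbol from $\ttr$, so that a wrong guess (or a loss on $\ttl$) merely deadlocks the run instead of corrupting it --- which is all that correctness of an E-E-Reach reduction requires. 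Without either this guess-and-deadlock device or a fully specified encoding, your parity-case reduction is not established.
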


\noindent We now proceed to prove Theorem~\ref{thm-UCST-P1r-undec}
by simulating queue automata with UCSTs.

\subsection{Simulating queue automata}
\label{ssec-simulating-queue}
Like queue automata,
UCSes  have a reliable channel but, unlike them,  Sender (or Receiver)
cannot both read \emph{and} write from/to it. If Sender could somehow
read from the head of $\ttr$, it would be as
powerful as a queue automaton, i.e., Turing-powerful. Now we show that
parity tests used by
Sender on $\ttr$ allow us to construct
a simple protocol making Receiver act as a proxy for Sender and implement
read actions on its behalf.
See Fig.~\ref{fig-reduc1} for an illustrating example of how Sender
simulates a rule $p_1\step{\ttr?a}p_2$.
\begin{figure}[htbp]
\centering
{\setlength{\unitlength}{1.04mm}
\begin{gpicture}(113,26)(4,-3)

\gasset{ilength=4,flength=4,Nw=5,Nh=5,Nmr=999,loopdiam=5}

\node[Nw=9](q1)(17,13){$q_\prx$}
\node[Nw=2,Nh=2](q2)(5,20){}
\node[Nw=2,Nh=2](q3)(17,-2){}
\node[Nw=2,Nh=2](q4)(29,20){}
{\gasset{curvedepth=-1,ELside=r}
 \drawedge[ELpos=63,ELdist=0.3](q1,q2){$\ttl?a$}
 \drawedge[ELpos=35,ELdist=0.2](q2,q1){$\ttr?a$}
 \drawedge[ELpos=55,ELdist=0.5](q1,q3){$\ttl?c$}
 \drawedge[ELpos=45,ELdist=0.5](q3,q1){$\ttr?c$}
 \drawedge[ELpos=65,ELdist=0.2](q1,q4){$\ttl?b$}
 \drawedge[ELpos=35,ELdist=0.2](q4,q1){$\ttr?b$}
}

\node(p1)(102,19){$p_1$}
\node(p2)(102,1){$p_2$}
\node[Nw=2,Nh=2](pp0)(117,19){}
\node[Nw=2,Nh=2](pp1)(117,1){}
\node[Nw=2,Nh=2](pp2)(86,19){}
\node[Nw=2,Nh=2](pp3)(86,1){}
\drawedge[ELside=l,ELpos=60](p1,pp0){$\testr{\Odd}$}
\drawedge[ELside=l](pp0,pp1){$\ttl!a$}
\drawedge[ELside=l,ELpos=40](pp1,p2){$\testr{\Even}$}
\drawedge[ELside=r,ELpos=56](p1,pp2){$\testr{\Even}$}
\drawedge[ELside=r](pp2,pp3){$\ttl!a$}
\drawedge[ELside=r,ELpos=43](pp3,p2){$\testr{\Odd}$}

{\gasset{AHnb=0,Nframe=n}

 {\gasset{linewidth=0.2}
 \drawline(42,2)(72,2)
 \drawline(42,6)(72,6)
 \drawline(42,14)(72,14)
 \drawline(42,18)(72,18)}

\put(42,20){\makebox(0,0)[l]{channel $\ttr$ (reliable)}}
\put(42,8){\makebox(0,0)[l]{channel $\ttl$ (lossy)}}

 \drawline(47,14)(47,18) \put(48,15){$a$}
 \drawline(51,14)(51,18) \put(52,15){$b$}
 \drawline(55,14)(55,18) \put(56,15){$c$}
 \drawline(59,14)(59,18) \put(60,15){$a$}
 \drawline(63,14)(63,18) \put(64,15){$c$}
 \drawline(67,14)(67,18)

 \drawline(53,2)(53,6) \put(54,3){$a$}
 \drawline(57,2)(57,6)

 \drawline[AHnb=1](39,16)(35,16)
 \drawline[AHnb=1](39,4)(35,4)
 \drawline[AHnb=1](79,16)(75,16)
 \drawline[AHnb=1](79,4)(75,4)
}

\end{gpicture}}
\caption{Sender simulates ``$p_1\step{\ttr?a}p_2$'' with parity tests and proxy Receiver}
\label{fig-reduc1}
\end{figure}

\noindent Described informally, the protocol is the following:
\begin{enumerate}
\item Channel $\ttl$ is initially empty.
\item
In order to ``read'' from $\ttr$,  Sender checks and records whether the
length of the current contents of $\ttr$ is odd or even, using a parity test on
$\ttr$.
\item It then writes on $\ttl$ the message that it wants to read  ($a$
	in the example).
\item
During this time
Receiver waits in its initial $q_\prx$ state and tries to read from
$\ttl$. When it  reads a message $a$  from $\ttl$, it understands it as a request telling
it to read $a$ from $\ttr$ on behalf of Sender. Once it has performed
this read on $\ttr$ (when $a$ really was there),
it returns to $q_\prx$ and waits for the next
instruction.
\item Meanwhile, Sender checks that (equivalently, waits until) the parity of the contents of $\ttr$
  has changed, and on detecting this change, concludes that the read was
  successful.
\item
Channel $\ttl$ is now empty and the simulation of a read by Sender is concluded.
\end{enumerate}
If no messages are lost on $\ttl$, the  protocol allows  Sender to
read on $\ttr$; if a message is lost on $\ttl$, the protocol deadlocks.
Also, Sender deadlocks if it attempts to read a message that is
not at the head of $\ttr$, in particular when $\ttr$ is empty;
i.e., Sender has to guess correctly.

Our simulation of a queue
automaton thus introduces many possible deadlocks,
but it still suffices for proving
undecidability of reachability, namely of E-E-Reach for
UCST[$\calP^\ttr_1$].

To prove undecidability for UCST[$\calH_1^\ttr$] we just modify the
previous protocol. We use two copies of the message alphabet, e.g.,
using two ``colours''. When writing on $\ttr$, Sender strictly
alternates between the two colours. If now Sender wants to read a
given letter, say $a$, from $\ttr$, it checks that an $a$ (of the
right colour) is present at the head of $\ttr$ by using $\calH_1^\ttr$
tests. It then asks Receiver to read $a$ by sending a message via
$\ttl$. Since colours alternate in $\ttr$, Sender can check (i.e.,
wait until), again using head tests, that the reading of $a$ occurred.



\section{Main theorem and a roadmap for its proof}
\label{sec-roadmap}

We will omit set-brackets in the expressions like
UCST[$\{Z,N\}$], UCST[$\{Z_1,N_1\}$],
UCST[$\{Z^\ttl_1\}$]; we thus write
UCST[$Z,N$],  UCST[$Z_1,N_1$],
UCST[$Z^\ttl_1$], etc.
We now state our main theorem:

\begin{thm}
\label{thm-main}
Reachability (G-G-Reach) is decidable for UCST[$Z,N$].
\end{thm}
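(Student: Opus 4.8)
The plan is to prove decidability by a chain of reductions that progressively simplify the available tests, ending at the \emph{Regular Post Embedding Problem with partial codirectness}, $\PEPpcod$, which is decidable by~\cite{KS-msttocs}. In the first phase I would normalise the tests down to a single kind, namely emptiness tests performed by Sender on the lossy channel $\ttl$ (the ``$Z_1^\ttl$ tests''); in the second phase I would reduce G-G-Reach for UCST[$Z_1^\ttl$] to $\PEPpcod$ by extending the UCS-to-$\PEP$ encoding of~\cite{CS-omegapep}. Since G-G-Reach allows regular constraints on the initial and final channel contents, these constraints will simply be folded into the regular language of the resulting $\PEP$ instance.

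For the normalisation I would proceed in two stages, matching Sections~\ref{sec-reduc} and~\ref{sec-ucsz-to-zl}. In Section~\ref{sec-reduc} I would first eliminate the non-emptiness tests $N$ in favour of emptiness tests $Z$, and then relocate every remaining test to a single tester (Sender), using the reliable channel as a synchronisation device, so as to reach UCST[$Z_1$]. In Section~\ref{sec-ucsz-to-zl} I would restrict the surviving emptiness tests to the lossy channel, reducing UCST[$Z_1$] to UCST[$Z_1^\ttl$]: an emptiness test on the reliable channel $\ttr$ is a synchronisation condition between the progress of Sender and Receiver, and can be simulated by a protocol that reflects it as an emptiness test on $\ttl$; this is the step that uses the added generality of G-G-Reach. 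Each sub-reduction must be shown to preserve reachability in both directions, and the recurring difficulty is that losses on $\ttl$ interact with test outcomes, so every gadget must be engineered so that a lost message can only ever produce a (harmless) deadlock, never a spurious accepting run.

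For the final reduction (Section~\ref{sec-ucszl}) I would encode a run of a UCST[$Z_1^\ttl$] as a word $w$ in a regular language $R$, indexed by the symbols exchanged on the reliable channel $\ttr$. Because $\ttr$ is FIFO and lossless, Receiver reads exactly what Sender wrote, in order, so each letter of $w$ can record one $\ttr$-symbol together with the surrounding $\ttl$-writes of Sender and $\ttl$-reads of Receiver; $R$ enforces that $w$ spells out legal runs of both automata, that the matched $\ttr$-symbols agree, and that the endpoint constraints hold. Two morphisms $v$ (Sender's $\ttl$-writes) and $u$ (Receiver's $\ttl$-reads) then capture the lossy channel, where Receiver can only read a subword of what was written: this is exactly the embedding $u(w)\subword v(w)$ of $\PEP$. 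The new ingredient is the $Z_1^\ttl$ test. Requiring $\ttl$ to be empty at a test point forces every $\ttl$-message written before it to have been read or lost before it, so that from that point on Receiver's reads embed into Sender's writes alone; this is precisely the codirectness condition $u(w'')\subword v(w'')$ for the suffix $w''$ starting at the marked position. Marking exactly the test positions turns the instance into one of $\PEPpcod$.

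I expect the main obstacle to be the normalisation phase rather than the final encoding. Removing non-emptiness tests and moving tests off the ``wrong'' side or channel while exactly preserving reachability is delicate precisely because the lossy channel makes the two directions of each reduction asymmetric, and one must rule out that newly introduced losses create runs that the original system did not have. Granting the normalisation, the correctness of the last reduction comes down to matching the operational meaning of a $Z_1^\ttl$ test with codirectness at a marked position, after which Theorem~\ref{thm-main} follows from the decidability of $\PEPpcod$~\cite{KS-msttocs}.
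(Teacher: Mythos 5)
Your overall architecture (normalise the tests down to $Z_1^\ttl$, then encode runs as a $\PEPpcod$ instance where the reliable channel is handled exactly by the regular language $R$, the lossy channel by the embedding $u(\sigma)\subword v(\sigma)$, and each $Z_1^\ttl$ test by a codirectness constraint on the corresponding suffix) matches the paper, and your account of the final encoding is essentially the paper's Section~\ref{sec-ucszl}. The genuine gap is in how you propose to get from UCST[$Z_1$] to UCST[$Z_1^\ttl$]. You describe this as a gadget: a protocol that ``reflects'' an emptiness test on $\ttr$ as an emptiness test on $\ttl$. But there is no channel from Receiver to Sender, so the only way Sender can observe Receiver's progress is through emptiness tests on $\ttl$ --- and $\ttl$ is lossy, so Sender seeing $\ttl$ become empty cannot certify that Receiver actually consumed anything, let alone that Receiver has drained $\ttr$. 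No finite-state protocol of this kind is known to work, and the paper does something entirely different here: a \emph{Turing} reduction (Section~\ref{sec-ucsz-to-zl}) that splits a run at the configurations where $\ttr$ is empty, observes that this set $\Confre$ is well-quasi-ordered by $\subword$, and computes the upward-closed sets $\Pre^*(\cdot)$ by a saturation loop whose termination rests on Higman's Lemma, with each membership query answered by an oracle for G-G-Reach[$Z_1^\ttl$]. This wqo/oracle argument is the place where the added generality of G-G-Reach is genuinely needed, and it is a different kind of argument from everything else in your outline; without it (or a correct substitute) the chain of reductions does not close.

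Two smaller points. First, your normalisation order --- eliminate $N$ tests first, then relocate the remaining tests to Sender --- is the reverse of the paper's, which first moves \emph{both} $Z_2$ and $N_2$ to Sender in a single simulation (Section~\ref{ssec-elim-NZ2}) and explicitly remarks that eliminating these two kinds of Receiver tests separately would break their construction; the subsequent elimination of $N_1$ (Section~\ref{ssec-egz1n1-egz1}) also relies on Receiver having no tests and on the initial channels being empty, so the order of these steps is not interchangeable without new arguments. Second, ``folding the regular initial and final constraints into $R$'' is not as innocuous as you suggest: the initial content of $\ttl$ is not written by any rule in $\sigma$, so it does not appear in $v(\sigma)$, and initial/final contents interact with the tests; the paper instead reduces G-G-Reach to E-E-Reach by explicit Sender-generation and Receiver-cleaning phases (Sections~\ref{ssec-ggz1n1-egz1n1} and~\ref{ssec-egz1-eez1}) whose correctness depends delicately on commutation arguments and on the absence of certain tests, and it notes that the naive version of this step is wrong.
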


Hence adding emptiness and nonemptiness tests to UCSes does not compromise
the decidability of reachability (unlike what happens with  parity or head
tests).

Our proof of Theorem~\ref{thm-main} is quite long, being composed of
several consecutive reductions, some of which are nontrivial.
A scheme of the proof is depicted in
Fig.~\ref{fig-roadmap}, and we  give a brief outline in the rest of
this section.

We first recall that the reachability problem for UCSes (i.e., for
UCST[$\emptyset$]) was shown decidable via a reduction to $\PEP$ (Post's
Embedding Problem) in~\cite{CS-omegapep}.
Relying on this earlier result (by reducing UCST[$Z,N$] to
UCST[$\emptyset$]) or extending its proof (by reducing UCST[$Z,N$] to
$\PEP$ directly) does not seem at all trivial. At some point $\PEPpcod$,
a non-trivial generalization of the basic $\PEP$ problem,
was introduced as a certain intermediate step
and shown decidable
in~\cite{KS-msttocs}.

Once it is known that $\PEPpcod$ is decidable, our proof for
Theorem~\ref{thm-main} is composed of two main parts:
\begin{enumerate}
\item
One part, given in Section~\ref{sec-ucszl}, is a reduction
of E-E-Reach for UCST[$Z_1^\ttl$] to $\PEPpcod$.
It is relatively compact, since we have found  a suitable intermediate
notion between runs of UCST[$Z_1^\ttl$] and solutions
of $\PEPpcod$.

\begin{figure}[htbp]
\begin{center}

\begin{tikzpicture}[auto, {-stealth'[scale=5]}, node distance=3em,thick]
\node (ggzn) {G-G-Reach[$Z$, $N$]};
\node[below=of ggzn] (ggz1n1) {G-G-Reach[$Z_1$, $N_1$]};
\node[below=of ggz1n1] (egz1n1) {E-G-Reach[$Z_1$, $N_1$]};
\node[below=of egz1n1] (egz1) {E-G-Reach[$Z_1$]};
\node[below=of egz1] (eez1) {E-E-Reach[$Z_1$]};
\node[right=11em of eez1] (ggz1l) {G-G-Reach[$Z_1^\ttl$]};
\node[above=4.5em of ggz1l] (eez1l) {E-E-Reach[$Z_1^\ttl$]};
\node[above=4.5em of eez1l] (peppcod) {$\PEPpcod$};

\draw (ggzn) to node{Sec.~\ref{ssec-elim-NZ2}} (ggz1n1);
\draw (ggz1n1) to node {Sec.~\ref{ssec-ggz1n1-egz1n1}} (egz1n1);
\draw (egz1n1) to node {Sec.~\ref{ssec-egz1n1-egz1}} (egz1);
\draw (egz1) to node {Sec.~\ref{ssec-egz1-eez1}} (eez1);
\draw (eez1) to node {Sec.~\ref{sec-ucsz-to-zl}} node[swap]{{\footnotesize Turing reduction}} (ggz1l);
\draw (ggz1l) to node (reuse) {reuse} (eez1l);
\draw (eez1l) to node {Sec.~\ref{sec-ucszl}} (peppcod);

\node [fit=(ggz1n1) (egz1n1) (egz1) (eez1), dotted, draw, rounded corners=1em, inner sep=0.7em] (forreuse) {};
\draw [dotted,bend right=15] (forreuse) to (reuse) {};
\end{tikzpicture}

\end{center}
\caption{Roadmap of the reductions from G-G-Reach[$Z$, $N$] to $\PEPpcod$}
\label{fig-roadmap}
\label{fig-ucst-roadmap}
\end{figure}

%
\item
The other part of the proof, given in sections~\ref{sec-reduc}
and~\ref{sec-ucsz-to-zl}, reduces G-G-Reach for UCST[$Z,N$] to
E-E-Reach for UCST[$Z_1^\ttl$]. It has turned out
necessary to decompose this reduction in a series of smaller steps (as
depicted in Fig.~\ref{fig-roadmap}) where features such as certain
kinds of tests, or general initial and final conditions, are
eliminated step by step. The particular way in which these features
are eliminated is important. For example, we eliminate $Z_2$ and $N_2$
tests by one simulation reducing G-G-Reach[$Z$, $N$] to
G-G-Reach[$Z_1$, $N_1$] (Sec.~\ref{ssec-elim-NZ2}); the simulation
would not work if we wanted to eliminate $Z_2$ and $N_2$ separately,
one after the other.
\end{enumerate}

\noindent One of the crucial steps in our series is the reduction from
E-E-Reach[$Z_1$] to G-G-Reach[$Z_1^\ttl$]. This is a Turing reduction,
while we otherwise use many-one reductions. Even though we
start with a problem instance where the initial and final
configurations have empty channel contents, we need oracle calls to a
problem where the initial and final conditions are more general. This
alone naturally leads to considering the G-G-Reach instances.

We note that, when UCSes are equipped with tests, reducing from
G-G-Reach to E-E-Reach is a problem in itself, for which the simple
``solution'' that we sketched in our earlier extended
abstract~\cite{JKS-tcs2012} does not work.

It seems also worth noting that all reductions in
Section~\ref{sec-reduc} treat the two channels in the same way; no
special arrangements are needed to handle the lossiness of $\ttl$. The
proofs of correctness, of course, do need to take the lossiness into
account.



\section{Reducing G-G-Reach for UCST[${Z,N}$] to E-E-Reach for UCST[${Z_1}$]}
\label{sec-reduc}

This section describes four simulations that, put together,
entail Point $1$ in Theorem~\ref{thm-ZN-reduces-to-Z1} below.
Moreover, the last three simulations also yield
Point $2$.
We note that
the simulations are tailored to the reachability problem: they may not
preserve other behavioural aspects like, e.g., termination or
deadlock-freedom.

\begin{thm}
\label{thm-ZN-reduces-to-Z1}
\hfill\\
($1$) G-G-Reach[$Z,N$] many-one reduces to E-E-Reach[$Z_1$].
\\
($2$)
G-G-Reach[$Z_1^\ttl$] many-one reduces to E-E-Reach[$Z_1^\ttl$].
\end{thm}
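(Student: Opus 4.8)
The plan is to obtain Part~(1) by composing four many-one reductions, one per feature to be removed, matching the top four arrows of Fig.~\ref{fig-roadmap}: (i) remove Receiver's tests, reducing G-G-Reach[$Z,N$] to G-G-Reach[$Z_1,N_1$]; (ii) make the initial channels empty, reducing to E-G-Reach[$Z_1,N_1$]; (iii) remove the non-emptiness tests, reducing to E-G-Reach[$Z_1$]; and (iv) make the final channels empty, reducing to E-E-Reach[$Z_1$]. Since many-one reductions compose, chaining the four gives Part~(1). Each step is a simulation: from an instance $S$ of the larger class I build an instance $S'$ of the smaller class (plus new initial/final data) and show that $S$ has an accepting run iff $S'$ does. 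Throughout I will freely add fresh separator symbols to $\Mess$ and embed finite automata for the regular constraints $U,V,U',V'$ into the control states.

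I would first dispatch the two directed, test-free simulations (ii) and (iv), since they also carry Part~(2). For (ii) I prepend a preamble to Sender: before entering its original initial state, Sender guesses some $u\in U$ and $v\in V$ (running DFAs for $U$ and $V$ in its state) and writes them onto $\ttr$ and $\ttl$. Starting from genuinely empty channels then reproduces every initial configuration $(p_\init,q_\init,u,v)$; the lossiness of $\ttl$ is harmless, since the original run may itself lose a prefix of $v$ at once, and as the preamble only writes, Receiver's FIFO view of each channel is unchanged. For (iv) I symmetrically append a postamble: upon reaching $p_\final$, Sender writes a fresh end-marker onto $\ttr$ and halts, while Receiver, on reaching $q_\final$, drains $\ttr$ while running a DFA for $U'$ until it consumes the marker --- which certifies that $\ttr$ is now empty with no test at all --- then reads a subword of $\ttl$ against a DFA for $V'$ and loses the remainder. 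Correctness again rides on the built-in lossiness of $\ttl$. The essential point is that \emph{neither construction introduces a test}, so both map UCST[$Z_1^\ttl$] instances to UCST[$Z_1^\ttl$] instances.

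The two remaining simulations are the technical core, and (i) is where I expect the real difficulty. For (iii) I would replace each Sender test $\testc{N}$ by a guess that $\ttc$ is nonempty, corroborated by a marker discipline together with a subsequent $Z_1$ test, so that only $Z_1$ tests survive; performing this only after (ii) lets the bookkeeping assume empty initial channels. The genuine obstacle is (i): eliminating Receiver's $Z_2$ and $N_2$ tests. As communication is one-way, Receiver has no channel back to Sender and so cannot announce the instant at which it wishes to test, whereas (non)emptiness is an instantaneous global property of the interleaving. My plan is to exploit the only feedback that exists --- Sender's own $Z_1/N_1$ tests reveal whether Receiver has caught up on a channel --- via a rendezvous: Sender pauses its writes, emits a delimiter, and uses a $Z_1$ test on the opposite channel to detect that Receiver has consumed that delimiter before resuming, thereby pinning a synchronized configuration whose contents Receiver may faithfully inspect. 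At such a delimiter Receiver reads \emph{both} verdicts off one configuration --- a delimiter at the head means empty, a real message ahead means nonempty --- which is exactly why $Z_2$ and $N_2$ have to be removed by a single joint construction rather than in two separate passes. The hard part will be getting this rendezvous watertight: ruling out spurious writes between Sender's observation and Receiver's inspection, and avoiding both false accepts and deadlocks caused by delimiters lost on $\ttl$.

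Finally, Part~(2) reuses the last three simulations, as indicated by the dotted box in Fig.~\ref{fig-roadmap}. A UCST[$Z_1^\ttl$] instance has no Receiver tests, so simulation~(i) is unnecessary, and no $N_1$ tests, so simulation~(iii) is the identity on it. It then remains only to note, as observed above, that the test-free simulations (ii) and (iv) send UCST[$Z_1^\ttl$] to UCST[$Z_1^\ttl$]; composing them through E-G-Reach[$Z_1^\ttl$] yields that G-G-Reach[$Z_1^\ttl$] many-one reduces to E-E-Reach[$Z_1^\ttl$], which is Part~(2).
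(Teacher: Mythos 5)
Your overall decomposition is exactly the paper's (the four arrows of Fig.~\ref{fig-roadmap}, with Part~(2) obtained by reusing the last three reductions), and your step~(ii) matches Section~\ref{ssec-ggz1n1-egz1n1}, including the correct observation that it must come after Receiver's tests are gone. However, your step~(iv) is unsound as stated. You place the end-marker \emph{after} the final contents and let Receiver, from $q_\final$, drain $\ttr$ against a DFA for $U'$ up to the marker. Since Receiver may begin draining before Sender has finished, its reads can empty $\ttr$ and thereby enable a later $\testr{Z}$ of Sender that could never fire in $S$. Concretely, take a Receiver with no read rules, $q_\init=q_\final$, and Sender $p_\init\step{\ttr!a}p_1\step{\testr{Z}}p_2\step{\ttr!a}p_\final$ with $U'=\{aa\}$: in $S$ the test never passes and $p_\final$ is unreachable, yet in your $S'$ Receiver drains the first $a$, the test fires, and the run is accepted. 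The commutation that would push the draining to the end of the run (Lemma~\ref{lem-commuting-steps}\eqref{comm-as}) fails precisely across $Z_1$ tests. The paper's fix is to have Sender write $\ttsharp$ \emph{before} the first permanent message and to forbid $Z_1$ tests on a channel once $\ttsharp$ has been written to it; this is what legitimizes the reordering, and it is exactly the point the authors flag as having gotten wrong in their own earlier abstract.

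In step~(i), your single rendezvous cannot deliver the non-emptiness verdict. Receiver has no head test: to see ``a real message ahead'' it must read that message, which consumes it and corrupts the remainder of the run; and Sender cannot detect consumption of a delimiter on one channel by a $Z_1$ test on the \emph{opposite} channel. The paper splits the two verdicts: the $Z_2$ test is indeed handled by a rendezvous (Sender does $\testc{Z}$, writes a fresh $\ttz$, and does $\testc{Z}$ again, so the $\ttz$ is read out of an otherwise empty channel), but the $N_2$ test is handled with no synchronization at all, by letting Sender pad its writes with dummy $\ttn$'s (and padding the initial contents to $\pad(U)$, $\pad(V)$) which Receiver consumes in place of testing --- this is also the real reason the two eliminations must be done jointly. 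Your step~(iii) has a similar gap: it is unclear how a subsequent $Z_1$ (emptiness) test can corroborate a guess of \emph{non}-emptiness; the paper instead keeps the last written letter in a one-letter buffer inside Sender's control state and replaces $\testc{N}$ by ``buffer nonempty''. Until these three constructions are pinned down, the argument is not complete.
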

Before proceeding with the four reductions,
we present a simple Commutation Lemma that lets
us reorder runs and assume that they follow a specific pattern.

\subsection{Commuting steps in UCST[$\bm{Z,N}$] systems}
We say that two consecutive steps
$C\step{\delta_1}C'\step{\delta_2}C''$ (of some $S$) \emph{commute} if
$C\step{\delta_2}D\step{\delta_1}C''$ for some configuration $D$ of $S$.
The next lemma
lists some conditions that are sufficient for commuting steps in an
arbitrary UCST[$Z,N$] system $S$:
\begin{lem}[Commutation]
\label{lem-commuting-steps}
Two consecutive steps
$C\step{\delta_1}C'\step{\delta_2}C''$ commute
in any of the following cases:
\begin{enumerate}
\item \label{comm-nc} No contact: $\delta_1$ is a read/write/test by
  Sender or Receiver acting on one channel $\ttc$ (or a
  message loss on $\ttc=\ttl$), while $\delta_2$ is a
  rule of the \emph{other agent} acting on the \emph{other channel}
  (or is a loss).

\item \label{comm-pl} Postponable loss: $\delta_1$ is a message loss that
  does not occur at the head of (the current content of) $\ttl$.

\item \label{comm-as} Advanceable Sender:
$\delta_1$ is a Receiver's rule or a loss,
 and $\delta_2$ is a Sender's rule but not a $Z_1$-test.

\item \label{comm-al} Advanceable loss: $\delta_2$ is a loss
  and $\delta_1$ is not an
  ``$\testl{N}$''
  test  or a
  Sender's write on $\ttl$.
\end{enumerate}
\end{lem}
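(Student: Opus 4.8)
The plan is to handle the four cases by a common recipe. Writing a configuration as the quadruple $(q_1,q_2,u,v)$ with $u$ the content of $\ttr$ and $v$ the content of $\ttl$, I note that a Sender rule only rewrites $q_1$, a Receiver rule only rewrites $q_2$, a write appends at the tail of a channel, a read removes its head, a test leaves both contents untouched, and a loss deletes one occurrence of $v$. For each case I will exhibit the intermediate configuration $D$ and check two things: that the step to be fired first is enabled at its source, and that the two steps together reproduce exactly $C''$. The second check is almost mechanical once the steps act on independent coordinates, so all the work sits in the enabledness argument and in tracking which occurrence a loss removes.

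For \emph{No contact}~(\ref{comm-nc}) the two steps rewrite disjoint state coordinates (distinct agents) and operate on disjoint channels, while a test on one channel is insensitive to the other; hence both contents and enabledness are preserved under the swap and $D$ is immediate. For \emph{Postponable loss}~(\ref{comm-pl}) the crucial observation is that deleting a non-head occurrence of $\ttl$ leaves the head of $\ttl$ in place and keeps $\ttl$ nonempty. Thus whatever $\delta_2$ is, it is already enabled at $C$: a Receiver read on $\ttl$ still sees the same head, an $\testl{N}$ test still sees a nonempty channel, and the degenerate possibility $\delta_2=\testl{Z}$ simply cannot follow a non-head loss (the surviving head would falsify it at $C'$). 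Firing $\delta_2$ first leaves the chosen non-head occurrence available, so deleting it afterwards lands in $C''$.

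The two \emph{Advanceable} cases carry the real content, and both hinge on one monotonicity principle on a single channel: a read or a loss can only shrink the content, a write can only grow it, the $N$-test is upward closed and the $Z$-test is downward closed for $\subword$. For \emph{Advanceable Sender}~(\ref{comm-as}), $\delta_2$ is a Sender write, an $N_1$-test, or the empty action, and $\delta_1$ shrinks a channel or is a loss; since Sender never reads, advancing $\delta_2$ is safe provided $\delta_2$ is already enabled at $C$. This is exactly where excluding $Z_1$-tests is needed: an $N_1$-test enabled at $C'$ was a fortiori enabled at the larger $C$, whereas a $Z_1$-test can be created by the preceding shrink and so genuinely fails to commute. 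One then checks that $\delta_1$ survives $\delta_2$: a Sender write appends at a tail and leaves every head untouched, so a Receiver read or head test is still enabled, and a loss still has its occurrence to remove. For \emph{Advanceable loss}~(\ref{comm-al}) the two exclusions are precisely the two ways this monotonicity could fail: forbidding a Sender write on $\ttl$ ensures $\ttl$ was already nonempty before $\delta_1$, so the advanced loss is enabled at $C$; forbidding an $\testl{N}$ test ensures the advanced loss does not invalidate the one test that shrinking $\ttl$ could break (while $\testl{Z}$ cannot precede a loss in a legal run, and tests on $\ttr$ ignore a loss on $\ttl$). When $\delta_1$ reads $\ttl$, the occurrence lost originally lies strictly behind the read head, so firing the loss first removes the corresponding occurrence and leaves the head for the read.

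I expect the main obstacle to be the enabledness bookkeeping for tests rather than the content equalities. The sharpest point is the same-channel interaction in \emph{Advanceable Sender}: pushing a Sender write past a Receiver emptiness test on the \emph{same} channel pits growth against a downward-closed test, so I would check carefully whether such a pair is always subsumed by the no-contact case (distinct channels) or otherwise excluded, and flag it as the step most likely to require an explicit side remark. That the symmetric danger in \emph{Advanceable loss} is already defused by the two stated exclusions is a reassuring sign that the case split is drawn along the right fault lines.
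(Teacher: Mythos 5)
Your proof follows the same route as the paper's, whose entire argument is the sentence ``by a simple case analysis'' plus a worked example for case~\eqref{comm-pl}; your write-up of cases \eqref{comm-nc}, \eqref{comm-pl} and \eqref{comm-al} supplies that analysis correctly --- in particular the observation that a non-head loss preserves both the head and the non-emptiness of $\ttl$, the vacuity of the $\testl{Z}$ subcases, and the identification of the two exclusions in \eqref{comm-al} with exactly the two genuine failure modes.

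The worry you flag about case~\eqref{comm-as} is justified, and it cannot be discharged: the dangerous pair is neither subsumed by No contact nor excluded by the stated hypotheses. Concretely, let $\delta_1=q\step{\testr{Z}}q'$ (a Receiver rule) and $\delta_2=p\step{\ttr!a}p'$ (a Sender rule that is not a $Z_1$-test), and let $C=(p,q,\epsilon,\epsilon)$. Then $C\step{\delta_1}(p,q',\epsilon,\epsilon)\step{\delta_2}(p',q',a,\epsilon)$, but the only candidate for $D$ is $(p',q,a,\epsilon)$, where the $\testr{Z}$ test is disabled; the steps do not commute. So case~\eqref{comm-as} needs the further proviso that $\delta_1$ is not a Receiver emptiness test on a channel written by $\delta_2$ (it suffices, for instance, that Receiver has no $Z$-tests, or that $\delta_1$ is a read, an empty action or a loss). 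This costs the paper nothing: every later invocation of case~\eqref{comm-as} --- in \textsection\ref{sssec-S'-faithful}, in the proof of Lemma~\ref{prop-ERZ12EEZ1}, and in Section~\ref{subsec-recurreach} --- commutes Receiver steps that are reads, empty actions or losses. But your instinct to demand an explicit side remark is exactly right: this is the one subcase where the lemma as stated is false, and your framing (a downward-closed test pitted against a growing channel) is the correct way to see why, whereas the upward-closed $N_2$-test survives the swap.
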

\begin{proof}
By a simple case analysis. For example, for \eqref{comm-pl} we
observe that if $\delta_1$ loses a symbol
behind the head of $\ttl$, then
there is another message at the head of $\ttl$,
and thus commuting is possible even if $\delta_2$ is
an ``$\ttl?a$'' read or an ``$\testl{Z}$'' test.
\end{proof}
We will use Lemma~\ref{lem-commuting-steps} several times and in
different ways. For the time being,  we consider
in particular the convenient restriction
to ``head-lossy'' runs. Formally, a message loss $C\step{\los} C'$ is
\emph{head-lossy} if it is of the form
$(p,q,u,av)\step{\los} (p,q,u,v)$ where $a\in\Mess$ (i.e.,
the lost message was the head of $\ttl$).
A \emph{run} $C_\init\step{*}C_\final$ is \emph{head-lossy}
if all its message
loss steps are head-lossy, or occur after all the reliable steps in
the run
(it is convenient to allow unconstrained losses at the end
of the run).
Repeated use of Point \eqref{comm-pl} in
Lemma~\ref{lem-commuting-steps} easily yields the next
corollary:
\begin{cor}
\label{coro-head-lossy-runs}
If there is a run from $C_\init$ to $C_\final$ then
there is a head-lossy run from $C_\init$ to $C_\final$.
\end{cor}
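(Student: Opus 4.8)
The plan is to start from an arbitrary run $\rho : C_\init \step{*} C_\final$ and rewrite it, using only Point~\eqref{comm-pl} of Lemma~\ref{lem-commuting-steps}, until no message loss violates the head-lossy condition. Since \eqref{comm-pl} lets us swap any non-head loss with its immediate successor whatever that successor is, the whole difficulty is to organise these swaps so that the process terminates and keeps the right endpoints. Because every individual commutation preserves the first and last configuration of the two steps it touches, it preserves the endpoints $C_\init,C_\final$ of the whole run, so this concern reduces to termination.

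I would measure progress by the number of \emph{inversions} of $\rho$, that is, the number of pairs $(s,t)$ where $s$ is a message-loss step that is \emph{not} head-lossy, $t$ is a reliable step (a rule $\delta\in\Delta_1\cup\Delta_2$), and $s$ occurs before $t$ in $\rho$. A run has zero inversions exactly when every non-head loss occurs after all reliable steps, i.e.\ exactly when it is head-lossy; so it suffices to show that any run with a positive number of inversions can be rewritten, with the same endpoints, into one with strictly fewer inversions.

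Given such a run I would single out the non-head loss $s$ of \emph{latest} position that still precedes some reliable step, and let $t$ be the first reliable step after $s$. Then no reliable step lies strictly between $s$ and $t$, so every step there is a message loss; and by maximality of $s$ each of these must be a \emph{head} loss, since a non-head loss there would also precede $t$ while sitting later than $s$. I then push $s$ rightwards past these intervening head losses and finally past $t$, each swap justified by \eqref{comm-pl}. These swaps leave every head loss a head loss and every other reliable step where it was, and $s$ crosses no reliable step other than $t$; hence the only inversion that can be affected is $(s,t)$, which disappears, either because $s$ ends up after $t$ or because $s$ has turned into a head loss along the way. This lowers the inversion count by at least one, and iterating drives it to zero, yielding a head-lossy run (we may moreover collect any remaining losses at the very end, where they are unconstrained).

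The main obstacle, and the reason this is not a one-line swap, is precisely the interaction between losses: a non-head loss need not be immediately followed by a reliable step, and when it commutes past another loss (or past a Receiver read of the head of $\ttl$) its relocated occurrence may itself become a head loss. I expect the care to go into (i)~the choice of the \emph{latest} offending loss, which is what forces everything between it and the next reliable step to be a harmless head loss, and (ii)~checking that commuting a non-head loss past a head loss leaves a head loss in front and behind it either a head loss or a non-head loss one step closer to $t$, so that no new inversion is ever created. Once these two points are verified, the inversion measure gives termination and the corollary follows.
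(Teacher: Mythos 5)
Your proof is correct and takes essentially the same route as the paper, which dispatches this corollary in one line as ``repeated use of Point~\eqref{comm-pl} in Lemma~\ref{lem-commuting-steps}''; your inversion-counting argument is precisely the bookkeeping needed to make that repeated application terminate with the right endpoints, and your two checks (choosing the latest offending loss, and verifying that commuting past a head loss preserves head-lossiness) are the right ones.
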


\subsection{Reducing G-G-Reach[$\bm{Z,N}$] to G-G-Reach[$\bm{Z_1,N_1}$]}
\label{ssec-elim-NZ2}

Our first reduction eliminates $Z$ and $N$ tests by Receiver. These tests are
replaced by reading two special new messages, ``$\ttz$'' and
``$\ttn$'', that Sender previously put in the channels.

Formally, we consider an instance of G-G-Reach[$Z,N$], made of a given
UCST $S=(\{\ttr,\linebreak[0] \ttl\},\linebreak[0] \Mess, Q_1, \Delta_1, Q_2, \Delta_2)$, given
states $p_\init,p_\final\in Q_1$, $q_\init,q_\final\in Q_2$, and given
languages $U,V,U',V'\in\Reg(\Mess)$.
We construct a new UCST $S'$ from $S$ as follows (see
Fig.~\ref{fig-simul3}):
\begin{enumerate}
\item
We add two special new messages $\ttz,\ttn$ to $\Mess$,
thus creating the alphabet
 $\Mess'\egdef\Mess\uplus\{\ttz,\ttn\}$.
\item
For each channel $\ttc\in\{\ttr,\ttl\}$
and each Sender's state $p\in Q_1$ we add new states $p_\ttc^1$,
$p_\ttc^2$ and
an ``\emph{(emptiness) testing loop}''
$p\step{\testc{Z}}p_\ttc^1\step{\ttc!\ttz}p_\ttc^2\step{\testc{Z}}p$
(i.e., three new rules).
\item
For every Sender's writing rule $\theta$ of the form $p\step{\ttc!x}p'$
we add a new state $p_\theta$
and the following three rules:
$p\step{\top}p_\theta$, $p_\theta\step{\ttc!\ttn}p_\theta$
(a ``\emph{padding loop}''), and $p_\theta \step{\ttc!x}p'$.
\item
For every Receiver's rule $q\step{\testc{Z}}q'$ (testing emptiness of
$\ttc$) we add the rule $q\step{\ttc?\ttz}q'$.
\item
For every Receiver's rule $q\step{\testc{N}}q''$ (testing non-emptiness of
$\ttc$) we add the rule $q\step{\ttc?\ttn}q''$.

\item
At this stage, the resulting system is called $\Saux$.

\item
Finally we remove all Receiver's tests, i.e., the  rules
$q\step{\testc{Z}}q'$ and
$q\step{\testc{N}}q''$.
We now have $S'$.
\end{enumerate}
\begin{figure}[htbp]
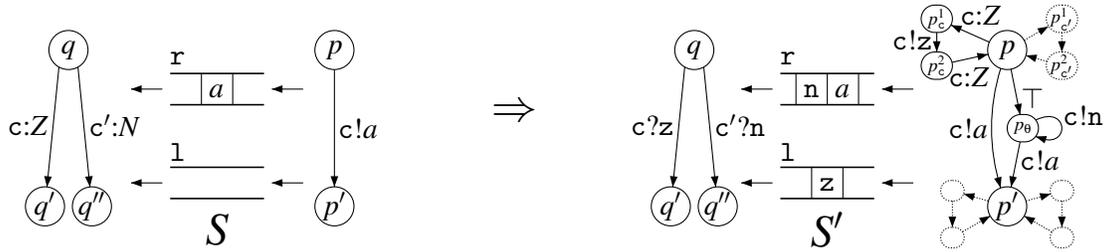

\centering
{\setlength{\unitlength}{1.04mm}
\begin{gpicture}(134,26)(-6,-3)

\gasset{ilength=4,flength=4,Nw=5,Nh=5,Nmr=999,ELdist=0.5,loopdiam=3}

\node(q1)(0,20){$q$}
\node(q2)(-3,0){$q'$}
\node(q3)(3,0){$q''$}
\drawedge[ELside=r,sxo=-1](q1,q2){$\testc{Z}$}
\drawedge[ELside=l,sxo=1](q1,q3){$\testcp{N}$}

\node(p1)(34,20){$p$}
\node(p2)(34,0){$p'$}
\drawedge[ELside=l](p1,p2){$\ttc!a$}

{\gasset{AHnb=0,Nframe=n}

 {\gasset{linewidth=0.2}
 \drawline(13,1)(25,1)
 \drawline(13,5)(25,5)
 \drawline(13,13)(25,13)
 \drawline(13,17)(25,17)}
 \node(name)(19,-3){\LARGE $S$}

\put(13,19){\makebox(0,0)[l]{channel $\ttr$}}
\put(13,7){\makebox(0,0)[l]{channel $\ttl$}}

 \drawline(17,13)(17,17) \put(18,14){$a$}
 \drawline(21,13)(21,17)

 \drawline[AHnb=1](12,15)(8,15)
 \drawline[AHnb=1](12,3)(8,3)
 \drawline[AHnb=1](30,15)(26,15)
 \drawline[AHnb=1](30,3)(26,3)
}

\node[Nframe=n](arrow)(57,12){\LARGE $\Rightarrow$}

\node(q1)(80,20){$q$}
\node(q2)(77,0){$q'$}
\node(q3)(83,0){$q''$}
\drawedge[ELside=r,sxo=-1](q1,q2){$\ttc?\ttz$}
\drawedge[ELside=l,sxo=1](q1,q3){$\ttc'?\ttn$}

\node(p1)(120,20){$p$}
\node(p2)(120,0){$p'$}
{\gasset{Nw=4,Nh=3.5}
 \node(pth)(122,10){\tiny $p_\theta$}
 \node(p11c)(111,24){\tiny $p^1_\ttc$}
 \node(p12c)(111,18){\tiny $p^2_\ttc$}
{\gasset{dash={0.15 0.2}0}
 \node(p11d)(127,24){\tiny $p^1_{\ttc'}$}
 \node(p12d)(127,18){\tiny $p^2_{\ttc'}$}
 \gasset{Nw=3,Nh=2.625}
 \node(p21c)(127,2){}
 \node(p22c)(127,-4){}
 \node(p21d)(113,2){}
 \node(p22d)(113,-4){}
 \drawedge(p1,p11d){}
 \drawedge(p11d,p12d){}
 \drawedge(p12d,p1){}
 \drawedge(p2,p21c){}
 \drawedge(p21c,p22c){}
 \drawedge(p22c,p2){}
 \drawedge(p2,p21d){}
 \drawedge(p21d,p22d){}
 \drawedge(p22d,p2){}
}
 \drawedge[ELpos=50,ELside=r,ELdist=0.5](p1,p11c){$\testc{Z}$}
 \drawedge[ELpos=45,ELside=r,ELdist=0.6](p11c,p12c){$\ttc!\ttz$}
 \drawedge[ELside=r,ELdist=0.6,ELpos=40](p12c,p1){$\testc{Z}$}
}
\drawedge[ELside=r,curvedepth=-2](p1,p2){$\ttc!a$}
\drawedge[ELpos=65,ELdist=0.3,ELside=l](p1,pth){$\top$}
\drawedge[ELside=l,ELdist=0.2,ELpos=35](pth,p2){$\ttc!a$}
\drawloop[loopangle=0,ELpos=43,ELside=l,ELdist=0.2,loopCW=y](pth){$\ttc!\ttn$}

{\gasset{AHnb=0,Nframe=n}

 {\gasset{linewidth=0.2}
 \drawline(91,1)(103,1)
 \drawline(91,5)(103,5)
 \drawline(91,13)(103,13)
 \drawline(91,17)(103,17)}
 \node(name)(97,-3){\LARGE $S'$}

\put(91,19){\makebox(0,0)[l]{channel $\ttr$}}
\put(91,7){\makebox(0,0)[l]{channel $\ttl$}}

 \drawline(93,13)(93,17) \put(94,14){$\ttn$}
 \drawline(97,13)(97,17) \put(98,14){$a$}
 \drawline(101,13)(101,17)

 \drawline(95,1)(95,5) \put(96,2){$\ttz$}
 \drawline(99,1)(99,5)

 \drawline[AHnb=1](90,15)(86,15)
 \drawline[AHnb=1](90,3)(86,3)
 \drawline[AHnb=1](108,15)(104,15)
 \drawline[AHnb=1](108,3)(104,3)
}

\end{gpicture}}
\caption{Reducing G-G-Reach[$Z,N$] to G-G-Reach[$Z_1,N_1$]: eliminating Receiver's tests}
\label{fig-simul3}
\end{figure}

%
\noindent The intuition behind $S'$ is that Sender runs a small protocol
signaling to Receiver what the status of the channels is. When a
channel is empty, Sender may write a $\ttz$ to it that Receiver can
read in place of testing for emptiness. For correctness, it is
important that Sender does not proceed any further until this $\ttz$
has disappeared from the channel. For non-emptiness tests, Sender
can always write several extraneous $\ttn$ messages before writing an
original message. Receiver can then read these $\ttn$'s in place of testing
for nonemptiness.

For $w=a_1a_2\ldots a_\ell\in\Mess^*$, we let
$\pad(w)\egdef\ttn^*a_1\ttn^*a_2\ldots\ttn^*a_\ell$ denote the set (a
regular language) of all \emph{paddings} of $w$, i.e., words obtained
by inserting any number of $\ttn$'s in front of the original messages.
Note that $\pad(\epsilon)=\{\epsilon\}$. This is extended to arbitrary
languages in the usual way: for $L\subseteq\Mess^*$,
$\pad(L)=\bigcup_{w\in L}\pad(w)$ and we note that, when $L$ is
regular, $\pad(L)$ is regular too. Furthermore, one easily derives an
FSA (a finite-state automaton) or a regular expression for $\pad(L)$ from an FSA or a regular
expression for $L$.

By replacing $S$, $U$, $V$ with $S'$, $\pad(U)$, $\pad(V)$ (and
keeping $p_\init$, $p_\final$, $q_\init$, $q_\final$, $U'$, $V'$
unchanged), the initial G-G-Reach[$Z,N$] instance is transformed into
a G-G-Reach[$Z_1,N_1$] instance. The correctness of this reduction is
captured by the next lemma, that
we immediately proceed to prove in the rest of section~\ref{ssec-elim-NZ2}:
\begin{lem}
\label{lem-corr-new-reduc}
For any $u,v,u',v'\in\Mess^*$, $S$ has a run $(p_\init, q_\init, u, v)
\step{*} (p_\final, q_\final, u', v')$ if, and only if, $S'$ has a run
$(p_\init, q_\init, \hat{u}, \hat{v})\step{*} (p_\final, q_\final, u',
v')$ for some padded words $\hat{u}\in\pad(u)$ and
$\hat{v}\in\pad(v)$.
\end{lem}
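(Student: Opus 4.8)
The claim is a bidirectional simulation statement. The plan is to prove each direction by a step-by-step translation of runs, using the protocol intuition already described: $Z$/$N$ tests by Receiver in $S$ are replaced in $S'$ by reading the special marker messages $\ttz$ (signalling emptiness) and $\ttn$ (signalling nonemptiness), which Sender inserts via the new testing and padding loops. Both directions are proved by induction on the length of the run, maintaining an explicit invariant relating an $S$-configuration to the corresponding family of $S'$-configurations.

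**The ``only if'' direction (simulating $S$ by $S'$).** Given a run $(p_\init,q_\init,u,v)\step{*}(p_\final,q_\final,u',v')$ in $S$, I would build a matching run in $S'$ starting from suitable paddings $\hat u\in\pad(u)$, $\hat v\in\pad(v)$. The key is that I may first invoke Corollary~\ref{coro-head-lossy-runs} to assume the $S$-run is head-lossy, which keeps the channel contents well-behaved with respect to the leading $\ttn$-markers. Each $S$-step is translated as follows: Sender and loss steps are copied verbatim; a Sender write $p\step{\ttc!x}p'$ is simulated through the padding-loop gadget $p\step{\top}p_\theta$, some number of $p_\theta\step{\ttc!\ttn}p_\theta$ self-loops (emitting exactly the $\ttn$'s demanded by the chosen padding), then $p_\theta\step{\ttc!x}p'$; a Receiver emptiness test $q\step{\testc{Z}}q'$ is preceded by Sender running its testing loop $p\step{\testc{Z}}p^1_\ttc\step{\ttc!\ttz}p^2_\ttc\step{\testc{Z}}p$ to deposit a $\ttz$ that Receiver reads via the new rule $q\step{\ttc?\ttz}q'$; a Receiver nonemptiness test $q\step{\testc{N}}q''$ is matched by Receiver reading a padding $\ttn$ via $q\step{\ttc?\ttn}q''$. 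The invariant to maintain is that at every point the $S'$-channel content is a padding of the $S$-channel content, with the reservation that a $\ttz$ inserted for an emptiness test is immediately consumed so that Sender can pass its second ``$\testc{Z}$'' guard.

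**The ``if'' direction (projecting $S'$ back to $S$).** Here I take an $S'$-run from $(p_\init,q_\init,\hat u,\hat v)$ to $(p_\final,q_\final,u',v')$ and recover an $S$-run. Define a projection $\pi$ erasing all $\ttz$ and $\ttn$ occurrences from channel contents; the target content $u',v'$ lies in $\Mess^*$, so all markers must have been consumed or lost by the end. I would show that each $S'$-step projects to a legal $S$-step (or is absorbed): Sender's testing-loop and padding-loop moves, and reads of $\ttz$/$\ttn$ by Receiver, either project to the corresponding $Z$/$N$ test or vanish. The essential point is that Sender's emptiness-test gadget is faithful: the two ``$\testc{Z}$'' guards around the single $\ttc!\ttz$ write force the channel to be genuinely empty both before writing $\ttz$ and after it is removed, so that when Receiver reads that $\ttz$ the projected channel is indeed empty, validating a $Z$-test in $S$. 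Symmetrically, reading a $\ttn$ guarantees the projected channel was nonempty, validating an $N$-test.

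**Main obstacle.** The delicate part is the interaction of \emph{lossiness} on $\ttl$ with the markers, since $\ttn$'s and the projection must be shown to behave correctly under message loss. For the ``if'' direction, a loss of a $\ttn$ in $S'$ projects to no $S$-step, which is harmless, but a loss of a real message must remain a valid loss after projection; this is where restricting (via Corollary~\ref{coro-head-lossy-runs}, now applied to $S'$) to head-lossy runs pays off, because it controls \emph{which} symbol is at the head and prevents spurious deadlocks in the gadgets. The correctness of the emptiness gadget under lossiness is subtle precisely because a $\ttz$ written to $\ttl$ could in principle be lost; I expect the argument to hinge on the two surrounding ``$\testc{Z}$'' guards, which make any run that loses the $\ttz$ (rather than having Receiver read it) unable to both deposit and clear the marker consistently, thereby ruling out the bad interleavings. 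Establishing this rigorously — that the only surviving behaviours are the faithful ones — is the real content of the lemma.
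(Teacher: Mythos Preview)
Your forward (``only if'') direction is essentially sound and close to the paper's argument. The gap is in the backward (``if'') direction: your projection argument for $\ttn$ markers fails.

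You assert that ``reading a $\ttn$ guarantees the projected channel was nonempty,'' but there is no symmetry with the $\ttz$ case: the padding loop carries no guards analogous to the two $\testc{Z}$ tests bracketing a $\ttz$-write. Concretely, suppose channel $\ttc$ is empty; Sender enters the padding loop via $p\step{\top}p_\theta$ and performs one $p_\theta\step{\ttc!\ttn}p_\theta$, so $\ttc=\ttn$; Receiver now fires $q\step{\ttc?\ttn}q'$ (this rule exists in $S'$ since $S$ had $q\step{\testc{N}}q'$), leaving $\ttc=\epsilon$; only afterward does Sender finish with $p_\theta\step{\ttc!a}p'$. This is a legal $S'$-run fragment involving no losses, so head-lossiness is irrelevant. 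At the moment Receiver reads $\ttn$ your projected channel is empty and the $N$-test $q\step{\testc{N}}q'$ is blocked in $S$. A matching $S$-run \emph{does} exist here (fire the write first, then the test), but your step-by-step projection does not produce it.

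The paper repairs exactly this by working in the common super-system $\Saux$ and introducing \emph{faithful} runs: those where each testing loop forms a contiguous four-step block (property~P1) and each padding loop forms a contiguous Sender block ending with the real write (property~P2). Under P2 every $\ttn$ present in a channel is followed by some letter written in the same block (or already present in the initial padded word), so deleting that $\ttn$ never empties the channel and the backward transformation goes through. The paper first shows---using a minimal-length argument together with the Commutation Lemma, not merely head-lossiness---that any witnessing $S'$-run can be reordered into a faithful one. Your projection is essentially the paper's backward transformation, but it is only sound once faithfulness has been established; that reordering-to-faithful step is the missing idea in your plan.
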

Though we are ultimately interested in $S$ and $S'$, it is convenient
to consider special runs of $\Saux$ since $\Saux$ ``contains'' both $S$
and $S'$. We rely on
Corollary~\ref{coro-head-lossy-runs} and
 tacitly assume that all runs are head-lossy. We say that a
(head-lossy) \emph{run} $C_0\step{\delta_1}C_1\step{\delta_2}\cdots
\step{\delta_n}C_n$ of $\Saux$ is \emph{faithful} if $C_0=(p_0,
q_0, u_0, v_0)$ with $u_0,v_0\in\pad(\Mess^*)$,
$C_n=(p_n,q_n, u_n, v_n)$ with $u_n,v_n\in \Mess^*$, $p_0, p_n \in Q_1$,
$q_0, q_n \in Q_2$,  and the
following two properties are satisfied (for all $i=1,2,\ldots,n$):
\begin{gather}
\tag{P1}
\begin{minipage}{0.91\textwidth}
-- if $\delta_i$ is some $p\step{\testc{Z}}p^1_\ttc$ then $\delta_{i+1}$,
$\delta_{i+2}$, and $\delta_{i+3}$ are
$p^1_\ttc\step{\ttc!\ttz}p^2_\ttc$, $q\step{\ttc?\ttz}q'$,
$p^2_\ttc\step{\testc{Z}}p$ (for some $q,q'\in Q_2$). In this case,
the subrun $C_{i-1}\step{*}C_{i+3}$ is called a \emph{P1-segment} of
the run.
\end{minipage}
\\
\tag{P2}
\begin{minipage}{0.91\textwidth}
-- if $\delta_i$ is some $p\step{\top}p_\theta$ then there is some $j>i$
such that $\delta_{i+1}, \delta_{i+2}, \ldots, \delta_{j}$ are
$p_\theta\step{\ttc!\ttn}p_\theta\step{\ttc!\ttn}
\cdots\step{\ttc!\ttn}p_\theta\step{\ttc!a}p'$ for some $a\in \Mess$
and $p'\in Q_1$. The subrun $C_{i-1}\step{*}C_j$ is called a
\emph{P2-segment}.
\end{minipage}
\end{gather}
Informally, a run is faithful if it uses the new rules (introduced in
$\Saux$) in the ``intended'' way: e.g., P1
enforces that
each $\ttz$ written by Sender (necessarily via a rule
$p_1^\ttc\step{\ttc!\ttz}p_2^\ttc$) is immediately read
after being written in the empty channel.
We note that any run of $S$ is trivially faithful
since it does not use the new rules.

We now exhibit two reversible transformations of runs of $\Saux$, one
for $Z$ tests in \textsection\ref{sssec-Z2-to-P1}, the other for $N$
tests in \textsection\ref{sssec-N2-to-P2}, that preserve faithfulness.
This will allow us to translate runs of $S$, witnessing the original
instance, to faithful runs of $S'$, witnessing the created instance,
and vice versa. Finally we show in \textsection\ref{sssec-S'-faithful}
that if there is a run of $S'$ witnessing the created instance, then
there is a faithful one as well.

When describing the two transformations we shall assume, in order to
fix notations, that we transform a test on channel $\ttl$; the case
for the channel $\ttr$ is completely analogous. For both
transformations we assume a faithful (head-lossy) run $\pi$ of $\Saux$
in the following form:
\begin{gather}
\tag{$\pi$}
(p_\init, q_\init, u_0, v_0)=
C_0\step{\delta_1}C_1\step{\delta_2}C_2\cdots\step{\delta_n}C_n=
(p_\final,q_\final, u_n, v_n)
\end{gather}
where $\delta_1,\ldots,\delta_n$ can be rules of $\Saux$ or the
``$\los$'' symbol for steps where a message is lost. For
$i=0,1,\ldots,n$, we let $C_i=(p_i,q_i,u_i,v_i)$.

\subsubsection{Trading $Z_2$ tests for P1-segments.}
\label{sssec-Z2-to-P1}
Assume that the step $C_m\step{\delta_{m+1}}C_{m+1}$ in $\pi$ is a
$Z_2$-test (an emptiness test by Receiver), hence has the form $(p,q, w, \varepsilon)
\step{\testl{Z}} (p,q',w,\varepsilon)$ if we assume $\ttc=\ttl$. We
may replace this step with the following steps
\begin{equation}\label{eq:ztransf}
(p,q, w, \varepsilon) \step{\testl{Z}}
(p^1_\ttl,q,w,\varepsilon) \step{\ttl!\ttz}
(p^2_\ttl,q,w,\ttz) \step{\ttl?\ttz}
(p^2_\ttl,q',w,\varepsilon) \step{\testl{Z}}
(p,q',w,\varepsilon)
\end{equation}
using the rules introduced in $\Saux$.
This transforms (the faithful run) $\pi$ into another faithful run
$\pi'$, decreasing the number of Receiver's tests
(by one occurrence of a $Z_2$-test).
In the other direction, if $\pi$ contains a P1-segment
$C_{m-1}\step{*}C_{m+3}$, it must be of the form
\eqref{eq:ztransf}, when the involved channel is $\ttc=\ttl$, and we
can replace it with one step
$C_{m-1}\step{\testc{Z}}C_{m+3}$, preserving faithfulness.

\subsubsection{Trading $N_2$ tests for occurrences of $\ttn$.}
\label{sssec-N2-to-P2}
Now assume that the step $C_m\step{\delta_{m+1}}C_{m+1}$ is an $N_2^\ttl$-test,
hence has the form $(p,q, u, x\,v) \step{\testl{N}}
(p,q',u,x\,v)$ for some message $x\in\Mess'$.
Now $x\not=\ttz$ since there was no $\ttz$'s in $v_0$
and, as noted above, any $\ttz$ written by Sender in a faithful run is
immediately read.
Hence $x\in \Mess\cup\{\ttn\}$. We want to replace
the $q\step{\testl{N}}q'$ test (by Receiver) with a $q\step{\ttl?\ttn}q'$ but this
requires inserting one $\ttn$ in $\ttl$, i.e., using a new rule
$p_\theta\step{\ttl!\ttn}p_\theta$ at the right moment.

We now follow the (occurrence of) $x$ singled out in $C_m$ and
find the first configuration, say $C_k$, where this $x$ appears already;
we can thus write $v_i=w_i\, x\, w'_i$, i.e.,
$C_i=(p_i,q_i,u_i,w_i \, x \, w'_i)$, for $i=k,k+1,\ldots,m$. Here $x$
always depicts the same occurrence, and e.g., $w_m\, x\, w'_m=x\, v$
entails $w_m=\epsilon$ and $w'_m=v$. By adding $\ttn$ in front of
$x$ in each $C_i$ for $i=k,k+1,\dots,m$, we obtain new
configurations $C'_k,C'_{k+1},\ldots,C'_m$ given by
$C'_i=(p_i,q_i,u_i,w_i \,\ttn\, x\, w'_i)$. Now $C'_{k}
\step{\delta_{k+1}} C'_{k+1} \step{\delta_{k+2}} \cdots
\step{\delta_m} C'_m$ is a valid run of $\Saux$ since $x$ is not read
during $C_k\step{*}C_m$ and since, thanks to the presence of $x$,
adding one $\ttn$ does not change the (non)emptiness status of $\ttl$ in this
subrun.
Moreover, since $q\step{\testl{N}}q'$ is a rule of $S$,
there is a rule $q\step{\ttl?\ttn}q'$ in $\Saux$, where $C'_m=(p,q,u,\ttn\,
x\,v)\step{\ttl?\ttn}(p,q',u,x\,v)=C_{m+1}$ is a valid step.

If $k=0$ (i.e., if $x$ is present at the beginning of $\pi$),
we have exhibited a faithful
run $C'_0 \step{*} C'_m \step{\ttl?\ttn}
C_{m+1}\step{*}C_n$, starting from $C'_0=(p_\init,q_\init,u_0,
w_0\,\ttn x\,w'_0)$, where $w_0\ttn x w'_0\in\pad(v_0)$ since
$v_0=w_0\,x\,w'_0$.
If $k>0$, the highlighted occurrence
of $x$ necessarily
appears in $C_k$ via  $\delta_k=p_{k-1}\step{\ttl!x}p_{k}$ and we
have $v_k=v_{k-1}x$.
If $\delta_k$ is a rule of $S$, we may exhibit a sequence
$C_{k-1}\step{*}C'_k$ using the new rules
\begin{gather*}
 C_{k-1} \step{\top} (p_{\delta_k},q_{k-1},u_{k-1},v_{k-1})
\step{\ttl!\ttn} (p_{\delta_k},q_{k-1},u_{k-1},v_{k-1}\,\ttn)
\step{\ttl!x} (p_k,q_{k-1},u_{k-1},v_{k-1}\,\ttn \,x)=C'_k
\:,
\end{gather*}
while if $\delta_k$ is a new rule $p_\theta\step{\ttl!x}p_k$, we can
use $C_{k-1}\step{\ttl!\ttn}\step{\ttl!x}C'_k$. In both cases we can
use $C_{k-1}\step{*}C'_k$ to construct a new faithful run
$C_0\step{*}C_{k-1}\step{*}C'_k\step{*}C'_m\step{}C_{m+1}\step{*}
C_n$.
We have again decreased the number of Receiver's tests,
now by one occurrence of an $N_2$-test.

For the backward transformation we assume that $\ttn$ occurs in a
configuration of $\pi$. We select one such occurrence and let $C_k, C_{k+1},\dots, C_m$ ($0\leq k\leq m<n$)
be the part of $\pi$
where this occurrence of $\ttn$ appears.
For $i=k,k{+}1,\ldots,m$, we
highlight this occurrence of $\ttn$ by writing $v_i$ in the form
$w_i \,\ttn\, w'_i$
(assuming w.l.o.g.\ that the $\ttn$
occurs in $\ttl$),
i.e., we write $C_i=(p_i,q_i,u_i,w_i\,\ttn\,
w'_i)$. Removing the $\ttn$ yields new
configurations $C'_k,C'_{k+1},\ldots,C'_m$ given by
$C'_i=(p_i,q_i,u_i,w_i\, w'_i)$.

We claim that
$C'_k\step{\delta_{k+1}}C'_{k+1}\cdots\step{\delta_m}C'_m$ is a valid
run of $\Saux$. For this, we only need to check that removing $\ttn$
does not make channel $\ttl$ empty in some $C'_i$
where $\delta_{i+1}$ is an $N^\ttl$-test.
If $k=0$
then $\ttn$ in $v_0=w_0\ttn w'_0$
is followed by a letter $x\in\Mess\cup\{\ttn\}$ since
$v_0\in\pad(\Mess^*)$. This $x$ remains in $\ttl$ until at least
$C_{m+1}$ since it cannot be read while $\ttn$ remains, nor can it be
lost before the $C_i\step{}C_{i+1}$ step since the run is head-lossy. If $k>0$,
then our $\ttn$ appeared in a step of the form
$C_{k-1}=(p_{\theta},q_{k-1},u_{k-1},v_{k-1})\step{\ttl!\ttn}
C_k=(p_{\theta},q_{k-1},u_{k-1},v_{k-1}\ttn)$ (for some write rule
$\theta$ of $S$, inducing $p_\theta\step{\ttl!\ttn}p_\theta$ in $\Saux$).
Since $p_0=p_\init$ is not $p_\theta$, a rule
$p_\ell\step{\top}p_\theta$ was used before step $k$, and $\pi$ has a
P2-segment $C_\ell\step{\top}\cdots C_{k-1}
\step{\ttl!\ttn}C_{k}\step{\ttl!x}\cdots C_{\ell'}$
where $\ell'\leq m$ and $x\in\Mess\cup\{\ttn\}$ is present in all
$C_{k+1},\dots,C_m$.
As before, this $x$ guarantees that
$C_{k-1}=C'_k\step{\delta_{k+1}}C'_{k+1}\cdots\step{\delta_m}C'_m$ is
a valid
run of $\Saux$.

We now recall that  $m<n$ and
that $\delta_{m+1}$ is either
$q_m\step{\ttl?\ttn}q_{m+1}$ or the loss of $\ttn$.
In the first
case, $\Saux$ has a step $C'_m\step{\testl{N}}C_{m+1}$, while in the
second case $C'_m=C_{m+1}$.

The corresponding run
$C'_0\step{*}C'_m\step{*}C_{m+1}\step{*}C_n$
in the case $k=0$, or
$C_0\step{*}C_{k-1}\step{}C'_{k+1}\step{*}C'_m\step{*}C_{m+1}\step{*}C_n$
in the case $k>0$,
is a faithful run;
we have thus removed an occurrence of $\ttn$, possibly at a
cost of introducing one $N_2$ test.

\subsubsection{Handling $S'$ runs and faithfulness.}
\label{sssec-S'-faithful}
Since a witness run of $S$ is (trivially) faithful,
the above transformations
allow us to  remove one
by one all occurrences of Receiver's $Z$ and $N$ tests, creating a
(faithful) witness run for $S'$ (with a possibly padded $C_0$).
We have thus proved the ``only-if''
part of
Lemma~\ref{lem-corr-new-reduc}.
The ``if'' part is shown analogously, now using
the two transformations in the other direction and removing
occurrences of the new $\ttz$ and $\ttn$ messages, \emph{with one
  proviso}: we only transform faithful runs.
We thus need to show that if
$S'$ has a (head-lossy)
run $(p_\init, q_\init, \hat{u}, \hat{v}) \step{*}
(p_\final, q_\final, u', v')$ then it also has a faithful one.

Let us assume that $\pi$ above, of the form $C_0\step{*}C_n$, is a witness
run of $S'$, not necessarily faithful, having minimal length.
We show how to modify it locally so
that the resulting run is faithful.

Assume that some rule $\delta_i=p\step{\top}p_\theta$ is used in $\pi$, and
that P2 fails on this occurrence of $\delta_i$. Since $\pi$ does not end in
state $p_\theta$, Sender necessarily continues with some (possibly zero)
$p_\theta\step{\ttc!\ttn}p_\theta$ steps, followed by some
$\delta_j=p_\theta\step{\ttc!x}p'$. Now all Receiver or message loss steps
between $\delta_i$ and $\delta_j$ can be swapped and postponed after
$\delta_j$ since Receiver has no tests and Sender does not test between
$\delta_i$ and $\delta_j$ (recall
Lemma~\ref{lem-commuting-steps}\eqref{comm-as}). After the transformation,
$\delta_i$ and the rules after it form a P2-segment. Also, since message
losses have been postponed, the run remains head-lossy.

Consider now a rule $\delta_i$ of the form $p\step{\testc{Z}}p^1_\ttc$ in
$\pi$ and assume that P1 fails on this occurrence. Sender necessarily
continues with some $\delta_j=p^1_\ttc\step{\ttc!\ttz}p^2_\ttc$ and
$\delta_k=p^2_\ttc\step{\testc{Z}}p$, interleaved with Receiver's steps
and/or losses. It is clear that the $\ttz$ written on $\ttc$ by $\delta_j$
must be lost, or read by a Receiver's $\delta_\ell=q\step{\ttc?\ttz}q'$
before $\delta_k$ can be used. The read or loss occurs at some step $\ell$
with $j<\ell<k$. Note that Receiver does not read from $\ttc$ between
steps $i$ and $k$, except perhaps at step $\ell$.
Since Sender only tests for emptiness of $\ttc$ between
steps $i$ and $k$, all Receiver's steps and losses between steps $i$ and
$\ell$ can be swapped and put before $\delta_i$. The run remains head-lossy
since the swapped losses do not occur on $\ttc$, which is empty at step $i$.
Similarly, all non-Sender steps between steps $\ell$ and $k$ can be swapped
after $\delta_k$, preserving head-lossiness. The obtained run has a segment
of the form
$C\step{\testc{Z}}\step{\ttc!\ttz}\step{\ttc?\ttz}\step{\testc{Z}}C'$ that is
now a P1-segment, or of the form
$C\step{\testc{Z}}\step{\ttc!\ttz}\step{\los}\step{\testc{Z}}C'=C$, i.e., a
dummy loop $C\step{+}C$ that contradicts minimality of $\pi$.

\subsection{Reducing G-G-Reach[$\bm{Z_1,N_1}$] to E-G-Reach[$\bm{Z_1,N_1}$]}
\label{ssec-ggz1n1-egz1n1}

A G-G-Reach[${Z_1,N_1}$] instance where the initial contents of
$\ttr$ and $\ttl$ are restricted to (regular languages)
$U$ and $V$ respectively can be transformed into an equivalent
instance where $U$ and $V$ are both replaced with $\{\epsilon\}$.
For this, one adds a new (fresh) initial state $p_\newinit$ to
Sender, from which Sender first nondeterministically generates
some word $u\in U$, writing it on $\ttr$, then generates some word
$v\in V$, writing it on $\ttl$, and then enters $p_\init$, the original initial state. The resulting $S'$
is just $S$ with extra states and rules between $p_\newinit$ and $p_\init$ that mimic FSAs for $U$ and $V$.

Stating the correctness of this reduction has the form
\begin{gather}
\tag{$\star$}
\label{eq-correct-5.3}
\text{$S$ has a run $(p_\init,q_\init,u,v) \step{*} C$ for some $u\in
  U$ and $v\in V$ iff $S'$ has a run
  $(p_\newinit,q_\init,\epsilon,\epsilon) \step{*} C$} \:.
\end{gather}
Now, since $S'$ can do $(p_\newinit,q_\init,\epsilon,\epsilon)
\step{*} (p_\init,q_\init,u,v)$ for any $u\in U$ and $v\in V$, the
left-to-right implication in \eqref{eq-correct-5.3} is clear. Note
that, in the right-to-left direction, \emph{it is essential that
  Receiver has no tests} and this is what we missed
in~\cite{JKS-tcs2012}. Indeed, it is the absence of Receiver tests that allows us
to reorder any $S'$ run from $(p_\newinit,q,\epsilon,\epsilon)$ so
that all steps that use the new ``generating'' rules (from
$p_\newinit$ to $p_\init$) happen before any Receiver steps.

\subsection{Reducing E-G-Reach[$\bm{Z_1,N_1}$] to E-G-Reach[$\bm{Z_1}$]}
\label{ssec-egz1n1-egz1}

When there are no Receiver tests
and a run starts with the empty channels,
 then $N_1$ tests can be easily eliminated by a
buffering technique on Sender's side.
Each channel $\ttc\in\{\ttr,\ttl\}$ gets its one-letter buffer
$\textsc{B}_\ttc$, which
can be  emptied at any time by moving its content to $\ttc$.
Sender can only write to an empty buffer; it passes a $Z^\ttc_1$ test
if both channel $\ttc$ and $\textsc{B}_\ttc$ are empty, while any
$N^\ttc_1$ test is replaced with the (weaker) ``test'' if
$\textsc{B}_\ttc$ is nonempty.

Formally, we start with an instance
$(S,p_\init,p_\final,q_\init,q_\final,\{\epsilon\},\{\epsilon\},U',V')$
of E-G-Reach[$Z_1,N_1$], where
$S=(\{\ttr,\ttl\},\Mess,Q_1,\Delta_1,Q_2,\Delta_2)$, and we create
$S'=(\{\ttr,\ttl\},\Mess,Q'_1,\Delta'_1,Q_2,\Delta_2)$ arising from
$S$ as follows (see Fig.~\ref{fig-simul5}).
\begin{figure}[htbp]
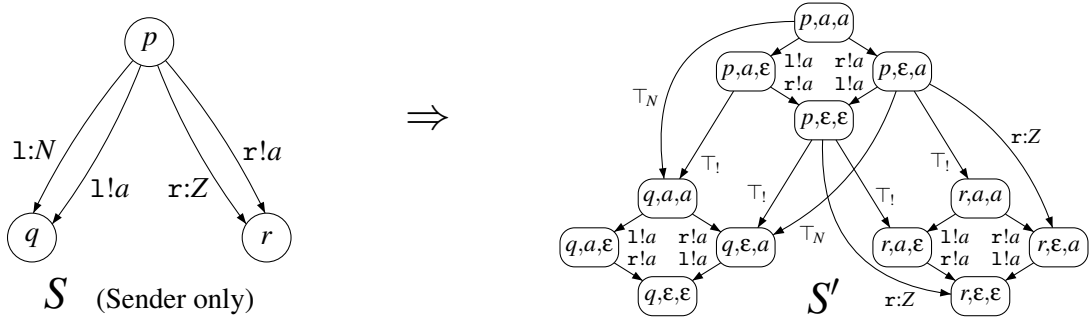

\centering
{\setlength{\unitlength}{1.3mm}
\begin{gpicture}(115,29)(20,-8)

\gasset{ilength=3.5,flength=3.5,iangle=0,fangle=0,Nw=5,Nh=5,Nmr=999,ELdist=0.5,loopdiam=3}

\node(p1)(41,18){$p$}
\node(p2)(29,-2){$q$}
\node(p3)(53,-2){$r$}
\node[Nframe=n](name)(41,-8){{\LARGE $S$}\quad (Sender only)}
\drawedge[ELside=l,ELpos=65,curvedepth=1,exo=1](p1,p2){$\ttl!a$}
\drawedge[ELside=r,ELpos=65,curvedepth=-1,exo=-1](p1,p2){$\testl{N}$}
\drawedge[ELside=l,ELpos=65,curvedepth=1,exo=1](p1,p3){$\ttr!a$}
\drawedge[ELside=r,ELpos=65,curvedepth=-1,exo=-1](p1,p3){$\testr{Z}$}

\node[Nframe=n](arrow)(69.5,10){\LARGE $\Rightarrow$}

\node[Nw=6,Nh=4,Nmr=1.5](p1)(110,20){\sizeB $p,\!a,\!a$}
\node[Nw=6,Nh=4,Nmr=1.5](p1b)(102,15){\sizeB $p,\!a,\!\epsilon$}
\node[Nw=6,Nh=4,Nmr=1.5](p1c)(118,15){\sizeB $p,\!\epsilon,\!a$}
\node[Nw=6,Nh=4,Nmr=1.5](p1d)(110,10){\sizeB $p,\!\epsilon,\!\epsilon$}
\node[Nw=6,Nh=4,Nmr=1.5](p2)(94,2){\sizeB $q,\!a,\!a$}
\node[Nw=6,Nh=4,Nmr=1.5](p2b)(86,-3){\sizeB $q,\!a,\!\epsilon$}
\node[Nw=6,Nh=4,Nmr=1.5](p2c)(102,-3){\sizeB $q,\!\epsilon,\!a$}
\node[Nw=6,Nh=4,Nmr=1.5](p2d)(94,-8){\sizeB $q,\!\epsilon,\!\epsilon$}
\node[Nw=6,Nh=4,Nmr=1.5](p3)(126,2){\sizeB $r,\!a,\!a$}
\node[Nw=6,Nh=4,Nmr=1.5](p3b)(118,-3){\sizeB $r,\!a,\!\epsilon$}
\node[Nw=6,Nh=4,Nmr=1.5](p3c)(134,-3){\sizeB $r,\!\epsilon,\!a$}
\node[Nw=6,Nh=4,Nmr=1.5](p3d)(126,-8){\sizeB $r,\!\epsilon,\!\epsilon$}
\node[Nframe=n](name)(110,-8){\LARGE $S'$}

\drawedge[ELside=l,ELpos=55](p1b,p1d){\sizeA $\ttr!a$}
\drawedge[ELside=r,ELpos=55](p1c,p1d){\sizeA $\ttl!a$}
\drawedge[ELside=r,ELpos=45](p1,p1c){\sizeA $\ttr!a$}
\drawedge[ELside=l,ELpos=45](p1,p1b){\sizeA $\ttl!a$}
\drawedge[ELside=l,ELpos=55](p2b,p2d){\sizeA $\ttr!a$}
\drawedge[ELside=r,ELpos=55](p2c,p2d){\sizeA $\ttl!a$}
\drawedge[ELside=r,ELpos=45](p2,p2c){\sizeA $\ttr!a$}
\drawedge[ELside=l,ELpos=45](p2,p2b){\sizeA $\ttl!a$}
\drawedge[ELside=l,ELpos=55](p3b,p3d){\sizeA $\ttr!a$}
\drawedge[ELside=r,ELpos=55](p3c,p3d){\sizeA $\ttl!a$}
\drawedge[ELside=r,ELpos=45](p3,p3c){\sizeA $\ttr!a$}
\drawedge[ELside=l,ELpos=45](p3,p3b){\sizeA $\ttl!a$}

\drawedge[ELdist=-0.1,ELside=l,ELpos=65,curvedepth=0](p1d,p3b){\sizeA $\top_{\!!}$}
\drawedge[ELdist=0.1,ELside=r,ELpos=65,curvedepth=0](p1c,p3){\sizeA $\top_{\!!}$}
\drawedge[ELside=r,ELpos=65,curvedepth=0](p1d,p2c){\sizeA $\top_{\!!}$}
\drawedge[ELside=l,ELpos=65,curvedepth=0](p1b,p2){\sizeA $\top_{\!!}$}
\drawedge[ELside=r,ELpos=70,curvedepth=-8](p1,p2){\sizeA $\top_{\!N}$}
\drawedge[ELside=l,ELpos=75,curvedepth=3](p1c,p2c){\sizeA $\top_{\!N}$}
\drawedge[ELside=r,ELpos=75,curvedepth=-6.5](p1d,p3d){\sizeA $\testr{Z}$}
\drawedge[ELside=l,ELpos=55,curvedepth=3](p1c,p3c){\sizeA $\testr{Z}$}

\end{gpicture}}
\caption{Reducing E-G-Reach[$Z_1,N_1$] to E-G-Reach[$Z_1$]}
\label{fig-simul5}
\end{figure}

%
\noindent We put
$Q'_1=Q_1\times(\Mess\cup\{\epsilon\})\times(\Mess\cup\{\epsilon\})$;
the components $x,y$ in a state $\inang{q,x,y}$ denote the contents of
the buffers for $\ttr$ and $\ttl$, respectively.
We now replace each rule $q\step{\ttr!x}q'$ with
$\inang{q,\epsilon,y}\step{\top}\inang{q',x,y}$ for all
$y\in\Mess\cup\{\epsilon\}$ (Fig.~\ref{fig-simul5} uses ``$\top_{\!!}$'' to highlight these transformed rules).
Each $q\step{\testr{N}}q'$ is replaced with
$\inang{q,x,y}\step{\top}\inang{q',x,y}$ for all
$x,y$ where $x\neq\epsilon$ (Fig.~\ref{fig-simul5} uses ``$\top_{\!N}$'').
Each $q\step{\testr{Z}}q'$ is replaced with
$\inang{q,\varepsilon,y}\step{\testr{Z}}\inang{q',\epsilon,y}$
(for all $y$). Analogously we replace all $q\step{\ttl!x}q'$,
 $q\step{\ttl:N}q'$, and $q\step{\ttl:Z}q'$.
 Moreover, we add the rules
 $\inang{q,x,y}\step{\ttr!x}\inang{q,\epsilon,y}$ (for $x\neq\epsilon$)
 and $\inang{q,x,y}\step{\ttl!y}\inang{q,x,\epsilon}$ (for
 $y\neq\epsilon$).
Our desired reduction is completed, by the next lemma:

\begin{lem}
	$S$ has a run
	$C_\init=(p_\init,q_\init,\epsilon,\epsilon)\step{*}
(p_\final,q_\final,u',v')=C_\final$
if, and only if, $S'$ has a
run
$C'_{\init}=(\inang{p_\init,\epsilon,\epsilon},\inang{q_\init,\epsilon,\epsilon},
\epsilon,\epsilon)\step{*}
(\inang{p_\final,\epsilon,\epsilon},
\inang{q_\final,\epsilon,\epsilon},u',v')=C'_\final$.
\end{lem}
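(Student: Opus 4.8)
The plan is to prove both directions by maintaining a single invariant along the simulation: the \emph{logical content} of each channel $\ttc$ in $S'$ --- the actual channel content \emph{followed by} the one-letter buffer $\textsc{B}_\ttc$ recorded in Sender's state --- coincides with the content of $\ttc$ in $S$. Since messages are appended at the tail and read at the head, and the buffer sits at the tail (it is emptied by $\inang{q,x,y}\step{\ttr!x}\inang{q,\epsilon,y}$, which appends $x$ to $\ttr$), this ``actual $\cdot$ buffer'' reading is the natural one, and in particular a flush leaves the logical content unchanged. I also track an auxiliary invariant: whenever a channel is nonempty in $S$, its buffer in $S'$ holds exactly the last (tail) letter of that content. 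This is precisely what lets Sender pass the replacements of the $N_1$ tests.

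For the ``if'' direction I would \emph{project} a given $S'$ run onto $S$. Each Sender state $\inang{q,x,y}$ maps to $q$ (Receiver states are untouched); a buffer write $\inang{q,\epsilon,y}\step{\top}\inang{q',x,y}$ becomes the original write $q\step{\ttr!x}q'$ (the logical content gains $x$), a flush becomes a stutter (logical content unchanged), a replaced nonemptiness step $\inang{q,x,y}\step{\top}\inang{q',x,y}$ with $x\neq\epsilon$ becomes the $N_1$ test (the content is nonempty because the buffer is), a $Z_1$ test maps to itself (both parts empty, hence the content is empty), and Receiver reads and $\ttl$-losses map to themselves. One checks step by step that each projected step is enabled in $S$ under the invariant; a Receiver read is enabled because an $S'$ read consumes the head of the \emph{actual} channel, which is also the head of the logical content, and $\ttl$-losses in $S'$ only strike the actual channel, so they are legal losses in $S$. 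Since the initial and final $S'$ configurations have both buffers empty, the logical contents there are exactly $\epsilon,\epsilon$ and $u',v'$, matching the $S$ endpoints.

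For the ``only if'' direction I would simulate a given $S$ run --- which, by Corollary~\ref{coro-head-lossy-runs}, may be taken head-lossy --- step by step under a \emph{lazy-flush} discipline: Sender flushes a buffer only when forced, namely just before a new write to the same channel (the buffer-write rule needs an empty buffer), or just before Receiver must read, or a head-loss must remove, a letter that currently sits in the buffer (the case where the actual channel is empty and the buffer holds the lone letter). Thus writes become flush-then-buffer-write; $Z_1$ tests are performed directly (the invariant makes both the channel and its buffer empty); Receiver reads and head-losses are preceded by a flush exactly when they target the single buffered letter, after which the content is empty and the invariant is restored. The delicate point, which I expect to be the main obstacle, is the replacement of an $N_1$ test: the $S'$ rule fires only when the buffer is nonempty, whereas $S$ only requires the channel to be nonempty. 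Here the auxiliary invariant does the work --- under lazy flushing the buffer always holds the tail letter of a nonempty channel, so it is nonempty precisely when $S$'s channel is. Head-lossiness is what keeps this invariant alive: a head-loss empties the buffer only when the content is a single letter (head $=$ tail), after which the content is empty, whereas an arbitrary mid-channel loss could strip the tail out of the buffer and leave a subsequent $N_1$ test unsatisfiable. Verifying that each of the finitely many step shapes preserves both invariants then completes the simulation and, with the previous direction, the equivalence.
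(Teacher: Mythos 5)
Your proof is correct and follows essentially the same route as the paper's: both directions rest on the correspondence $(\inang{p,x,y},q,u,v)\leftrightarrow(p,q,ux,vy)$ (channel content = actual content followed by buffer) and a step-by-step translation in which flushes are inserted lazily in one direction and erased in the other. You are in fact more explicit than the paper on the two delicate points---the tail-letter invariant that makes the weakened $N_1$ replacement fire exactly when the channel is nonempty, and the appeal to head-lossy runs (Corollary~\ref{coro-head-lossy-runs}) that protects this invariant from mid-channel losses---which the paper dispatches with ``we handle the other forms of $\delta_{i+1}$ in the obvious way.''
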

\begin{proof}
	$\Leftarrow$ : A run
	$C'_\init=C'_0\step{\delta'_1}C'_1\step{\delta'_2}C'_2
	\cdots \step{\delta'_n}C'_n=C'_\final$
of $S'$ can be simply translated
to a run of $S$ by the following transformation:
each $C'_i=(\inang{p_i,x,y},q_i,u_i,v_i)$
is translated to $C_i=(p_i,q_i,u_ix,v_iy)$,
each step $C'_{i-1}\step{\delta'_{i}}C'_{i}$ where $\delta'_{i}$
is
$\inang{q,\epsilon,y}\step{\top}\inang{q',x,y}$
is replaced with $C_{i-1}\step{\delta}C_{i}$
where $\delta$ is
$q\step{\ttr!x}q'$, etc. It can be easily checked that
the arising run $C_0\step{*}C_n$ is indeed a valid run of $S$
(that can be shorter because it ``erases'' the steps by the rules
$\inang{q,x,y}\step{\ttr!x}\inang{q,\epsilon,y}$
 and $\inang{q,x,y}\step{\ttl!y}\inang{q,x,\epsilon}$).

$\Rightarrow$ : A run $C_\init=C_0\step{\delta_1}C_1\step{\delta_2}C_2
	\cdots \step{\delta_n}C_n=C_\final$
of $S$ can be translated
into a run of $S'$ by a suitable transformation, starting
with $C'_0=(\inang{p_\init,\epsilon,\epsilon},\inang{q_\init,\epsilon,\epsilon},
\epsilon,\epsilon)$. Suppose that
$C_0\step{*}C_i=(p,q,ux,vy)$
has been translated to
$C'_0\step{*}C'_i=(\inang{p,x,y},q,u,v)$
(for some $x,y\in\Mess\cup\{\epsilon\}$).
If $\delta_{i+1}$ is $p\step{\ttr!a}p'$, then
we translate $C_i\step{\delta_i}C_{i+1}$
in the case $x=\epsilon$
to
$C'_{i}\step{}C'_{i+1}=(\inang{p',a,y},q,u,v)$
(using the rule
$\inang{p,\epsilon,y}\step{\top}\inang{p',a,y}$), and in the case
$x\neq\epsilon$ to
$C'_{i}\step{}(\inang{p,\epsilon,y},q,ux,v)\step{}
(\inang{p',a,y},q,ux,v)=C'_{i+1}$
(using the rules $\inang{p,x,y}\step{\ttr!x}\inang{p,\epsilon,y}$
and $\inang{p,\epsilon,y}\step{\top}\inang{p',a,y}$).
We handle the other forms of $\delta_{i+1}$
 in the obvious way;
e.g., if $\delta_{i+1}$ is a loss at (the head of) $\ttl$ while
$C'_i=(\inang{p,x,y},q,u,\epsilon)$, then we also use two steps:
$C'_{i}\step{}(\inang{p,x,\epsilon},q,u,y)\step{\los}
(\inang{p,x,\epsilon},q,u,\epsilon)=C'_{i+1}$.
This process obviously results in a valid run of $S'$.
\end{proof}

\subsection{Reducing E-G-Reach[$\bm{Z_1}$] to E-E-Reach[$\bm{Z_1}$]}
\label{ssec-egz1-eez1}

The idea of the reduction is similar to what was done in
section~\ref{ssec-ggz1n1-egz1n1}. The regular final conditions
``$u'\in U'$'' and ``$v'\in V'$'' are  checked by Receiver consuming the
final channel contents.  When Sender (guesses that it) is about to write the first
message that will be part of the final $u'$ in $\ttr$ (respectively,
the final $v'$ in $\ttl$), it signals this by inserting a special
symbol $\ttsharp$ just before. After it has written $\ttsharp$ to a channel,
Sender is not allowed to test that channel anymore.

Formally we start with an instance
$(S,p_\init,p_\final,q_\init,q_\final,\{\epsilon\},
\{\epsilon\},U',V')$
of E-G-Reach[$Z_1$], where
$S=(\{\ttr,\ttl\},\Mess,Q_1,\Delta_1,Q_2,\Delta_2)$.
With $S$ we associate
$S'$
where $\Mess'=\Mess\uplus\{\ttsharp\}$,
as sketched
in Fig.~\ref{fig-simul4}.
This  yields the instance
$(S',p'_\init,p'_\final,q_\init,q_{\!f},\{\epsilon\},
\{\epsilon\},\{\epsilon\},\{\epsilon\})$
of E-E-Reach[$Z_1$], for the new final Receiver state $q_{\!f}$.
\begin{figure}[htbp]
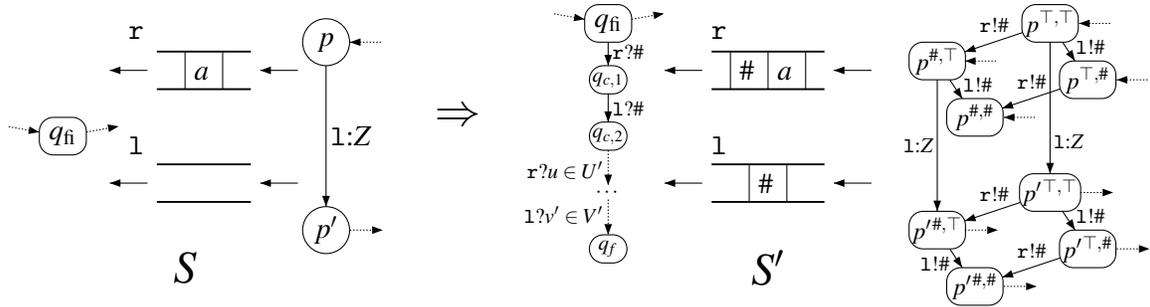

\centering
{\setlength{\unitlength}{1.25mm}
\begin{gpicture}(115,29)(3,-8)

\gasset{ilength=3.5,flength=3.5,iangle=0,fangle=0,Nw=5,Nh=5,Nmr=999,ELdist=0.5,loopdiam=3}

\node[Nh=4,Nmr=1.5](qfi)(6,8){$q_\final$}
\imark[dash={0.15 0.2}0,iangle=170](qfi)
\fmark[dash={0.15 0.2}0,fangle=10](qfi)

\node(p1)(34,18){$p$}
\node(p2)(34,-2){$p'$}
\drawedge[ELside=l](p1,p2){$\testl{Z}$}
\imark[dash={0.15 0.2}0](p1)
\fmark[dash={0.15 0.2}0](p2)

{\gasset{AHnb=0,Nframe=n}

 {\gasset{linewidth=0.2}
 \drawline(16,1)(26,1)
 \drawline(16,5)(26,5)
 \drawline(16,13)(26,13)
 \drawline(16,17)(26,17)}
 \node(name)(19,-6){\LARGE $S$}

\put(13,19){\makebox(0,0)[l]{channel $\ttr$}}
\put(13,7){\makebox(0,0)[l]{channel $\ttl$}}

 \drawline(19,13)(19,17) \put(20,14){$a$}
 \drawline(23,13)(23,17)

 \drawline[AHnb=1](15,15)(11,15)
 \drawline[AHnb=1](15,3)(11,3)
 \drawline[AHnb=1](31,15)(27,15)
 \drawline[AHnb=1](31,3)(27,3)
}

\node[Nframe=n](arrow)(48,10){\LARGE $\Rightarrow$}

\node[Nh=4,Nmr=1.5](qfi)(64,20){$q_\final$}
\imark[dash={0.15 0.2}0,iangle=170](qfi)
\fmark[dash={0.15 0.2}0,fangle=10](qfi)
\node[Nw=4,Nh=3,Nmr=1.5](qc1)(64,14){\sizeA $q_{c,1}$}
\node[Nw=4,Nh=3,Nmr=1.5](qc2)(64,8){\sizeA $q_{c,2}$}
\node[Nframe=n,w=4,Nh=1.5,Nmr=1.5](qc3)(64,2){\sizeA $\cdots$}
\node[Nw=4,Nh=3,Nmr=1.5](qc4)(64,-4){\sizeA $q_{\!f}$}
\drawedge(qfi,qc1){\sizeA $\ttr?\ttsharp$}
\drawedge(qc1,qc2){\sizeA $\ttl?\ttsharp$}
\drawedge[dash={0.15 0.2}0,ELpos=60,ELside=r](qc2,qc3){\sizeA $\ttr?u\in U'$}
\drawedge[dash={0.15 0.2}0,ELpos=40,ELside=r](qc3,qc4){\sizeA $\ttl?v'\in V'$}

\node[Nw=6,Nh=4,Nmr=1.5](p1)(111,20){\sizeB $p^{\top,\top}$}
\node[Nw=6,Nh=4,Nmr=1.5](p1b)(99,16){\sizeB $p^{\ttsharp,\top}$}
\node[Nw=6,Nh=4,Nmr=1.5](p1c)(115,14){\sizeB $p^{\top,\ttsharp}$}
\node[Nw=6,Nh=4,Nmr=1.5](p1d)(103,10){\sizeB $p^{\ttsharp,\ttsharp}$}
\node[Nw=6.5,Nh=4,Nmr=1.5](p2)(111,2){\sizeB $p'{}^{\top,\top}$}
\node[Nw=6,Nh=4,Nmr=1.5](p2b)(99,-2){\sizeB $p'{}^{\ttsharp,\top}$}
\node[Nw=6,Nh=4,Nmr=1.5](p2c)(115,-4){\sizeB $p'{}^{\top,\ttsharp}$}
\node[Nw=6,Nh=4,Nmr=1.5](p2d)(103,-8){\sizeB $p'{}^{\ttsharp,\ttsharp}$}
\imark[dash={0.15 0.2}0](p1)
\fmark[dash={0.15 0.2}0](p2)
\imark[dash={0.15 0.2}0](p1b)
\fmark[dash={0.15 0.2}0](p2b)
\imark[dash={0.15 0.2}0](p1c)
\fmark[dash={0.15 0.2}0](p2c)
\imark[dash={0.15 0.2}0](p1d)
\fmark[dash={0.15 0.2}0](p2d)

\drawedge[ELside=r,ELpos=45](p1,p1b){\sizeA $\ttr!\ttsharp$}
\drawedge[ELside=r,ELpos=45](p2,p2b){\sizeA $\ttr!\ttsharp$}
\drawedge[ELside=r,ELpos=45](p1c,p1d){\sizeA $\ttr!\ttsharp$}
\drawedge[ELside=r,ELpos=45](p2c,p2d){\sizeA $\ttr!\ttsharp$}
\drawedge[ELside=l,ELpos=62](p1,p1c){\sizeA $\ttl!\ttsharp$}
\drawedge[ELside=l,ELpos=62](p2,p2c){\sizeA $\ttl!\ttsharp$}
\drawedge[ELside=l,ELpos=62](p1b,p1d){\sizeA $\ttl!\ttsharp$}
\drawedge[ELside=r,ELpos=40](p2b,p2d){\sizeA $\ttl!\ttsharp$}
\drawedge[ELpos=70,ELdist=0.4,ELside=l](p1,p2){\sizeA $\testl{Z}$}
\drawedge[ELpos=50,ELdist=0.4,ELside=r](p1b,p2b){\sizeA $\testl{Z}$}

{\gasset{AHnb=0,Nframe=n}

 {\gasset{linewidth=0.2}
 \drawline(75,1)(87,1)
 \drawline(75,5)(87,5)
 \drawline(75,13)(87,13)
 \drawline(75,17)(87,17)}
 \node(name)(81,-6){\LARGE $S'$}

\put(75,19){\makebox(0,0)[l]{channel $\ttr$}}
\put(75,7){\makebox(0,0)[l]{channel $\ttl$}}

 \drawline(77,13)(77,17) \put(78,14){$\ttsharp$}
 \drawline(81,13)(81,17) \put(82,14){$a$}
 \drawline(85,13)(85,17)

 \drawline(79,1)(79,5) \put(80,2){$\ttsharp$}
 \drawline(83,1)(83,5)

 \drawline[AHnb=1](74,15)(70,15)
 \drawline[AHnb=1](74,3)(70,3)
 \drawline[AHnb=1](92,15)(88,15)
 \drawline[AHnb=1](92,3)(88,3)
}

\end{gpicture}}
\caption{Reducing E-G-Reach[$Z_1$] to E-E-Reach[$Z_1$]}
\label{fig-simul4}
\end{figure}

\noindent We define $S'=(\{\ttr,\ttl\},\Mess',Q'_1,\Delta'_1,Q'_2,\Delta'_2)$ with the Receiver part $Q'_2,\Delta'_2$  obtained from
$Q_2,\Delta_2$ by adding $q_{\!f}$ and other necessary states
and so called
\emph{cleaning
rules} so that
$q_{\!f}$ is reachable from $q_\final$ precisely
by sequences of read-steps
$\ttr?\ttsharp$, $\ttl?\ttsharp$,
$\ttr?a_1$, $\ttr?a_2$, $\dots$, $\ttr?a_{m_1}$,
 $\ttl?b_1$, $\ttl?b_2$, $\dots$, $\ttl?b_{m_2}$,
where $u'=a_1a_2\dots a_{m_1}\in U'$ and
 $v'=b_1b_2\dots b_{m_2}\in V'$.
(The new states and cleaning rules
mimic finite-state automata accepting $\{\ttsharp\}\cdot U'$ and
$\{\ttsharp\}\cdot V'$.)

The Sender part $Q'_1$, $\Delta'_1$ of $S'$ is obtained
from $Q_1,\Delta_1$ as follows.
We put $Q'_1 \egdef Q_1 \times \{\top,\ttsharp\}
\times \{\top,\ttsharp\}$, and $p'_\init=\inang{p_\init,\top,\top}$,
$p'_\final=\inang{p_\final,\ttsharp,\ttsharp}$.
A state $\inang{p,x,y}$ ``remembers'' if $\ttsharp$ has been already
written to $\ttr$ ($x=\ttsharp$) or not  ($x=\top$);
similarly for $\ttl$ (by $y=\ttsharp$ or $y=\top$).
For changing the status (just once for each channel),
$\Delta'_1$ contains the rules
$\inang{p,\top,y}\step{\ttr!\ttsharp}\inang{p,\ttsharp,y}$ and
$\inang{p,x,\top}\step{\ttl!\ttsharp}\inang{p,x,\ttsharp}$
for each $p\in Q_1$ and $x,y\in\{\top,\ttsharp\}$.
Moreover, any rule  $p\step{\ttc,\alpha}p'$ in $\Delta_1$ induces the
rules
$\inang{p,x,y}\step{\ttc,\alpha}\inang{p',x,y}$,
\emph{except for the rules}
$\inang{p,\ttsharp,y}\step{\testr{Z}}\dots$
and $\inang{p,x,\ttsharp}\step{\testl{Z}}\dots$
(i.e., $Z^\ttc_1$ tests are forbidden after $\ttsharp$ has been
written to $\ttc$).
The next lemma shows that the above reduction is correct.

\begin{lem}
\label{prop-ERZ12EEZ1}
$S$ has a run $(p_\init,q_\init,\epsilon,\epsilon) \step{*}
  (p_\final,q_\final,u',v')$ for some $u'\in U'$ and $v'\in V'$ if,
  and only if,
  $S'$ has a run $(\inang{p_\init,\top,\top},
  q_\init,\epsilon,\epsilon) \step{*}
  (\inang{p_\final,\ttsharp,\ttsharp},q_{\!f},\epsilon,\epsilon)$.
\end{lem}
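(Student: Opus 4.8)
The plan is to match runs of $S$ and $S'$ step-for-step, the only discrepancies being the two $\ttsharp$-markers written by Sender and the trailing \emph{cleaning phase} in which Receiver consumes $\ttsharp\,u'$ from $\ttr$ and $\ttsharp\,v'$ from $\ttl$. Throughout I would invoke Corollary~\ref{coro-head-lossy-runs} and work only with head-lossy runs, so that on each channel the consumed messages (read or lost) form a prefix of what was written while the messages surviving to the final configuration form the complementary suffix.

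For the $\Rightarrow$ direction, I would take a head-lossy witness $(p_\init,q_\init,\epsilon,\epsilon)\step{*}(p_\final,q_\final,u',v')$ of $S$. On the reliable channel $\ttr$ the final content $u'$ is exactly the suffix of writes never read, so I insert a step $\step{\ttr!\ttsharp}$ immediately before the write producing the first surviving symbol of $u'$ (or, if $u'=\epsilon$, at the very end), and symmetrically on $\ttl$. Tracking the two status components turns Sender's states into the product states of $S'$, with the run ending in $\inang{p_\final,\ttsharp,\ttsharp}$. The only $S'$-specific constraint to verify is that Sender performs no $Z^\ttc_1$ test after writing $\ttsharp$ to $\ttc$: since $\ttsharp$ is inserted immediately before the first surviving write, no step lies strictly between them, and from that write on $\ttc$ stays permanently nonempty in the original run, so no successful emptiness test could occur there. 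Appending the cleaning phase, which reads $\ttsharp$ from each channel and then the untouched words $u'\in U'$ and $v'\in V'$, yields the run to $(\inang{p_\final,\ttsharp,\ttsharp},q_{\!f},\epsilon,\epsilon)$.

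For the $\Leftarrow$ direction, I would take a head-lossy run of $S'$ reaching $(\inang{p_\final,\ttsharp,\ttsharp},q_{\!f},\epsilon,\epsilon)$ and first use Lemma~\ref{lem-commuting-steps} to pull every Sender step (and every loss) before the entire cleaning phase; Receiver's own steps are already in the order ``main reads, then cleaning reads''. A cleaning read on $\ttr$ commutes forward past any subsequent Sender step acting on $\ttl$ by \eqref{comm-nc}, and past any subsequent Sender step acting on $\ttr$ by \eqref{comm-as} — such a step is necessarily a \emph{write}, since once $\ttsharp$ has entered $\ttr$ (which it must, to enable the cleaning read) the rule $\inang{p,\ttsharp,y}\step{\testr{Z}}\dots$ is absent — and symmetrically for $\ttl$. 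After reordering, the boundary configuration is $(\inang{p_\final,\ttsharp,\ttsharp},q_\final,\ttsharp\,u',\ttsharp\,\bar v)$ with $v'\subword\bar v$, the content of $\ttr$ being read out exactly while losses on $\ttl$ may shrink $\bar v$ to the $v'\in V'$ actually read.

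It then remains to project the main phase to an $S$-run: erase the status components, delete the two $\ttsharp$-writes, and retain every real read, write, test and loss. The crux, which I expect to be the main obstacle, is the validity of Sender's $Z^\ttc_1$ tests under this projection, and this is exactly where the restriction ``no $Z^\ttc_1$ test after $\ttsharp$ is written to $\ttc$'' is needed: whenever such a test fires, $\ttsharp$ is not yet in $\ttc$, so $\ttc$ holds only real messages and its emptiness in $S'$ coincides with emptiness of the projected $S$-configuration. Since Receiver never reads a real message lying behind $\ttsharp$ during the main phase, the surviving real contents are untouched, so the projection is a valid $S$-run reaching $(p_\final,q_\final,u',\bar v)$; appending losses on $\ttl$ to reduce $\bar v$ to $v'$ (legitimate since $v'\subword\bar v$) produces the required run to $(p_\final,q_\final,u',v')$ with $u'\in U'$ and $v'\in V'$.
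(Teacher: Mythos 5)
Your proposal is correct and follows essentially the same route as the paper's proof: in the forward direction you insert the two $\ttsharp$-writes immediately before the first ``permanent'' write on each channel (or at the end when the final content is empty) and observe that no $Z^\ttc_1$ test can fire afterwards, and in the backward direction you use the Commutation Lemma to postpone the cleaning reads past Sender steps and losses (legitimate precisely because $Z^\ttc_1$ tests are disabled once $\ttsharp$ is on $\ttc$) and then project out the mode-changing steps. The only cosmetic differences are your (harmless, unneeded) appeal to head-lossy runs and the $\bar v$ versus $v'$ bookkeeping at the boundary configuration, which the paper avoids by moving all losses before the cleaning phase.
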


\begin{proof}
``$\Rightarrow$'':
Suppose
$C_0=(p_\init,q_\init,\epsilon,\epsilon)\step{\delta_1}C_1\cdots\step{\delta_n}C_n=(p_\final,q_\final,u',v')$,
where $u'\in U'$, $v'\in V'$, is a run of $S$.
We first transform it into a mimicking run
$C'_0=(\inang{p_\init,\top,\top},q_\init,\epsilon,\epsilon)\step{*}C'_n=(\inang{p_\final,\ttsharp,\ttsharp},\linebreak[0] q_\final,\ttsharp
u',\ttsharp v')$.
This amounts to find some right points
for inserting two steps of the forms
$(\inang{p,\top,y},\linebreak[0] q,\linebreak[0] u,v)\step{\ttr!\ttsharp}
(\inang{p,\ttsharp,y},q,u\ttsharp,v)$
and
$(\inang{p,x,\top},q,u,v)\step{\ttl!\ttsharp}
(\inang{p,x,\ttsharp},q,u,v\ttsharp)$
(in some order).
For the first one,
if  $u'\neq\epsilon$ then
we find
the least index $i_1$ such that
$\delta_{i_1+1}$ is some $\ttr!a$
and the written occurrence of $a$ is \emph{permanent},
i.e., $C_{i_1}\step{\ttr!a}C_{i_1+1}$
is the step that actually writes the symbol occurring
at the head of $u'$ in
$C_n=(p_\final,q_\final,u',v')$;
if $u'=\epsilon$ then we find the least $i_1$ such that no
$\ttr!a$ and no $\testr{Z}$ are performed in
$C_j\step{\delta_{j+1}}C_{j+1}$ with $j\geq i_1$.
For $\ttl$ (and $v'$) we find $i_2$ analogously.
In either case, after $i_1$ (respectively, $i_2$) the channel
$\ttr$ (respectively, $\ttl$) is not tested for $\testr{Z}$.

Having
$C'_0\step{*}C'_n=(\inang{p_\final,\ttsharp,\ttsharp},
q_\final,\ttsharp
u',\ttsharp v')$, the ``cleaning rules'' are used to continue
with
$C'_n\step{*}(\inang{p_\final,\ttsharp,\ttsharp},q_{\!f},\epsilon,\epsilon)$.

``$\Leftarrow$'':
Consider a run
$C_0=(\inang{p_\init,\top,\top},q_\init,\epsilon,\epsilon) \step{*}
(\inang{p_\final,\ttsharp,\ttsharp},q_{\!f},\epsilon,\epsilon)=C_n$
of $S'$.
Since Receiver is in state $q_\init$ at the beginning  and in
$q_{\!f}$ at the end, the Receiver step sequence must be composed of two
parts: the first from $q_\init$ to $q_\final$, and the second from
$q_\final$ to $q_{\!f}$; the latter corresponds
to a sequence of cleaning (reading) rules.
The cleaning steps can be commuted after
message losses
(recall Lemma~\ref{lem-commuting-steps}\eqref{comm-al}),
and after Sender's rules
(Lemma~\ref{lem-commuting-steps}\eqref{comm-as})
since the first cleaning steps are $\ttr?\ttsharp$ and
 $\ttl?\ttsharp$ and
Sender does not test the channels after having written $\ttsharp$ on
them.

Hence we can assume that the run $C_0\step{*}C_n$ of $S'$
has the form
\[
C_0=(\inang{p_\init,\top,\top},q_\init,\epsilon,\epsilon)
\:\:\step{*}\:\:
C_m=(\inang{p_\final,\ttsharp,\ttsharp},q_\final,\ttsharp u',\ttsharp v')
\:\:\step{*}\:\:
C_n=((\inang{p_\final,\ttsharp,\ttsharp},q_\final,\epsilon,\epsilon)
\]
with only Receiver steps in $C_m\step{*}C_n$, which entails $u'\in U'$
and $v'\in V'$. If we now just ignore the two mode-changing steps in
the subrun $C_0\step{*}C_m$ (relying on the fact that $S'$ has no $N$
tests) we obtain a new run $C_0\step{*}C'_m$ with $C'_m=
(\inang{p_\final,\top,\top},q_\final, u', v')$. This new run can be
directly translated into a run $(p_\init,q_\init,\epsilon,\epsilon)
\step{*} (p_\final,q_\final,u',v')$ in $S$.
\end{proof}




\section{Reducing E-E-Reach[${Z_1}$] to G-G-Reach[${Z_1^\ttl}$]}
\label{sec-ucsz-to-zl}

We now describe an
algorithm deciding E-E-Reach[${Z_1}$] instances, assuming
a procedure deciding instances of  G-G-Reach[${Z_1^\ttl}$].
This is a Turing reduction.
The main idea is to partition a run of a UCST[$Z_1$] system into
subruns that do not use the $Z_1^\ttr$ tests (i.e., that only use the
$Z_1^\ttl$ tests) and connect them at configurations where $\ttr$ is
known to be empty.

For a UCST $S=(\{\ttr,\ttl\},\Mess,Q_1,\Delta_1,Q_2,\Delta_2)$,
we let $\Confre$ be
the subset of configurations
in which $\ttr$ is empty; they are thus of the form
$(p,q,\epsilon,v)$.
We have put
$C=(p,q,u,v)\subword C'=(p',q',u',v')$ iff
$p=p'$, $q=q'$, $u=u'$, and $v\subword v'$.
Hence $\Confre$ is a well-quasi-ordered by
$\subword$, unlike $\Conf$.

Slightly abusing terminology, we
say that a subset $W\subseteq \Confre$ is
\emph{regular} if there are some state-indexed regular languages
$(V_{p,q})_{p\in Q_1,q\in Q_2}$ in $\Reg(\Mess)$ such that
$W=\{(p,q,\epsilon,v)~|~v\in V_{p,q}\}$. Such regular subsets of $\Confre$
can be finitely represented using, e.g., regular expressions or
finite-state automata.

$W\subseteq\Confre$ is \emph{upward-closed} (in $\Confre$) if $C\in W$, $C\subword C'$ and
$C'\in \Confre$ imply $C'\in W$. It is \emph{downward-closed} if
$\Confre\setminus W$ is upward-closed. The upward-closure $\up W$ of
$W\subseteq \Confre$ is the smallest upward-closed set that contains $W$. A
well-known consequence of Higman's Lemma (see Remark~\ref{remark-higman-and-Conf-subword}) is that upward-closed and
downward-closed subsets of $\Confre$ are regular, and that upward-closed
subsets can be canonically represented by their finitely many minimal
elements.

For $W\subseteq\Confre$, we let $\Pre^*(W)\egdef\{C\in\Confre~|~\exists
D\in W: C\step{*}D\}$: note that
$\Pre^*(W)\subseteq\Confre$
by our definition.
\begin{lem}
\label{lem:generreachUCSTZ1l}
If $S$ is a UCST[$Z_1^\ttl$] system and $W$ is a regular subset of
$\Confre$, then $\Pre^*(W)$ is upward-closed; moreover, given an
oracle for G-G-Reach[$Z^\ttl_1$], $\Pre^*(W)$ is computable
from $S$ and $W$.
\end{lem}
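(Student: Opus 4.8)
The plan is to treat the two assertions separately, disposing of upward-closure first and then reducing the computation of a finite basis to oracle queries.

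For upward-closure I would argue directly from the lossiness of $\ttl$. Suppose $C=(p,q,\epsilon,v)\in\Pre^*(W)$, witnessed by a run $C\step{*}D$ with $D\in W$, and let $C'=(p,q,\epsilon,v')\in\Confre$ with $C\subword C'$, i.e.\ $v\subword v'$. Since $v$ is a scattered subword of $v'$, it is obtained from $v'$ by finitely many single-symbol deletions, and each such deletion is a legal message-loss step on the lossy channel $\ttl$; hence $C'\step{\los}{}^{*}C$. Prepending this prefix to the witnessing run gives $C'\step{*}C\step{*}D\in W$, so $C'\in\Pre^*(W)$. Note that this argument uses neither the regularity of $W$ nor the restriction on tests. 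Thus $\Pre^*(W)$ is upward-closed in $\Confre$, and by the consequence of Higman's Lemma recalled above it is regular and represented by its finitely many minimal elements.

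For computability the task is to produce this finite basis using the oracle. Because $\subword$ on $\Confre$ forces equal control states, $\Pre^*(W)$ splits as $\bigcup_{(p,q)}\{(p,q,\epsilon,v)\mid v\in U_{p,q}\}$ where each $U_{p,q}\subseteq\Mess^*$ is upward-closed for the subword order, so it suffices to compute the finite basis of each of the finitely many $U_{p,q}$. The crucial observation is that every query needed to drive such a computation is itself a G-G-Reach[$Z_1^\ttl$] instance. Indeed, writing $W=\{(p',q',\epsilon,v')\mid v'\in V_{p',q'}\}$ with the $V_{p',q'}$ regular, the question ``does some $v$ in a regular set $R$ satisfy $(p,q,\epsilon,v)\in\Pre^*(W)$?'' unfolds to ``is there $v\in R$, a state pair $(p',q')$, and $v'\in V_{p',q'}$ with $(p,q,\epsilon,v)\step{*}(p',q',\epsilon,v')$?''. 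Fixing the initial pair $(p,q)$, initial contents $U=\{\epsilon\}$ and $V=R$, and ranging over the finitely many final pairs $(p',q')$ with $U'=\{\epsilon\}$, $V'=V_{p',q'}$, this is answered by finitely many oracle calls (all inputs are regular, so the instances are well-formed). Intermediate configurations of a witnessing run may leave $\Confre$, but this is harmless, since $\Pre^*$ and the oracle both constrain only the endpoints.

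With these queries in hand I would compute each basis by a standard application of the Valk--Jantzen lemma, instantiated for $(\Mess^*,\subword)$. Its hypothesis is that for every ideal $I$ of the order --- here a product $I=A_0^*b_1A_1^*\cdots b_nA_n^*$ with $A_i\subseteq\Mess$ and $b_i\in\Mess$ --- one can decide whether $I\cap U_{p,q}=\emptyset$; since each such $I$ is regular, this is precisely the query above with $R=I$, hence decidable through the oracle. The lemma then yields the finitely many minimal elements of $U_{p,q}$ effectively, and assembling these over all $(p,q)$ gives the promised regular presentation of $\Pre^*(W)$. I expect the main obstacle to be exactly the \emph{termination} of this basis computation: plain point-membership (the special case $R=\{v\}$) never tells us when all minimal elements have been discovered, and the gap is closed only by the stronger ideal-intersection queries. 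This is what Valk--Jantzen is designed to exploit, and what the full generality of G-G-Reach --- arbitrary regular initial and final conditions --- is exactly what lets us supply; the repeated, adaptively chosen oracle calls are also what makes this a Turing rather than a many-one reduction.
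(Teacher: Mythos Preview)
Your argument is correct and is essentially the paper's: upward-closure via message losses on $\ttl$, then computation of a finite basis by using the oracle to decide whether $\Pre^*(W)$ meets a given regular subset of $\Confre$. The paper spells the iteration out directly---grow a finite $F\subseteq\Pre^*(W)$, ask the oracle whether the regular set $\Confre\setminus\up F$ still meets $\Pre^*(W)$, enumerate to find a fresh point if so, and terminate by the wqo property---whereas you invoke the (generalised) Valk--Jantzen lemma, which packages exactly this mechanism.

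One small slip worth fixing: your parenthetical description of the ideals of $(\Mess^*,\subword)$ as $A_0^*\,b_1\,A_1^*\cdots b_n\,A_n^*$ is not right; with mandatory letters $b_i$ such a language is not downward-closed (it need not contain $\epsilon$). The ideals are products of atoms of the form $B^*$ (for $B\subseteq\Mess$) and $(a+\epsilon)$ (for $a\in\Mess$). This does not damage your proof, since all you actually use is that ideals---indeed all downward-closed subsets of $\Mess^*$---are regular, which they are; but the description as written should be corrected if you keep the explicit appeal to Valk--Jantzen.
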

\begin{proof}
We note that $\Pre^*(W)$ is upward-closed since $C\subword D$
is equivalent to
$D(\step{\los})^*C$, hence $D\in\Pre^*(C)$.

	We now assume that an oracle for G-G-Reach[$Z^\ttl_1$] is available, and
we construct a finite set $F \subseteq \Pre^*(W)$ whose
upward-closure $\up F$ is $\Pre^*(W)$. We build up $F$ in steps,
starting with $F_0 = \emptyset$; clearly $\up F_0 = \emptyset
\subseteq \Pre^*(W)$.
The $(i{+}1)$th iteration, starting with $F_i$,
proceeds as follows.

We put $W' \egdef \Confre \setminus \up F_i$; note that $W'$
is regular.
We check whether there exist some $C\in W'$ and $D\in W$ such that $C\step{*}D$;
this
can be decided using the oracle (it is a finite disjunction
of
G-G-Reach[$Z^\ttl_1$]
instances, obtained
by considering all
possibilities for Sender and Receiver states).
If the answer is ``no'', then $\up
F_i=\Pre^*(W)$; we then put $F=F_i$ and we are done.

Otherwise, the answer is ``yes'' and we look for some
concrete $C\in W'$
s.t.\ $C\step{*}D$ for some $D\in W$.
This can be done by enumerating all
$C\in W'$ and by using the oracle for G-G-Reach[$Z^\ttl_1$] again.
We are bound to eventually find such a $C$ since
$W'\cap\Pre^*(W)$ is not empty.

Once some $C$ is found, we set $F_{i+1}\egdef F_i\cup\{C\}$. Clearly
$F_{i+1}$, and so $\up F_{i+1}$, is a subset of $\Pre^*(W)$. By
construction, $\up F_0\subsetneq \up F_1\subsetneq
\up F_2\subsetneq\cdots$ is a strictly increasing sequence
of upward-closed sets. By the well-quasi-ordering property, this sequence
cannot be extended indefinitely:  eventually we will have
$\up F_i=\Pre^*(W)$, signalled by the answer ``no''.
\end{proof}

\begin{lem}
\label{lem-reduc-z1-to-z1l}
E-E-Reach[$Z_1$] is Turing reducible to G-G-Reach[$Z^\ttl_1$].
\end{lem}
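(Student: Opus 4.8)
The plan is to decompose an arbitrary run of the UCST[$Z_1$] system at the points where a $Z_1^\ttr$ test fires, and to re-assemble such runs by a saturation that repeatedly calls the procedure of Lemma~\ref{lem:generreachUCSTZ1l}. Write the given E-E-Reach[$Z_1$] instance as the question whether $C_\init=(p_\init,q_\init,\epsilon,\epsilon)\step{*}C_\final=(p_\final,q_\final,\epsilon,\epsilon)$ in $S$. Let $S'$ be the system obtained from $S$ by deleting all its $Z_1^\ttr$ rules; then $S'$ is a UCST[$Z_1^\ttl$] system, so Lemma~\ref{lem:generreachUCSTZ1l} (together with the oracle for G-G-Reach[$Z_1^\ttl$]) applies to it. The key observation is that whenever $S$ performs a $Z_1^\ttr$ test $p\step{\testr{Z}}p'$ the current configuration lies in $\Confre$ (channel $\ttr$ is empty) and the step changes only Sender's control state. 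Consequently every run of $S$ from a configuration in $\Confre$ to $C_\final$ factors as an alternation of $S'$-subruns and single $Z_1^\ttr$-steps, with all the cut points again lying in $\Confre$ (the intermediate configurations inside the $S'$-subruns may of course have a non-empty $\ttr$).

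I would then compute the set $R\egdef\{C\in\Confre\mid C\step{*}C_\final \text{ in } S\}$ as an increasing union $R=\bigcup_i R_i$, where $R_i$ collects those $C\in\Confre$ that reach $C_\final$ using at most $i$ occurrences of a $Z_1^\ttr$ test. For the base case $R_0\egdef\Pre^*(\{C_\final\})$, computed in $S'$ by Lemma~\ref{lem:generreachUCSTZ1l}; it is upward-closed and regular. For the inductive step, put $T_i\egdef\{D\in\Confre\mid D\step{\testr{Z}}D' \text{ for some } D'\in R_i\}$: since a $Z_1^\ttr$ step preserves the channel contents and merely relabels the Sender state, $T_i$ is regular and upward-closed whenever $R_i$ is, and a finite-state representation of it is read off directly from that of $R_i$. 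Then set $R_{i+1}\egdef R_i\cup\Pre^*(T_i)$, the $\Pre^*$ again taken in $S'$ and computed via the lemma; as a union of upward-closed sets it is upward-closed and regular, so the invariants propagate.

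Correctness of $R=\bigcup_i R_i$ is immediate from the decomposition above: a run uses only finitely many $Z_1^\ttr$ tests, so its source lies in some $R_i$, while conversely each $R_i\subseteq R$. Termination comes from the well-quasi-ordering of $(\Confre,\subword)$ (Remark~\ref{remark-higman-and-Conf-subword}): the chain $R_0\subseteq R_1\subseteq\cdots$ of upward-closed sets must stabilize, and stabilization is detectable by checking the inclusion $R_{i+1}\subseteq R_i$ on the finitely many minimal elements of $R_{i+1}$. Once $R_{i+1}=R_i$ we have $R_i=R$, and the algorithm answers ``yes'' exactly when $C_\init\in R$, which is a membership test for a regular subset of $\Confre$. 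The whole procedure makes only finitely many oracle calls, so this is indeed a Turing reduction to G-G-Reach[$Z_1^\ttl$].

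The main obstacle I anticipate is not any single gadget but maintaining the invariants through the saturation: I must be sure that each $\Pre^*$ may legitimately be taken over $\Confre$ even though the intermediate configurations of the $S'$-subruns leave $\Confre$ --- this is precisely what Lemma~\ref{lem:generreachUCSTZ1l} guarantees --- and that regularity together with upward-closedness is preserved at every step, so that both the oracle-based computation of $\Pre^*(T_i)$ and the Higman-based termination argument remain applicable. Making the ``at most $i$ tests'' bookkeeping line up with the upward-closed $\Pre^*$ operator, in particular checking that prepending one $Z_1^\ttr$ step to a configuration of $R_i$ keeps us inside $\Confre$, is the delicate point.
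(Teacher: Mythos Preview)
Your proposal is correct and follows essentially the same approach as the paper: both remove the $Z_1^\ttr$ rules to obtain a UCST[$Z_1^\ttl$] system $S'$, stratify the target set $\Pre^*(\{C_\final\})$ by the number of $Z_1^\ttr$ tests used, and compute the resulting increasing chain of upward-closed subsets of $\Confre$ via repeated calls to Lemma~\ref{lem:generreachUCSTZ1l}, terminating by Higman's Lemma. The paper's sets $T_k$ and $T'_k$ are exactly your $R_i$ and $T_i$, and the recursion $T_{k+1}=T_k\cup\Pre^*_{S'}(T'_k)$ is identical to your $R_{i+1}=R_i\cup\Pre^*(T_i)$.
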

\begin{proof}
	Assume $S=(\{\ttr,\ttl\},\Mess,Q_1,\Delta_1,Q_2,\Delta_2)$
	is a UCST[$Z_1$], and we ask
	if there is a run
	$C_\init=(p_\init, q_\init,\epsilon,\epsilon) \step{*}
	(p_\final, q_\final,\epsilon,\epsilon)=C_\final$. By
$S'$ we denote the  UCST[$Z_1^\ttl$] system
arising from $S$ by removing all $Z_1^\ttr$
rules.
Hence
Lemma~\ref{lem:generreachUCSTZ1l} applies to $S'$.
The set of configurations of $S$ and $S'$ is the same,
so there is no ambiguity in using the notation
$\Conf$ and $\Confre$.

We aim at computing $\Pre^*(\{C_\final\})$ for $S$.
For $k \geq 0$, let $T_k\subseteq\Confre$
be the set of $C\in\Confre$
for which there is a run $C \step{*} C_\final$ of $S$ with at most
$k$ steps that
are $Z_1^\ttr$ tests; hence $\up \{C_\final\}\subseteq T_0$ (by
message losses).
For each $k$, $T_k$ is upward-closed and $T_k
\subseteq T_{k+1}$. Defining $T = \bigcup_{k\in\Nat} T_k$, we note
that $C_\init
\step{*} C_\final$ iff $C_\init \in T$. Since $\Confre$ is well
quasi-ordered, the sequence $T_0\subseteq T_1\subseteq
T_2\subseteq\cdots$
eventually
stabilizes; hence there is $n$ such that $T_n = T_{n+1}$, which
implies that $T_n = T$.

By Lemma~\ref{lem:generreachUCSTZ1l}, and using an oracle for
G-G-Reach[$Z_1^\ttl$], we can compute
$\Pre^*_{S'}(\{C_\final\})$,
where the ``$S'$'' subscript indicates that we consider runs in $S'$, not
using $Z_1^\ttr$ tests.
Hence
$T_0 = \Pre^*_{S'}(\{C_\final\})$ is computable. Given $T_k$, we compute $T_{k+1}$ as follows.
We put
\begin{align*}
T'_k &=\{C\in\Confre~|~\exists D\in T_k: C\step{\testr{Z}}D\}
\\
&= \{(p, q, \epsilon, w) ~|~ \exists p' \in Q_1:  p \step{\testr{Z}} p'
\in \Delta_1 \text{ and } (p', q, \epsilon, w) \in T_k
\}
\:.
\end{align*}
Thus $T'_k \subseteq \Confre$ is the set of configurations from which one
can reach $T_k$ with one $Z_1^\ttr$ step. Clearly $T'_k$ is
upward-closed (since $T_k$ is) and can be computed from a finite representation of $T_k$,
e.g., its minimal elements. Then $T_{k+1} = T_k \cup \Pre^*_{S'}(T'_k)$,
and we use Lemma~\ref{lem:generreachUCSTZ1l} again to compute it.

Iterating the above process, we compute the sequence $T_0,
T_1, \ldots$, until the first $n$ such that $T_n = T_{n+1}$ (recall
that $T_n=T$ then).
Finally we check if $C_\init \in T_n$.
\end{proof}


\section{Reducing E-E-Reach[${Z_1^\ttl}$] to a Post Embedding Problem}
\label{sec-ucszl}

As stated in Theorem~\ref{thm-ZN-reduces-to-Z1} (see also Fig.~\ref{fig-roadmap}),
our series of reductions from  G-G-Reach[$Z_1, N_1$] to
E-E-Reach[$Z_1$] also reduces G-G-Reach[$Z_1^\ttl$] to
E-E-Reach[$Z_1^\ttl$]; this can be easily checked by recalling that
the respective
reductions do not introduce new tests.
In Subsection~\ref{subsec-ucszl-one}
we show
a (polynomial)
many-one reduction from E-E-Reach[$Z_1^\ttl$] to $\PEPpcod$, a generalization of Post's Embedding
Problem.
Since $\PEPpcod$ was shown
decidable
in~\cite{KS-msttocs}, our proof of Theorem~\ref{thm-main} will be thus
completed.
We also add
Subsection~\ref{subsec-ucszl-two} that shows a simple
reduction in the opposite direction, from $\PEPpcod$ to E-E-Reach[$Z_1^\ttl$].

\subsection{E-E-Reach[$\bm{Z_1^\ttl}$] reduces to $\bm{\PEPpcod}$}
\label{subsec-ucszl-one}

\begin{defi}[Post embedding with partial codirectness~\cite{KS-msttocs}]
$\PEPpcod$ is the question, given two finite alphabets $\Sigma,\Gamma$,
  two morphisms $u,v:\Sigma^*\to\Gamma^*$, and two regular
  languages $R,R'\in\Reg(\Sigma)$, whether there is $\sigma\in R$ (called
  a \emph{solution}) such
  that $u(\sigma)\subword v(\sigma)$, and such that furthermore
  $u(\sigma')\subword v(\sigma')$ for all suffixes $\sigma'$ of $\sigma$ that
  belong to $R'$.
\end{defi}

The above definition uses the same subword relation, denoted $\subword$, that
captures message losses. $\PEPpcod$ and $\PEP$ (which is the special
case where $R'=\emptyset$) are a variant of Post's
Correspondence Problem, where the question is whether there exists
$\sigma\in\Sigma^+$ such that $u(\sigma)=v(\sigma)$; see
also~\cite{barcelo2013} for applications in graph logics.

\begin{lem}
\label{lem-z1l2pep}
E-E-Reach[$Z_1^\ttl$] reduces to $\PEPpcod$ (via a polynomial
reduction).
\end{lem}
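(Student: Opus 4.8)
The plan is to reuse and extend the UCS-to-$\PEP$ reduction of~\cite{CS-omegapep}, treating the reliable channel $\ttr$ as a synchronization ``backbone'' and the lossy channel $\ttl$ via the subword relation, while the new $Z_1^\ttl$ emptiness tests become the partial-codirectness requirement. Concretely, I would first put witnessing runs into a convenient normal form using the Commutation Lemma (Lemma~\ref{lem-commuting-steps}): head-lossy, with Sender's and Receiver's $\ttl$-activity grouped around the successive $\ttr$-messages. Since $\ttr$ is reliable and both channels are empty at the start and at the end, the sequence of messages Sender writes to $\ttr$ equals, letter for letter, the sequence Receiver reads from $\ttr$; this lets me index the whole computation by the common word $r=r_1\cdots r_m\in\Mess^*$. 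I would then build the $\PEPpcod$ instance as follows: $\Sigma$ is an alphabet of ``blocks'', each carrying one matched $\ttr$-message together with the finite burst of $\ttl$-writes/$Z_1^\ttl$-tests performed by Sender and of $\ttl$-reads performed by Receiver around that $\ttr$-message; $R\in\Reg(\Sigma)$ is the (regular) set of block sequences whose Sender-projection is a legal Sender path from $p_\init$ to $p_\final$ and whose Receiver-projection is a legal Receiver path from $q_\init$ to $q_\final$. Crucially, reliable equality on $\ttr$ is enforced \emph{by construction}, because each block commits to a single symbol used simultaneously for $\ttr!r_k$ and $\ttr?r_k$, so $R$ is a plain product automaton needing no word-equality check. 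The two morphisms go to $\Gamma=\Mess$: $v(\sigma)$ reads off the $\ttl$-messages \emph{written} by Sender along $\sigma$, while $u(\sigma)$ reads off those \emph{read} by Receiver, so the global condition $u(\sigma)\subword v(\sigma)$ says exactly that Receiver's $\ttl$-reads form a lossy FIFO subword of Sender's $\ttl$-writes. Finally $R'\in\Reg(\Sigma)$ is the set of suffixes beginning immediately after a $Z_1^\ttl$-test block.

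The heart of the argument is to show that the $Z_1^\ttl$ tests correspond \emph{precisely} to the partial-codirectness clause. At a moment where Sender passes a $\testl{Z}$ test the channel $\ttl$ is empty, so every message written to $\ttl$ so far has already been read or lost; by the FIFO discipline the messages read afterwards can only be drawn from messages written afterwards, which is exactly $u(\sigma')\subword v(\sigma')$ for the suffix $\sigma'$ starting at that test. This is why \emph{codirectness} (a condition on suffixes) is the right notion, and why it suffices to impose it only \emph{partially}, at the marked positions collected in $R'$: between two tests the channel need not be empty and no constraint is required. I would stress the point that makes plain codirectness enough, rather than the stronger two-sided $\PEPdcd$: because $\ttr$ is reliable it buffers arbitrarily, so Sender may legitimately run ahead of Receiver. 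Consequently the ``backward'' direction (past reads being servable by past writes) is always realizable by letting Sender write early and discarding the surplus through message losses at the test; only the ``forward'' direction, that writes occurring after a flush must still cover the reads depending on them, is genuinely constraining, and that is precisely the suffix condition.

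With the instance in place I would prove equivalence in both directions. For completeness (run $\Rightarrow$ solution) I read the solution word $\sigma$ off a normal-form witnessing run: the global subword condition holds because $\ttl$ is lossy and empty at the end, and each codirect clause holds by the emptiness argument above. For soundness (solution $\Rightarrow$ run) I schedule a concrete run from a solution $\sigma$: Sender follows its projected path, running ahead as needed so that whenever Receiver reaches a $\ttl$-read the required message is already present; at each test I empty $\ttl$ by message losses, which is legitimate precisely because the codirect clause guarantees that all later Receiver reads match strictly later Sender writes. An embedding realizing $u(\sigma)\subword v(\sigma)$, refined on the marked suffixes by the codirect embeddings, drives this schedule. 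The reduction is clearly polynomial in the size of $S$.

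The step I expect to be the main obstacle is making the correspondence between a test and ``its'' codirect suffix \emph{exact}, in the presence of the desynchronization between Sender and Receiver. Because Sender may be many $\ttr$-messages ahead, at the real instant of a test Receiver has only consumed a prefix of the reads that the block-indexing assigns to the ``past'' of that test; one must check that the reads the gluing places in the suffix are neither too many (which would make the clause stronger than emptiness, breaking completeness) nor too few (which would make it weaker, breaking soundness). Pinning this down is exactly what the suitable intermediate notion between runs and $\PEPpcod$ solutions is for: it fixes canonical block boundaries and a canonical embedding so that the suffix subword condition is provably equivalent to ``$\ttl$ is empty at the test''. Verifying that $R$ and $R'$ are regular and that the morphisms are well defined is then routine, as is transferring decidability of $\PEPpcod$ from~\cite{KS-msttocs} back to E-E-Reach[$Z_1^\ttl$].
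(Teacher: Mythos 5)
Your high-level architecture coincides with the paper's: $u$ collects Receiver's $\ttl$-reads, $v$ collects Sender's $\ttl$-writes, $R$ enforces legal interleaved Sender/Receiver paths with matched $\ttr$-traffic, and $R'$ marks the positions of $\testl{Z}$ tests so that partial codirectness captures ``$\ttl$ is empty here''. Your explanation of why a \emph{suffix} condition (codirectness) is the right shape for an emptiness test is also the correct intuition. However, there are two genuine gaps. First, your alphabet is not finite as described: a ``block'' carrying the whole burst of $\ttl$-writes/reads performed around one $\ttr$-message must encode words of unbounded length, since Sender and Receiver can loop on $\ttl$-actions between two $\ttr$-actions. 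The paper avoids this by taking $\Sigma=\Delta_1\cup\Delta_2$ (letters are individual rules) and enforcing the block structure through a regular constraint $E_\ttr^*$, where $E_\ttr$ consists of single rules not touching $\ttr$ together with adjacent write/read pairs on $\ttr$. (Relatedly, $R'$ must be a property of the suffix word itself --- the paper uses $T_\ttl\cdot\Sigma^*$, ``starts with a test rule'' --- not ``begins immediately after a test'', which depends on context the suffix cannot see; this is minor and fixable since test rules contribute $\epsilon$ to both morphisms.)

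Second, and more seriously, the step you yourself flag as ``the main obstacle'' is precisely where the paper's proof does its real work, and you leave it unresolved. Your normalization plan --- reorder a run by the Commutation Lemma so that the $k$-th $\ttr$-write and $k$-th $\ttr$-read become adjacent --- cannot be carried out at the level of actual runs: advancing Receiver's $\ttr$-read past intervening Sender steps forces Receiver's intermediate $\ttl$-reads to occur before the corresponding $\ttl$-writes, so the reordered object is no longer a run. The paper introduces exactly the intermediate notion you allude to (``pre-solutions'', abstract rule sequences subject to conditions (c1)--(c5) in which the $\ttl$-constraints are global subword conditions rather than step-by-step feasibility), and proves two nontrivial claims: every advance-stable pre-solution lies in $E_\ttr^*$ (hence is a $\PEPpcod$ solution), and every postpone-stable pre-solution can be scheduled as a run in which $\ttl$ is empty before each test --- the latter hinging on showing $\rl(\sigma_i)\subword\wl(\sigma_i)$ separately for each segment between consecutive tests, via a switching argument using (c5). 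Without these two claims, the equivalence between the suffix condition at a marked position and the existence of a schedule emptying $\ttl$ at the test is asserted rather than proved, so the proposal as it stands does not establish correctness of the reduction.
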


We now prove the lemma.
The reduction from E-E-Reach[$Z_1^\ttl$] to $\PEPpcod$ extends
an earlier reduction from UCS to $\PEP$~\cite{CS-omegapep}. In our case the
presence of $Z_1^\ttl$ tests creates new difficulties.

We fix an instance $S=(\{\ttr, \ttl\}, \Mess, Q_1, \Delta_1, Q_2,
\Delta_2)$, $C_\init = (p_\init, q_\init, \epsilon, \epsilon)$,
$C_\final = (p_\final, q_\final, \epsilon, \epsilon)$
of E-E-Reach[$Z_1^\ttl$],
and we construct a $\PEPpcod$ instance $\Pcal=(\Sigma,\Gamma,u,v,R,R')$
intended to express the existence of a run from $C_\init$ to $C_\final$.

We first put $\Sigma\egdef\Delta_1\cup\Delta_2$ and $\Gamma\egdef\Mess$ so
that words $\sigma\in\Sigma^*$ are sequences of rules of $S$, and their
images $u(\sigma),v(\sigma)\in\Gamma^*$ are sequences of messages. With any
$\delta\in\Sigma$, we associate $\wr(\delta)$
defined by $\wr(\delta)=x$ if $\delta$ is a Sender rule
of the form $p\step{\ttr!x}p'$, and $\wr(\delta)=\epsilon$ in all other
cases. This is extended to sequences with
$\wr(\delta_1\cdots\delta_n)=\wr(\delta_1)\cdots\wr(\delta_n)$. In a
similar way we define $\wl(\sigma)\in\Mess^*$, the message
sequence written to
$\ttl$ by the rule sequence $\sigma$,
and $\rr(\sigma)$ and $\rl(\sigma)$, the sequences read by $\sigma$ from
$\ttr$ and $\ttl$, respectively.
We define $E_\ttr\in\Reg(\Sigma)$ as $E_\ttr\egdef E_1\cup E_2$ where
\begin{align*}
E_1 \egdef
& \{\delta\in\Sigma~|~\wr(\delta)=\rr(\delta)=\epsilon\}      \:,
\\
E_2 \egdef
& \{\delta_1\delta_2\in\Sigma^2~|~\wr(\delta_1)=\rr(\delta_2)\not=\epsilon\} \:.
\end{align*}
In other words, $E_1$ gathers the rules that do not write to or read from
$\ttr$, and $E_2$ contains all pairs of Sender/Receiver rules that write/read
the same letter
to/from $\ttr$.

Let now $P_1 \subseteq \Delta_1^*$ be the set of all sequences of Sender
rules of the form
$p_\init=p_0\step{..}p_1\step{..}p_2\cdots\step{..}p_n=p_\final$, i.e.,
the sequences corresponding to paths from $p_\init$ to
$p_\final$ in the graph defined by $Q_1$ and $\Delta_1$.
Similarly, let $P_2\subseteq \Delta_2^*$ be the set of all sequences of
Receiver rules that correspond to paths
from $q_\init$ to $q_\final$.
Since $P_1$ and $P_2$ are defined by finite-state systems, they
are regular languages. We write $P_1 \pjshuffle P_2$ to denote the set of all
interleavings (shuffles) of a word in $P_1$ with a word in $P_2$. This
operation is regularity-preserving, so $P_1 \pjshuffle P_2\in\Reg(\Sigma)$.
Let $T_\ttl \subseteq \Delta_1$ be the set of all Sender rules that test
the emptiness of $\ttl$
(which are the only test rules in $S$).
We define $R$ and $R'$ as the following regular
languages:
\begin{xalignat*}{2}
R&=E_\ttr^* \cap(P_1\pjshuffle P_2),
&
R'&= T_\ttl\cdot\bigl(\Delta_1\cup\Delta_2\bigr)^*.
\end{xalignat*}
Finally, the morphisms $u,v:\Sigma^*\rightarrow\Gamma^*$
are given by $u\egdef\rl$ and $v\egdef\wl$.
This finishes the construction of the $\PEPpcod$ instance $\Pcal=(\Sigma,\Gamma,u,v,R,R')$.

\medskip

We will now prove the correctness of this reduction, i.e., show that
$S$ has a run $C_\init\step{*}C_\final$ if, and only if, $\Pcal$ has a
solution. Before starting with the proof itself, let us illustrate
some aspects of the reduction by considering a schematic example (see
Fig.~\ref{fig-ex-for-pep-reduction}).
\begin{figure}[htbp]
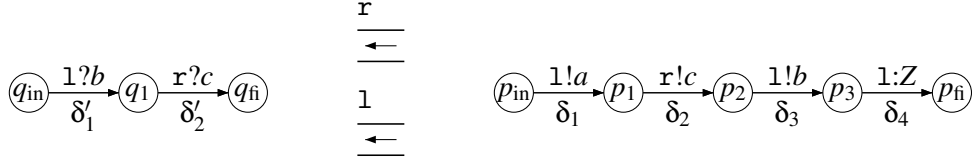

\centering
{\setlength{\unitlength}{1.04mm}
\begin{gpicture}(124,24)(-12,-2)

\gasset{ELside=r,ilength=4,flength=4,Nw=5,Nh=5,Nmr=999,loopdiam=5}

\node(q1)(-2,10){$q_\init$}
\node(q2)(12,10){$q_1$}
\node(q3)(26,10){$q_\final$}
\drawedge(q1,q2){$\delta'_1$}
\drawedge(q2,q3){$\delta'_2$}
{\gasset{ELside=l}
 \drawedge(q1,q2){$\ttl?b$}
 \drawedge(q2,q3){$\ttr?c$}
}

\node(p1)(60,10){$p_\init$}
\node(p2)(74,10){$p_1$}
\node(p3)(88,10){$p_2$}
\node(p4)(102,10){$p_3$}
\node(p5)(116,10){$p_\final$}
\drawedge(p1,p2){$\delta_1$}
\drawedge(p2,p3){$\delta_2$}
\drawedge(p3,p4){$\delta_3$}
\drawedge(p4,p5){$\delta_4$}
{\gasset{ELside=l}
 \drawedge(p1,p2){$\ttl!a$}
 \drawedge(p2,p3){$\ttr!c$}
 \drawedge(p3,p4){$\ttl!b$}
 \drawedge(p4,p5){$\testl{Z}$}
}

{\gasset{linewidth=0.2}
\drawline[AHnb=0](40,2)(46,2)
\drawline[AHnb=0](40,6)(46,6)
\drawline[AHnb=0](40,14)(46,14)
\drawline[AHnb=0](40,18)(46,18)}
\drawline(45,16)(41,16)
\drawline(45,4)(41,4)
\put(40,20){$\ttr$}
\put(40,8){$\ttl$}

\end{gpicture}}
\caption{A schematic UCST[$Z_1^\ttl$] instance}
\label{fig-ex-for-pep-reduction}
\end{figure}

\noindent Let us consider $\sigma_\sol = \delta_1 \delta'_1 \delta_2 \delta'_2
\delta_3 \delta_4$ and check whether it is a solution of the $\Pcal$
instance  obtained by our reduction. For this, one first checks that
$\sigma_\sol\in R$, computes $u(\sigma_\sol)=\rl(\sigma_\sol)=b$ and
check that $b\subword v(\sigma_\sol)=\wl(\sigma_\sol)=ab$. There
remains to check the suffixes of $\sigma_\sol$ that belong to $R'$,
i.e., that start with a $\testl{Z}$ rule. Here, only
$\sigma'=\delta_4$ is in $R'$,
and indeed $u(\sigma')=\epsilon\subword v(\sigma')$. Thus
$\sigma_\sol$ is a solution.

However, a solution like $\sigma_\sol$ does not directly correspond to
a run of $S$. For instance, any run $C_\init\step{*}C_\final$ in the
system from Fig.~\ref{fig-ex-for-pep-reduction} must use $\delta_3$
(write $\ttb$ on $\ttl$) before $\delta'_1$ (read it).

Reciprocally,  a run $C_\init\step{*}C_\final$ does not directly lead
to a solution. For example, on the same system the following run
\begin{equation}
\tag{$\pi$}
C_\init\step{\delta_1}C_1
\step{\delta_2}C_2\step{\delta_3}C_3=(p_3,q_\init,c,ab)
\step{\los}C_4=(p_3,q_\init,c,b)\step{\delta'_1}C_5\step{\delta_4}C_6\step{\delta'_2}C_\final
\:
\end{equation}
has an action in ``$C_3\step{\los}C_4$'' that is not accounted for in
$\Sigma$ and cannot appear in solutions of $\Pcal$. Also, the
$\Sigma$-word $\sigma_\pi= \delta_1 \delta_2 \delta_3 \delta'_1
\delta_4 \delta'_2$ obtained from $\pi$ is not a solution. It belongs
to $P_1\pjshuffle P_2$ but not to $E_\ttr^*$ (which requires that each
occurrence of $\delta_2$ is immediately followed by some
$.\step{\ttr?c}.$ rule). Note that $\sigma_\sol$ had $\delta_2$
followed by $\delta'_2$, but it is impossible in a run
$C_\init\step{*}C_\final$ to have $\delta_2$ immediately followed by
$\delta'_2$.

With these issues in mind, we introduce a notion bridging the difference between
runs of $S$ and solutions of $\Pcal$. We call $\sigma\in (\Delta_1\cup\Delta_2)^*$ a
\emph{pre-solution} if the following five conditions hold:
\begin{enumerate}[label=(c\arabic*)]
\item
$\sigma\in P_1\pjshuffle P_2$;
\item
$\rr(\sigma)=\wr(\sigma)$;
\item
$\rr(\sigma_1)$ is a prefix of $\wr(\sigma_1)$ for each prefix
$\sigma_1$
of $\sigma$;
\item
$\rl(\sigma)\subword\wl(\sigma)$;
\item
$\rl(\sigma_2)\subword \wl(\sigma_2)$ for each factorization $\sigma=\sigma_1 \delta
  \sigma_2$ where $\delta\in T_\ttl$ (i.e., $\delta$ is a $\testl{Z}$ rule).
\end{enumerate}
A pre-solution $\sigma$ has a \emph{Receiver-advancing switch} if
$\sigma=\sigma_1\delta\delta'\sigma_2$ where $\delta$ is a Sender rule, $\delta'$ is
a Receiver rule, and $\sigma'=\sigma_1\delta'\delta\sigma_2$ is again a
pre-solution. A \emph{Receiver-postponing switch} is defined analogously, for
$\delta$ being a Receiver rule and $\delta'$ being a Sender rule.
For example, the sequence $\sigma_\pi$ above is a pre-solution. It has a
Receiver-advancing switch on $\delta_3$ and $\delta'_1$, and one on $\delta_4$ and
$\delta'_2$. Note that when
$\sigma$ is a pre-solution, checking whether a potential Receiver-advancing or
Receiver-postponing switch leads again to a pre-solution only requires checking (c3)
or, respectively, (c5). Considering another example, $\sigma_\sol$, being a solution
is a pre-solution. It has two Receiver-postponing switches but only one
Receiver-advancing switch since switching $\delta_2$ and $\delta'_2$ does not
maintain (c3).

It is obvious that if there is a pre-solution $\sigma$ then there is an
\emph{advance-stable pre-solution} $\sigma'$, which means that $\sigma'$ has no
Receiver-advancing switch; there is also a \emph{postpone-stable pre-solution}
$\sigma''$ which has no Receiver-postponing switch.

\begin{claim}
\label{claim-advance-stable-to-sol}
Any  advance-stable pre-solution $\sigma$
is in $E_\ttr^*$, and it is thus a solution of $\Pcal$.
\end{claim}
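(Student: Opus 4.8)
The plan is to use advance-stability to force a very rigid shape on $\sigma$, from which $\sigma\in E_\ttr^*$ is immediate. The engine is a precise criterion for when a Receiver-advancing switch is legal. Fix an adjacency $\sigma=\sigma_1\delta\delta'\sigma_2$ with $\delta$ a Sender rule and $\delta'$ a Receiver rule. As already observed, the swapped word $\sigma_1\delta'\delta\sigma_2$ automatically satisfies (c1), (c2), (c4), (c5), so only (c3) can break, and it can only break at the one genuinely new prefix $\sigma_1\delta'$. First I would record that (c3) for $\sigma$ at the prefix $\sigma_1\delta\delta'$ says $\rr(\sigma_1)\rr(\delta')$ is a prefix of $\wr(\sigma_1)\wr(\delta)$, whereas the switch needs $\rr(\sigma_1)\rr(\delta')$ to be a prefix of $\wr(\sigma_1)$. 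A short length count then shows that the switch is \emph{illegal} precisely when $\wr(\delta)=\rr(\delta')\neq\epsilon$ and $\rr(\sigma_1)=\wr(\sigma_1)$; that is, when $\delta$ writes a letter to $\ttr$ that $\delta'$ immediately reads back and $\ttr$ is empty just before $\delta$. In every other case (in particular $\wr(\delta)=\epsilon$, or $\delta'$ reads an older pending letter) the switch is legal.

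Feeding in advance-stability, every Sender-then-Receiver adjacency of $\sigma$ must land in this illegal case. Hence whenever a Sender rule $\delta$ is immediately followed by a Receiver rule $\delta'$, the pair is a matched $\ttr$-transfer, $\delta\delta'\in E_2$, and $\ttr$ is empty just before $\delta$, hence again just after $\delta'$. Next I would pin down the $\ttr$-reads. Take any Receiver rule that reads from $\ttr$ and consider the maximal block of consecutive Receiver rules ending at it. This block cannot start at the beginning of $\sigma$, since over an all-Receiver prefix $\wr$ is empty, so by (c3) no $\ttr$-read could occur there. Thus the block is preceded by a Sender rule $\zeta$, and by the previous step the pair formed by $\zeta$ and the first rule of the block is a matched transfer that leaves $\ttr$ empty. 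Since the remaining rules of the block are all Receiver and write nothing to $\ttr$, the channel $\ttr$ stays empty through them, so by (c3) none of them can read from $\ttr$. Therefore the only $\ttr$-read in the block is its first rule, i.e. our read is immediately preceded by its matching write $\zeta$.

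This shows that every $\ttr$-read sits immediately after its matching $\ttr$-write. As (c2) equates the numbers of $\ttr$-writes and $\ttr$-reads, the assignment read $\mapsto$ preceding write is a bijection, so dually every $\ttr$-write is immediately followed by its matching read. Scanning $\sigma$ left to right then decomposes it into blocks that are either such a write-read pair (an $E_2$ block) or a single rule with no $\ttr$-activity (an $E_1$ block), which is exactly $\sigma\in E_\ttr^*$. To finish, (c1) gives $\sigma\in P_1\pjshuffle P_2$, so $\sigma\in R=E_\ttr^*\cap(P_1\pjshuffle P_2)$; (c4) is $u(\sigma)=\rl(\sigma)\subword\wl(\sigma)=v(\sigma)$; and for a suffix $\sigma'=\delta\sigma_2\in R'$ (so $\delta\in T_\ttl$ is a $\testl{Z}$ rule, which neither reads nor writes on $\ttl$) condition (c5) yields $u(\sigma')=\rl(\sigma_2)\subword\wl(\sigma_2)=v(\sigma')$. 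Hence $\sigma$ is a solution of $\Pcal$.

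The hard part will be the block argument of the second paragraph. A single Sender/Receiver swap says nothing about two adjacent Receiver rules, so advance-stability alone does not obviously keep $\ttr$-reads glued to their writes; the decisive leverage is the \emph{emptiness-of-$\ttr$} byproduct of the switch criterion, which combines with the FIFO constraint (c3) to forbid a read trailing behind inside a Receiver block. Getting this interaction right, rather than the routine prefix computations in the first paragraph, is the delicate point.
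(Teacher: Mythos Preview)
Your proof is correct and takes a different route from the paper's. The paper argues by contradiction: it lets $\sigma_1$ be the longest prefix of $\sigma$ lying in $E_\ttr^*$, observes that $\rr(\sigma_1)=\wr(\sigma_1)$, and shows that if the remainder $\sigma_2$ is nonempty then its first rule must be an $\ttr$-write, the first $\ttr$-read in $\sigma_2$ occurs at some position $\ell>2$, and the first Sender-then-Receiver boundary before position $\ell$ is a legal Receiver-advancing switch, contradicting advance-stability. This is shorter and avoids both your explicit characterisation of illegal switches and the block/bijection bookkeeping. Your approach, by contrast, extracts the full structural picture of an advance-stable pre-solution (every Sender/Receiver adjacency is an $E_2$ pair with $\ttr$ empty on either side; every $\ttr$-read is glued to its write; (c2) then forces the converse), which is more informative and also makes the role of (c2) transparent. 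You further spell out why $\sigma$ meets the $\PEPpcod$ conditions on $R$, $R'$, which the paper leaves implicit. Both arguments rest on the same key observation, namely that a Receiver-advancing switch can fail (c3) only when it would separate a matched write/read pair at an empty-$\ttr$ moment; the paper deploys this once at a carefully chosen position, while you apply it uniformly to all adjacencies.
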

\begin{proof}
Let us write an advance-stable pre-solution $\sigma$ as $\sigma_1\sigma_2$ where
$\sigma_1$ is the longest prefix such that $\sigma_1\in E_\ttr^*$; hence
$\rr(\sigma_1)=\wr(\sigma_1)$ by the definition of $E_r=E_1\cup E_2$. Now suppose
$\sigma_2\neq\epsilon$. Then $\sigma_2=\delta_1\delta_2\cdots\delta_k$ where
$\delta_1\not\in E_1$. Since $\rr(\sigma_1)=\wr(\sigma_1)$, $\delta_1$ must be of
the form $.\step{\ttr!x}.$ to guarantee (c3). Let us pick the smallest $\ell$ such
that $\delta_\ell=.\step{\ttr?x}.$ ---which must exist by (c2)--- and note that
$\ell>2$ since $\delta_1\delta_2\not\in E_2$ by maximality of $\sigma_1$. If we now
pick the least $j$ in $\{1,\ldots, \ell{-}1\}$ such that $\delta_j$ is a Sender rule
and $\delta_{j+1}$ is a Receiver rule, then switching $\delta_j$ and $\delta_{j+1}$
leads again to a pre-solution as can be checked by inspecting (c1--c5). This
contradicts the assumption that $\sigma$ is an advance-stable pre-solution.
\end{proof}
\begin{claim}
\label{claim-postpone-stable-to-run}
If $\sigma=\delta_1\ldots\delta_n$ is a postpone-stable pre-solution, $S$ has a run
of the form
$C_\init\step{\delta_1}\step{\los^*}\cdots\step{\delta_n}\step{\los^*}C_\final$.
\end{claim}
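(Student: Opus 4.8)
The plan is to realize the pre-solution $\sigma=\delta_1\cdots\delta_n$ as a run by firing its rules in the listed order and inserting message-loss steps on $\ttl$ after each step. The reliable channel $\ttr$ comes essentially for free. By (c1) the Sender- and Receiver-subsequences of $\sigma$ are genuine paths $p_\init\to p_\final$ and $q_\init\to q_\final$, so the two state components evolve correctly and the run ends in $(p_\final,q_\final)$. By (c3), at every prefix the word $\rr$ read from $\ttr$ is a prefix of the word $\wr$ written to $\ttr$, so each ``$\ttr?x$'' rule finds $x$ at the head of $\ttr$ and no loss on $\ttr$ is ever needed; and by (c2) the channel $\ttr$ is empty again at the end. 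Thus the whole difficulty is concentrated on $\ttl$, where I must guarantee, along the run, that (i) every ``$\ttl?x$'' rule finds $x$ available and (ii) every $\testl{Z}$ rule finds $\ttl$ empty.

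Both (i) and (ii) follow from a single combinatorial statement that I would isolate as the core lemma: writing $\tau_i$ for the position of the last $\testl{Z}$ rule strictly before $i$ (and $\tau_i=0$ if there is none), for every $i$ one has $\rl(\sigma_{(\tau_i,i]})\subword\wl(\sigma_{(\tau_i,i]})$. That is, \emph{inside each maximal block delimited by $\testl{Z}$ rules, the sequence read from $\ttl$ is a subword of the sequence written to $\ttl$ at every prefix}. This is exactly the condition under which a lossy fifo channel can serve the reads of a block using only the writes of the same block, so it yields a matching of each ``$\ttl?x$'' to an earlier ``$\ttl!x$'' that is causal (read after its matched write), order-preserving, and \emph{non-crossing} (no matched pair straddles a $\testl{Z}$ rule).

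I would prove this core lemma by contradiction, in the style of Claim~\ref{claim-advance-stable-to-sol} but now exploiting postpone-stability and (c5). Let $r$ be the least index where the block-prefix condition fails and put $\tau=\tau_r$; then $\delta_r=\ttl?a$, one has $A\subword B$ but $Aa\not\subword B$, where $A=\rl(\sigma_{(\tau,r)})$ and $B=\wl(\sigma_{(\tau,r)})$. If no Sender rule occurs after $r$ then there are no $\ttl$-writes after $r$, so $\wl(\sigma_{>\tau})=B$ while $\rl(\sigma_{>\tau})$ already has the prefix $Aa\not\subword B$, directly contradicting (c5) at $\tau$ (or (c4) if $\tau=0$). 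Otherwise let $s$ be the first Sender position after $r$, so $\delta_{r+1},\dots,\delta_{s-1}$ are Receiver rules (hence no $\ttl$-write lies in $(r,s)$). I would consider the Receiver-postponing switch of $\delta_{s-1}$ with $\delta_s$: if it yields a pre-solution we contradict postpone-stability, and, as already observed, such a switch can fail only through (c5), which forces $\delta_s=\testl{Z}$ and $\delta_{s-1}=\ttl?c$ with $c\cdot\rl(\sigma_{>s})\not\subword\wl(\sigma_{>s})$ while $\rl(\sigma_{>s})\subword\wl(\sigma_{>s})$. To rule this ``blocked'' case out I would read (c5) at the \emph{previous} test $\tau$ (or (c4) when $\tau=0$): since there are no $\ttl$-writes in $[r,s]$ and $\delta_s$ writes nothing, it says $A\,a\,M\,P\subword B\,Q$, where $M=\rl(\sigma_{(r,s)})$, $P=\rl(\sigma_{>s})$, $Q=\wl(\sigma_{>s})$. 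A one-line subword lemma (if $A\subword B$, $Aa\not\subword B$ and $AaMP\subword BQ$, then $aMP\subword Q$: in any embedding the prefix mapped into $B$ is too short to contain the distinguished $a$, so the whole tail $aMP$ lands in $Q$) gives $aMP\subword Q$, whence $cP\subword aMP\subword Q$ because $aM$ ends in $c$ --- contradicting the blocked condition. I expect this $\testl{Z}$/``blocked'' interaction to be the main obstacle, and combining the subword lemma with (c5) read at the preceding test is exactly what overcomes it.

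Finally I would assemble the run from the matching. Processing $\delta_1,\dots,\delta_n$ in order, I keep $\ttl$ equal to the sequence of already-written messages that are still \emph{pending}, i.e.\ matched to a not-yet-performed read; every write that the matching leaves unmatched is lost at once, as a loss step placed in the $\los^*$ block right after it. Then the pending messages occupy $\ttl$ in matching order, so each ``$\ttl?x$'' step finds $x$ at the head, giving (i); and non-crossing guarantees that every message written inside a block is read before that block's closing $\testl{Z}$, so when a $\testl{Z}$ rule is reached no pending message remains and $\ttl$ is empty, giving (ii). Together with the $\ttr$ bookkeeping above, and a final batch of losses emptying $\ttl$ after $\delta_n$, this produces a run $C_\init\step{\delta_1}\step{\los^*}\cdots\step{\delta_n}\step{\los^*}C_\final$ of the required shape.
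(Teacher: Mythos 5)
Your proof is correct, and its skeleton is the same as the paper's: realize $\sigma$ as a run rule by rule, dispose of $\ttr$ via (c2)/(c3), and reduce the $\ttl$ difficulty to a per-block subword condition that is proved by exhibiting a Receiver-postponing switch forbidden by stability. The difference is in which invariant you establish and how much structure you extract from postpone-stability. The paper first notes that postpone-stability forces, inside each block $\sigma_i$ between consecutive $T_\ttl$ rules, all Sender rules to precede all Receiver rules (any Receiver/Sender adjacency inside a block is a valid postponing switch, the only possible obstruction being (c5) at a $T_\ttl$ rule); given that normal form, the block-level condition $\rl(\sigma_i)\subword\wl(\sigma_i)$ already suffices to serve the reads, and it is obtained by switching the last rule of a bad block with the test that closes it. You instead prove the stronger prefix-wise condition $\rl(\sigma_{(\tau_i,i]})\subword\wl(\sigma_{(\tau_i,i]})$ for every position $i$, which supports a leftmost (causal, order-preserving, non-crossing) matching with no assumption on the order of writes and reads inside a block; the price is a two-case contradiction argument in which the first Sender rule $\delta_s$ after the failure point is forced by the failed switch to be a $\testl{Z}$ rule, and your embedding lemma applied to (c5) at the preceding test rules that case out. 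Both arguments hinge on the same leftmost-embedding observation (the paper uses it implicitly to derive its ($\star\star$)); the paper's version is a bit shorter because it exploits the writes-before-reads normal form up front, while yours is self-contained at the level of the matching and never needs that normal form in the final run construction.
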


\begin{proof}
Assume that we try to fire $\delta_1,\ldots,\delta_n$ in that order, starting from
$C_\init$, and sometimes inserting message losses. Since $\sigma$ belongs to
$P_1\pjshuffle P_2$, we can only fail because at some point the current channel
contents does not allow the test or the read action carried by the next rule to be
fired, i.e., not because we end up in a control state that does not carry the next
rule.

So let us consider channel contents, starting with $\ttr$. For $i=0,\ldots,n$, let
$x_i=\rr(\delta_1\ldots\delta_i)$ and $y_i=\wr(\delta_1\ldots\delta_i)$. Since
$\sigma$ satisfies (c3), $y_i$ is some $x_ix'_i$ (and $x'_0=\epsilon$). One can
easily verify by induction on $i$ that after firing $\sigma_1\ldots\sigma_i$ from
$C_\init$, $\ttr$ contains exactly $x'_i$. In fact (c3) implies that if
$\delta_{i+1}$ reads on $\ttr$, it must read the first letter of $x'_i$ (and
$\delta_{i+1}$ cannot be a read on $\ttr$ when $x'_i=\epsilon$).

Now, regarding the contents of $\ttl$, we can rely on (c4) and conclude that the
actions in $\sigma$ write on $\ttl$ everything that they (attempt to) read, but we
do not know that messages are written \emph{before} they are needed for reading,
i.e., we do not have an equivalent of (c3) for $\ttl$. For this, we rely on the
assumption that $\sigma$ is postpone-stable. Write $\sigma$ under the form $\sigma_0
z_1\sigma_1 z_2\sigma_2 \ldots z_k\sigma_k$ where the $z_i$'s are the test rules
from $T_\ttl$, and where the $\sigma_i$'s factors contain no test rules. Note that,
inside a $\sigma_i$, all Sender rules occur before all Receiver rules thanks to
postpone-stability.

We claim that $\rl(\sigma_i)\subword\wl(\sigma_i)$ for all
$i=0,\ldots,k$: assume, by way of contradiction, that $\rl(\sigma_i)\not\subword\wl(\sigma_i)$ for some $i\in\{0,\ldots,k\}$
and let $\delta$ be the last rule in $\sigma_i$. Necessarily $\delta$ is a reading
rule. Now (c4) and (c5) entail $i<k$ and 
\[\rl(\sigma_{i} z_{i+1} \sigma_{i+1}
\ldots\sigma_k)\subword\wl(\sigma_{i} z_{i+1} \sigma_{i+1} \ldots\sigma_k)\:.\]
Then
$\rl(\sigma_i)\not\subword \wl(\sigma_i)$ entails
\begin{gather}
\tag{$\star\star$}
\label{eq-before-switch}
 \rl(\delta z_{i+1}\sigma_{i+1}\ldots z_k\sigma_k)\subword\wl(\sigma_{i+1}\ldots z_k\sigma_k)\:.
\end{gather}
There is now a Receiver-postponing switch since \eqref{eq-before-switch} ensures
that (c5) holds after switching $\delta$ and $z_{i+1}$, which contradicts the
assumption that $\sigma$ is postpone-stable.

Now, with $\rl(\sigma_i)\subword\wl(\sigma_i)$, it is easy to build a run
$C_\init\step{\delta_1}\step{\los^*}\cdots\step{\delta_n}\step{\los^*}C_\final$ and
guarantee that $\ttl$ is empty before firing any $z_i$ rule.
\end{proof}

We now see that our reduction is correct. Indeed, if $C_\init\step{\sigma}C_\final$
is a run of $S$ then $\sigma$ with all occurrences of $\los$ removed
is a pre-solution; and there is also an advance-stable
pre-solution, i.e., a solution of $\Pcal$. On the other hand, if $\sigma$ is a
solution of $\Pcal$ then $\sigma$ is a pre-solution, and there is also a
postpone-stable pre-solution, which corresponds to a run $C_\init\step{*}C_\final$
of $S$.
This finishes the proof of Lemma~\ref{lem-z1l2pep}, and of Theorem~\ref{thm-main}.

\subsection{$\bm{\PEPpcod}$ reduces to E-E-Reach[$\bm{Z_1^\ttl}$]}
\label{subsec-ucszl-two}

We now prove a converse of Lemma~\ref{lem-z1l2pep}, thus showing that
$\PEPpcod$ and E-E-Reach[$Z_1^\ttl$] are equivalent problems.
Actually, $\PEPpcod$ can be easily reduced to E-E-Reach[$Z_i^\ttc$] for any
$i\in\{1,2\}$ and $\ttc\in\Ch$, but we only show a reduction for $i=1$
and $\ttc=\ttl$ explicitly. (The other reductions would be analogous.)
\begin{lem}\label{lem-pepz1l}
$\PEPpcod$ reduces to  E-E-Reach[$Z_1^\ttl$] (via a polynomial
reduction).
\end{lem}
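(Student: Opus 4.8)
The plan is to build, from a given $\PEPpcod$ instance $\Pcal=(\Sigma,\Gamma,u,v,R,R')$, a UCST[$Z_1^\ttl$] system $S$ together with empty initial and final configurations, so that $S$ has a run $C_\init\step{*}C_\final$ if and only if $\Pcal$ has a solution. This is the converse of Lemma~\ref{lem-z1l2pep}, and I would follow its correspondence backwards: a candidate solution $\sigma$ is produced letter by letter by Sender, the image $v(\sigma)$ is the message stream Sender writes on the lossy channel $\ttl$, and $u(\sigma)$ is what Receiver reads back from $\ttl$. The lossiness of $\ttl$ is exactly what turns ``$u(\sigma)\subword v(\sigma)$'' into a reachability condition, and the $Z_1^\ttl$ tests are what will encode the partial codirectness constraint.

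Concretely, I would take $\Mess=\Gamma\uplus\Sigma$ (adding one fresh synchronization symbol per letter of $\Sigma$, plus an end marker). Sender's control is a finite automaton for $R$: each time it takes a transition on a letter $a\in\Sigma$ of the guessed $\sigma$, it first emits the marker $a$ on the reliable channel $\ttr$ and then writes $v(a)$, symbol by symbol, on $\ttl$; when it reaches an accepting state of $R$ it emits the end marker, so that $p_\final$ is reachable only along words of $R$. Receiver, driven by $\ttr$, repeatedly reads a marker $a$ and then reads the block $u(a)$ off $\ttl$, until it reads the end marker and moves to $q_\final$. Because $\ttr$ is reliable and fifo, Sender and Receiver agree on the same $\sigma$ and the same factorization into blocks; draining $\ttr$ leaves it empty, and any messages surviving on $\ttl$ can be lost at the end, so both channels are empty in $C_\final$. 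With Sender free to run ahead and write all of $v(\sigma)$ before Receiver reads, the feasibility condition for $\ttl$ is precisely $u(\sigma)\subword v(\sigma)$.

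It remains to enforce the codirectness clause: for every suffix $\sigma'$ of $\sigma$ with $\sigma'\in R'$ we must have $u(\sigma')\subword v(\sigma')$. The mechanism is a $Z_1^\ttl$ test performed by Sender exactly at the beginning of each such suffix: a successful emptiness test there guarantees that every symbol Receiver subsequently reads off $\ttl$ was written \emph{after} the test, i.e.\ $\rl(\sigma')\subword\wl(\sigma')$, which is the required condition. The difficulty is that ``$\sigma'\in R'$'' depends on the as-yet-ungenerated suffix, so Sender cannot decide locally when to test. I would resolve this in the standard way, by having Sender run the \emph{reverse} automaton of a DFA for $R'$: Sender nondeterministically guesses, at each position, the state this automaton would reach on the suffix, maintains these guesses as a finite transition profile, checks their consistency with each newly chosen letter, and validates the guess against the end marker. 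The test is then forced precisely at the positions whose profile marks the suffix as belonging to $R'$; this keeps Sender finite-state and makes the tested positions coincide exactly with the $R'$-suffix beginnings.

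For correctness, soundness is the easy direction: along any run reaching $C_\final$ the reverse-automaton bookkeeping is consistent, so tests occur exactly at the $R'$-suffix beginnings, each such test certifies $u(\sigma')\subword v(\sigma')$, the final emptiness of $\ttl$ certifies $u(\sigma)\subword v(\sigma)$, and Sender's control certifies $\sigma\in R$; hence $\sigma$ is a solution. The completeness direction is where the work lies, and I expect it to be the main obstacle: given a solution $\sigma$ I must exhibit an actual run, reconciling the purely combinatorial relation $\subword$ of $\PEPpcod$ with the fifo discipline of $\ttl$ in the presence of the forced emptiness tests. The plan is to schedule Sender's writes eagerly and Receiver's reads lazily, inserting message losses to empty $\ttl$ just before each forced test; the codirectness of $\sigma$ is exactly what lets every read that must still happen after a test be served by a write occurring after that test, so that the stale content discarded to pass the test is never needed. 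This scheduling argument is the converse counterpart of the postpone-stability reasoning of Claim~\ref{claim-postpone-stable-to-run} and uses the commutations of Lemma~\ref{lem-commuting-steps}; alternatively, one can sidestep it by observing that applying the reduction of Lemma~\ref{lem-z1l2pep} to $S$ returns a $\PEPpcod$ instance whose solutions correspond to those of $\Pcal$, so that the equivalence ``run $\Leftrightarrow$ solution'' for $S$ follows from that lemma.
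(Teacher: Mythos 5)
Your proposal is correct and follows essentially the same route as the paper's proof: Sender guesses $\sigma$ letter by letter while its finite control enforces $\sigma\in R$, writes $v(\sigma)$ on the lossy channel $\ttl$ so that lossiness realizes $u(\sigma)\subword v(\sigma)$ (with $\ttr$ used to force Receiver to consume all of $u(\sigma)$), and performs a $Z_1^\ttl$ test at the start of each suffix belonging to $R'$, membership being resolved by a finite-state guess-and-verify mechanism, with the final scheduling argument matching what the paper dismisses as a routine exercise. The only differences are implementation details: the paper writes $u(\sigma)$ itself rather than markers for $\sigma$ on $\ttr$, and instead of guessing backward-automaton profiles it lets Sender at each position either test or ``commit'' to non-membership of the suffix in $R'$ by launching a forward copy of a DFA for the complement of $R'$ --- both gadgets are standard and interchangeable here.
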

\begin{proof}
Given a $\PEPpcod$-instance $(\Sigma,\Gamma,u,v,R,R')$, we construct
a UCST$[Z^\ttl_1]$ system (denoted $S$) with distinguished states
$p_\init,p_\final,q_\xloop$, such that
\begin{equation}
\label{eq:soliffrun}
\tag{$\star\star\star$}
\textnormal{the instance has a solution iff $S$ has a run }
(p_\init,
  q_\xloop,\epsilon,\epsilon)\step{*}(p_\final,q_\xloop,\epsilon,\epsilon)
\:.
\end{equation}
The idea is simple:
Sender nondeterministically guesses a solution $\sigma$, writing $u(\sigma)$ on
$\ttr$ and $v(\sigma)$ on $\ttl$, and Receiver validates it, by reading identical
sequences from $\ttr$ and $\ttl$ (some messages from $\ttl$ might be lost).
We now make this idea more precise.

Let $M$ and $M'$ be deterministic FSAs recognizing $R$ and the
\emph{complement of} $R'$, respectively. Sender stepwise nondeterministically
generates $\sigma=a_1a_2\dots,a_m$, while taking the ``commitment''
that $\sigma$ belongs to $R$; concretely, after generating
$a_1a_2\dots a_i$ Sender also remembers the state reached by $M$ via
$a_1a_2\dots a_i$, and Sender cannot enter $p_\final$ when the current
state of $M$ is non-accepting. Moreover, for each $i\in\{1,2,\dots,m\}$, i.e., at
every step, Sender might decide to take a further commitment, namely that
$a_ia_{i+1}\dots,a_m\not\in R'$; for each such commitment Sender
starts a new copy of $M'$, remembering the states visited by $M'$ via
$a_ia_{i+1}\dots a_m$, and it cannot enter $p_\final$ if a copy of
$M'$ is in a non-accepting state. Though we do not bound the number of copies of
$M'$, it suffices to remember just a bounded information, namely the set of current
states of all these copies.

When generating $a_i$, Sender writes $u(a_i)$ on $\ttr$ and $v(a_i)$
on $\ttl$. To check that $\ttr$ contains a subword of $\ttl$, Receiver behaves as in
Fig.~\ref{fig-simul-thue} (that illustrates another reduction).
So far we have guaranteed that there is a run $(p_\init,
q_\xloop,\epsilon,\epsilon)\step{*}(p_\final,q_\xloop,\epsilon,\epsilon)$ iff there
is $\sigma=a_1a_2\dots,a_m\in R$ such that $u(\sigma) \subword
v(\sigma)$ (using the lossiness of $\ttl$ where $v(\sigma)$ has been written).

We finish by adding a modification guaranteeing
$u(a_ia_{i+1}\dots,a_m) \subword
v(a_ia_{i+1}\dots,a_m)$ for each $i\in\{1,2,\dots,m\}$ where Sender
does not commit to $a_ia_{i+1}\dots,a_m\not\in R'$. For such steps,
and before writing $u(a_i)$ and $v(a_i)$, Sender must simply wait until
$\ttl$ is empty, i.e., Sender initiates step $i$ by
(nondeterministically) either committing to
$a_ia_{i+1}\ldots, a_m\not\in R'$ or by taking a $Z^\ttl_1$-step.

It is now a routine exercise to verify that \eqref{eq:soliffrun} holds.
\end{proof}

\begin{rem}[On complexity]
Based on known results on the complexity of $\PEPpcod$
(see~\cite{SS-icalp11,KS-msttocs,KS-fossacs2013}), our reductions prove that
reachability for UCST[$Z,N$] is $\textbf{F}_{\omega^\omega}$-complete, using the
ordinal-recursive complexity classes introduced in~\cite{schmitz2013}.
\qed
\end{rem}


\section{Two undecidable problems for UCST[${Z,N}$]}
\label{sec-undec2}

The main result of this article is Theorem~\ref{thm-main}, showing
the decidability of the reachability
problem for UCST[$Z,N$].
In this section we argue that the emptiness and non-emptiness tests
(``$Z$'' and ``$N$'') strictly increase the expressive power of UCSes.
We do this by computational arguments, namely by exhibiting two
variants of the reachability problem that are undecidable for UCST[$Z,N$].
Since these variants are known to be decidable for plain UCSes (with no
tests), we conclude that there is no effective procedure to transform a
UCST[$Z,N$] into an equivalent UCS in general.
Subsection~\ref{subsec-recurreach} deals with the problem of \emph{recurrent
reachability} of a control state. In Subsection~\ref{subsec-writelossy} we
consider the usual reachability problem but we assume that \emph{messages can be lost
only during writing} to $\ttl$ (i.e., we assume that channel $\ttl$ is
reliable and that the unreliability is limited to the writing operation).

\subsection{Recurrent reachability}\label{subsec-recurreach}

The \emph{Recurrent Reachability Problem} asks, when given $S$ and its
states $p_\init, q_\init, p,q$, whether $S$ has an \emph{infinite} run
$C_\init=(p_\init,q_\init,\epsilon,\epsilon) \step{*}(p,q,u_1,v_1)
\step{+}(p,q,u_2,v_2) \step{+}(p,q,\ldots)\cdots$ visiting the pair $(p,q)$
infinitely often (NB: with no constraints on channel contents), called a
``\emph{$pq^\infty$-run}'' for short.

The next theorem separates UCSes from UCSTs, even from UCST[$Z_1^\ttr$],
i.e., UCSTs where the only tests are emptiness tests on $\ttr$ by Sender.
It implies that $Z_1^\ttr$ tests cannot be simulated by UCSes.
\begin{thm}
\label{thm-rec-ucsz}
Recurrent reachability is decidable for UCSes, and is $\Sigma_1^0$-complete
(hence undecidable) for UCST[$Z_1^\ttr$].
\end{thm}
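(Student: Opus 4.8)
The statement has two independent halves, and I would prove them separately.

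\emph{Decidability for UCSes.} The plan is to show that a $pq^\infty$-run exists iff a \emph{finite, pumpable lasso} exists, reducing the problem to (generalized) reachability for UCSes, which is decidable (Theorem~\ref{thm-main} specialized to UCST[$\emptyset$], cf.\ \cite{CS-concur08}). I would call a run $C_\init \step{*} C_1 \step{+} C_2$, with $C_1=(p,q,u_1,v_1)$ and $C_2=(p,q,u_2,v_2)$, a \emph{type-A} lasso if $u_1=u_2$ and $v_1\subword v_2$, and a \emph{type-B} lasso if the segment $C_1\step{+}C_2$ performs no read on $\ttr$, writes a nonempty word to $\ttr$, and has $v_1\subword v_2$. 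Using monotonicity of steps with respect to $\subword$ (a step from a $\subword$-larger configuration can be replayed by inserting head-losses on $\ttl$, as in the proof of Lemma~\ref{lem-commuting-steps}), either kind of lasso pumps into a genuine $pq^\infty$-run; this is the easy direction.

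For the converse I would extract a lasso from an arbitrary $pq^\infty$-run, the delicate point being that $(\Conf,\subword)$ is \emph{not} a wqo, since equality is forced on the reliable channel $\ttr$ (Remark~\ref{remark-higman-and-Conf-subword}). I would split on the behaviour of $\ttr$. If $\ttr$ is read infinitely often, the backlog on $\ttr$ is a mere timing artefact: as Sender writes at the tail and Receiver reads at the head, the ``no contact'' case of Lemma~\ref{lem-commuting-steps} lets me re-time the run so that each message is read shortly after it is written, keeping $\ttr$ bounded while preserving each agent's state sequence and hence the infinitely many $(p,q)$-visits; then some exact content $\bar u$ recurs at infinitely many visits, and Higman's Lemma on the associated $\ttl$-contents yields a type-A lasso. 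If instead $\ttr$ is read only finitely often, then past some point its unread suffix is frozen and, since UCSes have no tests, irrelevant; Higman's Lemma on the $\ttl$-contents of the later $(p,q)$-visits at a fixed read position produces a loop ignoring $\ttr$, i.e.\ a type-A or type-B lasso. This reliable-channel bookkeeping is the main obstacle; once past it, decidability follows, as type-A/type-B lassos can be searched for effectively, bounding the search by the finitely many $\subword$-minimal reachable $(p,q)$-configurations and resolving each query as an instance of generalized reachability for UCSes.

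\emph{$\Sigma_1^0$ upper bound for UCST[$Z_1^\ttr$].} The same characterization survives the addition of $Z_1^\ttr$ tests: a type-A lasso has $u_1=u_2$, so $\ttr$ evolves identically on replay and every emptiness test fires identically, while in the type-B regime $\ttr$ is eventually never read and stays nonempty, so no $Z_1^\ttr$ test interferes. Hence a $pq^\infty$-run exists iff a finite lasso does. Since lassos are recursively enumerable and each is verifiable using decidability of reachability for UCST[$Z_1^\ttr$] (Theorem~\ref{thm-main}, as $\{Z_1^\ttr\}\subseteq\{Z,N\}$) together with a decidable pumpability check, the positive instances are recursively enumerable, giving membership in $\Sigma_1^0$.

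\emph{$\Sigma_1^0$-hardness for UCST[$Z_1^\ttr$].} For the lower bound I would reduce the halting problem for Turing machines $M$. The polarity is the crux: since reachability for UCST[$Z_1^\ttr$] is \emph{decidable}, hardness cannot come from reaching a halt state and must instead exploit that a \emph{halting} (hence finite) computation can be repeated forever, whereas a non-halting one cannot. I would thus have Sender store a candidate halting computation $c_0\#c_1\#\cdots\#c_h$ of $M$ on the reliable channel $\ttr$ and \emph{cycle} it: in each round Sender re-emits the word on $\ttr$ while Receiver, acting as a proxy as in Fig.~\ref{fig-reduc1}, reads from $\ttr$ on Sender's behalf via $\ttl$ and checks that consecutive blocks are valid successor configurations and that $c_h$ is halting, with Sender using $Z_1^\ttr$ tests to detect round completion (channel $\ttr$ empty) before starting the next round; the pair $(p,q)$ is visited once per round. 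If $M$ halts, the genuine computation is rotated forever, giving a $pq^\infty$-run; if $M$ does not halt, there is no finite halting computation to rotate, and, exactly as in the proof of Theorem~\ref{thm-UCST-P1r-undec}, any attempt to cheat by losing messages on $\ttl$ deadlocks rather than producing a spurious acceptance, so $(p,q)$ cannot recur. The main obstacle here is engineering the per-round verification so that faithful halting computations cycle indefinitely while both lossy shortcuts and nonterminating computations are blocked from sustaining the cycle, using only emptiness tests on $\ttr$ by Sender and the lossiness of $\ttl$.
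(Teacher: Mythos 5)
Your decidability and $\Sigma_1^0$-membership arguments are broadly in the paper's spirit --- reduce everything to finite ``lasso'' witnesses --- but the paper's case split is cleaner and yours has soft spots. The paper distinguishes: infinitely many $Z^\ttr$ tests (then Higman's Lemma applies directly to the test configurations, which lie in $\Confre$ where $\ttr$ is empty, yielding an exactly-repeating loop); else infinitely many Sender steps (then, by the Advanceable-Sender commutation, a test-free Sender cycle can be run forever, so only a syntactic cycle through $p$ is needed); else a Receiver cycle reading only $\epsilon$. Your ``re-timing to keep $\ttr$ bounded when $\ttr$ is read infinitely often'' is the shakiest step: postponing a Sender write on $\ttr$ past a Receiver read on $\ttr$ is not the ``no contact'' case of Lemma~\ref{lem-commuting-steps}, preserving the two individual state sequences does not by itself preserve joint visits to $(p,q)$, and the re-timing changes when $\ttr$ is empty, so it does not survive the addition of $Z_1^\ttr$ tests --- exactly the case you then dismiss with ``the same characterization survives.'' The paper never needs to balance reads against writes, which is what makes its completeness argument go through with tests present.

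The genuine gap is in the hardness reduction. You correctly identify the polarity (hardness must come from repeating a finite object forever, not from reaching a state), but the mechanism you propose --- Receiver ``acting as a proxy as in Fig.~\ref{fig-reduc1}'' to check that consecutive configuration blocks on $\ttr$ are valid successors --- is not implementable in UCST[$Z_1^\ttr$]. The handshake of Fig.~\ref{fig-reduc1} requires Sender to detect that Receiver has consumed one message from a \emph{nonempty} $\ttr$, which is done there with parity (or head) tests; a $Z^\ttr$ test yields no information while $\ttr$ is nonempty, and this is precisely why reachability remains decidable for UCST[$Z,N$]. Independently, a finite-state Receiver cannot verify the successor relation between two unbounded adjacent blocks residing on a single channel. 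The paper's reduction sidesteps both problems: it reduces from the looping problem for length-preserving semi-Thue systems; in each round \emph{Sender itself} applies the rewrite rule, nondeterministically writing the source word (followed by $\ttsharp$) on $\ttr$ and the target word (preceded by $\ttsharp$) on $\ttl$, while Receiver only checks cross-channel letter-by-letter matching; the single $Z^\ttr$ test per round, combined with length-preservation and a pigeonhole argument on words of a fixed length, forces the guessed words to eventually coincide exactly despite the lossiness of $\ttl$. As sketched, your reduction does not establish $\Sigma_1^0$-hardness.
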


We start with the upper bounds. Consider a UCST[$Z_1^\ttr$] system $S$ and
assume it admits a $pq^\infty$-run $\pi$. There are three cases:
\begin{description}
\item[case 1]
If $\pi$ uses infinitely many $Z$ tests, it
can be written under the form
\[
C_\init
\step{*} D_1\step{\testr{Z}} \step{*} (p,q,\ldots)
\step{*} D_2\step{\testr{Z}} \step{*} (p,q,\ldots)
\cdots
\step{*} D_n\step{\testr{Z}} \step{*} (p,q,\ldots)
\cdots
\]
Observe that $D_1,D_2,\ldots$ belong to $\Confre$ since they allow a
$\testr{Z}$ test. By Higman's Lemma, there exists two indexes $i<j$ such
that $D_i\subword D_j$. Then $D_j (\step{\los})^* D_i \step{*} (p,q,\ldots)
\step{*} D_j$ and we conclude that $S$ also has a ``looping''
$pq^\infty$-run, witnessed by a finite run of the form $C_\init \step{*}
(p,q,u,v) \step{+} (p,q,u,v)$.

\item[case 2]
Otherwise, if $\pi$ only uses finitely many $Z$ tests, it can be written
under the form $C_\init\step{*}C=(p,q,u,v)\step{}\cdots$ such that no test
occur after $C$. After $C$, any step by Sender can be advanced before
Receiver steps and message losses, according to
Lemma~\ref{lem-commuting-steps}\eqref{comm-as}. Assuming that $\pi$ uses
infinitely many Sender steps, we conclude that $S$ has a $pq^\infty$ run
that eventually only uses Sender rules (but no $Z$ tests). At this point,
we can forget about the contents of the channels (they are not read or
tested anymore). Hence a finite witness for such $pq^\infty$-runs is
obtained by the combination of a
finite run $C_\init\step{*}(p,q,u,v)$ and a loop
$p=p_1\step{\delta_1}p_2\step{\delta_2}\cdots p_n\step{\delta_n}p_1$ in
Sender's rules that does not use any testing rule.

\item[case 3]
The last possibility is that $\pi$ uses only finitely many Sender rules. In
that case, the contents of the channels is eventually fixed hence there is
a looping $pq^\infty$-run of the form $C_\init\step{*}C=(p,q,u,v)\step{+}C$
such that the loop from $C$ to $C$ only uses Receiver rules. A finite
witness for such cases is a finite run $C_\init\step{*}(p,q,u,v)$ combined with a
loop $q=q_1\step{\delta_1}q_2\step{\delta_2}\cdots q_n\step{\delta_n}q_1$
in Receiver's rules that only uses rules reading $\epsilon$.
\end{description}\medskip

\noindent Only the last two cases are possible for UCSes: for these systems, deciding
Recurrent reachability reduces to deciding whether some $(p,q,...)$ is
reachable and looking for a loop (necessarily with no tests) starting from $p$ in
Sender's graph, or a loop with no reads starting from $q$ in Receiver's
graph.

For UCST[$Z_1^\ttr$], one must also consider the general looping ``case
1'', i.e., $\exists u,v:C_\init\step{*}(p,q,u,\linebreak[0] v)\step{+}(p,q,u,v)$. Since
reachability is decidable, this case is in $\Sigma_1^0$, as is Recurrent
reachability for UCST[$Z_1^\ttr$].
\\

Now for the lower bound. We prove $\Sigma_1^0$-hardness by a reduction from
the looping problem for semi-Thue systems.

A \emph{semi-Thue system} $T = (\Gamma, R)$ consists of a finite alphabet
$\Gamma$ and a finite set
$R\subseteq \Gamma^*\times \Gamma^*$
of \emph{rewrite rules};
we write  $\alpha \to \beta$ instead of
$(\alpha,\beta)\in R$.
The system gives rise to a \emph{one-step rewrite relation}
$\to_R
\:\subseteq \Gamma^* \times \Gamma^*$ as expected: $x \to_R y$
$\equivdef$  $x$ and $y$ can be factored as
$x = z \alpha z'$ and $y=z\beta z'$
for some rule $\alpha \to \beta$ and some
strings $z, z' \in \Gamma^*$. As usual, we write $x \step{+}_R y$ if
$x$ can be rewritten into $y$ by a nonempty sequence of steps.

We say that $T = (\Gamma, R)$ is \emph{length-preserving} if $\size{\alpha}
= \size{\beta}$ for each rule in $R$, and that it \emph{has a loop} if
there is some $x \in \Gamma^*$ such that $x \step{+}_R x$.
The following is standard
(since the one-step relation between Turing machine configurations can be
captured by finitely many length-preserving rewrite rules).
\begin{fact}\label{fact-semiThue}
The question whether a given length-preserving semi-Thue
system has a loop is $\Sigma_1^0$-complete.
\end{fact}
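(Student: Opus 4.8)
The plan is to prove the two halves of $\Sigma_1^0$-completeness separately. Membership in $\Sigma_1^0$ is immediate: the set of length-preserving systems $T=(\Gamma,R)$ that have a loop is recursively enumerable, since one can dovetail over all candidate words $x\in\Gamma^*$ and all derivation lengths $k\ge 1$, checking the decidable predicate ``$x\step{+}_R x$ in exactly $k$ steps''. Whenever a loop exists this search reports it, so no further argument is needed for the upper bound.

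For the lower bound I would reduce from the halting problem for deterministic Turing machines. Given a machine $M$ and input $w$, I would encode a configuration with tape content $t_1\cdots t_k$, head on cell $i$ and state $q$ as the string $t_1\cdots t_{i-1}\,q\,t_i\cdots t_k$, flanked by endmarkers $\triangleright,\triangleleft$ and padded with blanks. Each one-step transition of $M$ becomes finitely many length-preserving rules: a right move $(q,a)\mapsto(q',b,R)$ gives the rule $qa\to bq'$, and a left move $(q,a)\mapsto(q',b,L)$ gives $cqa\to q'cb$ for every tape symbol $c$; all of these preserve length, which is exactly the ``standard'' point alluded to in the statement. I would complete the construction with a length-preserving \emph{rewind} phase fired from the halting state $q_h$: a roving marker sweeps left, resetting each cell to its initial value, and upon reaching $\triangleright$ rewrites the canonical initial segment $q_0 w$, so that the string returns to the initial configuration $x_0=\triangleright\, q_0\, w\, \#\cdots\#\,\triangleleft$. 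Reaching $q_h$ therefore produces a loop $x_0\step{+}_R x_0$.

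The hard part will be the converse, because a length-preserving system simulates $M$ only on a tape of the \emph{fixed} length $\size{x_0}$, and a deterministic machine that fails to halt while staying within bounded tape must eventually repeat a configuration, which would create a spurious loop unrelated to the rewind. I would remove this obstacle by reserving a block of the tape for a binary \emph{step counter}, making every simulation rule also increment it, and having the rewind additionally reset it to zero; the counter then strictly increases along any pure simulation, so no simulation-only cycle exists, and an overflowing counter simply blocks. With this in place the correctness proof is clean. If $M$ halts on $w$ in space $s$ and time $t$, the existential quantifier over $x$ selects a tape length large enough to hold both the computation and a counter reaching $t$, so the simulation reaches $q_h$ and rewinds, yielding a loop. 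Conversely, any loop $x\step{+}_R x$ must use the rewind (pure simulation cannot cycle, by the counter) and hence passes through $x_0$; a cycle through $x_0$ forces the simulation started at $x_0$ to reach $q_h$, i.e.\ forces $M$ to halt on $w$. Since the halting problem is $\Sigma_1^0$-complete, this reduction gives $\Sigma_1^0$-hardness, and together with the upper bound it establishes $\Sigma_1^0$-completeness.
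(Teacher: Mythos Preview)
Your proposal is correct and follows the same route the paper gestures at: the paper does not actually prove this Fact but merely labels it ``standard'' with the one-line hint that Turing-machine steps are captured by length-preserving rewrite rules, and your reduction instantiates exactly that encoding. You go well beyond the paper by spotting and resolving the subtlety that a deterministic machine which fails to halt in bounded space would repeat a configuration and create a spurious loop; your step-counter fix (equivalently, first replacing $M$ by a machine that carries its own clock, so that no configuration ever recurs unless $q_h$ is reached and the rewind fires) is the standard way to close this gap.
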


We now reduce the existence of a loop for length-preserving semi-Thue
systems to the recurrent reachability problem for UCST[$Z_1^\ttr$].

Let $T = (\Gamma, R)$ be a given length-preserving semi-Thue system.
We construct a UCST $S$, with message alphabet $\Mess\egdef\Gamma
\uplus \{\ttsharp\}$. The reduction is illustrated in
Fig.~\ref{fig-simul-thue}, assuming $\Gamma=\{a,b\}$. The resulting $S$
behaves as follows:
\\
%
(a) Sender starts in state $p_\init$, begins by nondeterministically sending some $y_0 \in \Gamma^*$ on
$\ttl$, then moves to state $p_\xloop$. In state $p_\xloop$, Sender
performs the following steps in succession:
\begin{enumerate}
\item check that (equivalently, wait until) $\ttr$ is empty;
\item send $\ttsharp$ on $\ttl$;
\item nondeterministically send a string $z \in \Gamma^*$ on both $\ttl$ and $\ttr$;
\item nondeterministically choose a rewrite rule $\alpha \to \beta$
	(from $R$) and send $\alpha$ on $\ttr$ and $\beta$ on $\ttl$;
\item nondeterministically send a string $z' \in \Gamma^*$ on both $\ttl$ and $\ttr$;
\item send $\ttsharp$ on $\ttr$;
\item go back to $p_\xloop$ (and repeat 1--7).
\end{enumerate}
\begin{figure}[htbp]
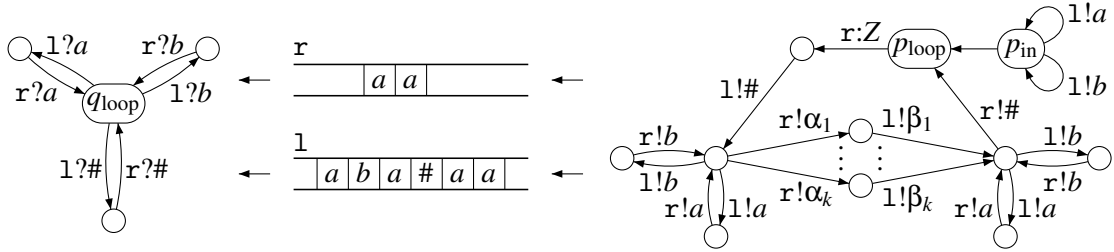

\centering
{\setlength{\unitlength}{1.04mm}
\begin{gpicture}(138,26)(4,-3)

\gasset{ilength=4,flength=4,Nw=5,Nh=5,Nmr=999,loopdiam=4}

\node[Nw=8](q1)(17,13){$q_\xloop$}
\node[Nw=3,Nh=3](q2)(5,20){}
\node[Nw=3,Nh=3](q3)(17,-2){}
\node[Nw=3,Nh=3](q4)(29,20){}
{\gasset{curvedepth=-1,ELside=r}
 \drawedge[ELpos=63,ELdist=0.3](q1,q2){$\ttl?a$}
 \drawedge[ELpos=35,ELdist=0.2](q2,q1){$\ttr?a$}
 \drawedge[ELpos=55,ELdist=0.5](q1,q3){$\ttl?\ttsharp$}
 \drawedge[ELpos=45,ELdist=0.5](q3,q1){$\ttr?\ttsharp$}
 \drawedge[ELpos=65,ELdist=0.2](q1,q4){$\ttl?b$}
 \drawedge[ELpos=35,ELdist=0.2](q4,q1){$\ttr?b$}
}

\node[Nw=6](p1)(133,20){$p_\init$}
\node[Nw=8](p2)(120,20){$p_\xloop$}
\node[Nw=3,Nh=3](p3)(105,20){}
\node[Nw=3,Nh=3](p4)(94,6){}
\node[Nw=3,Nh=3](p4a)(94,-4){}
\node[Nw=3,Nh=3](p4b)(82,6){}
\node[Nw=3,Nh=3](p5)(131,6){}
\node[Nw=3,Nh=3](p5a)(131,-4){}
\node[Nw=3,Nh=3](p5b)(143,6){}
\node[Nw=3,Nh=3](p451)(112.5,9.5){}
\node[Nw=3,Nh=3](p45k)(112.5,2.5){}
\node[Nframe=n](dots)(110,7){$\vdots$}
\node[Nframe=n](dots)(115,7){$\vdots$}
\drawloop[loopangle=45,ELpos=40,ELdist=0.5,loopCW=n,ELside=r](p1){$\ttl!a$}
\drawloop[loopangle=-45,ELpos=40,ELdist=0.5](p1){$\ttl!b$}
\drawedge(p1,p2){}
\drawedge[ELside=r](p2,p3){$\testr{Z}$}
\drawedge[ELside=r,ELdist=0.5](p3,p4){$\ttl!\ttsharp$}
\drawedge[curvedepth=1,ELpos=60,ELdist=0.5](p4,p4a){$\ttl!a$}
\drawedge[curvedepth=1,ELpos=40,ELdist=0.5](p4a,p4){$\ttr!a$}
\drawedge[curvedepth=1,ELpos=60,ELdist=0.5](p4,p4b){$\ttl!b$}
\drawedge[curvedepth=1,ELpos=40,ELdist=0.5](p4b,p4){$\ttr!b$}
\drawedge[curvedepth=1,ELpos=60,ELdist=0.5](p5,p5a){$\ttl!a$}
\drawedge[curvedepth=1,ELpos=40,ELdist=0.5](p5a,p5){$\ttr!a$}
\drawedge[curvedepth=1,ELpos=60,ELdist=0.5](p5,p5b){$\ttl!b$}
\drawedge[curvedepth=1,ELpos=40,ELdist=0.5](p5b,p5){$\ttr!b$}
\drawedge[ELpos=65,ELdist=0.5](p4,p451){$\ttr!\alpha_1$}
\drawedge[ELpos=30,ELdist=0.2](p451,p5){$\ttl!\beta_1$}
\drawedge[ELside=r,ELpos=65,ELdist=0.2](p4,p45k){$\ttr!\alpha_k$}
\drawedge[ELside=r,ELpos=30,ELdist=0.4](p45k,p5){$\ttl!\beta_k$}
\drawedge[ELside=r,ELdist=0.5,ELpos=30](p5,p2){$\ttr!\ttsharp$}

{\gasset{AHnb=0,Nframe=n}

 {\gasset{linewidth=0.2}
 \drawline(40,2)(70,2)
 \drawline(40,6)(70,6)
 \drawline(40,14)(70,14)
 \drawline(40,18)(70,18)}

\put(40,20){\makebox(0,0)[l]{channel $\ttr$ (reliable)}}
\put(40,8){\makebox(0,0)[l]{channel $\ttl$ (lossy)}}

 \drawline(43,2)(43,6) \put(44,3){$a$}
 \drawline(47,2)(47,6) \put(48,3){$b$}
 \drawline(51,2)(51,6) \put(52,3){$a$}
 \drawline(55,2)(55,6) \put(56,3){$\ttsharp$}
 \drawline(59,2)(59,6) \put(60,3){$a$}
 \drawline(63,2)(63,6) \put(64,3){$a$}
 \drawline(67,2)(67,6)

 \drawline(49,14)(49,18) \put(50,15){$a$}
 \drawline(53,14)(53,18) \put(54,15){$a$}
 \drawline(57,14)(57,18)

 \drawline[AHnb=1](37,16)(33,16)
 \drawline[AHnb=1](37,4)(33,4)
 \drawline[AHnb=1](77,16)(73,16)
 \drawline[AHnb=1](77,4)(73,4)
}

\end{gpicture}}
\caption{Solving the looping problem for semi-Thue systems}
\label{fig-simul-thue}
\end{figure}

%
\noindent The loop 1--7 above can be also summarized as: check that $\ttr$ is empty,
nondeterministically guess two strings $x$ and $y$ such that $x \to_R y$,
writing $x \ttsharp$ on $\ttr$ and $\ttsharp y$ on $\ttl$.
\\
%
(b)
Receiver starts in state $q_\xloop$ from where it
reads any pair of identical symbols from $\ttr$ and $\ttl$, returns to
$q_\xloop$, and repeats this indefinitely.

\begin{claim}[Correctness of the reduction]
$S$ has an infinite run starting from $C_\init=(p_\init, q_\xloop,
  \epsilon, \epsilon)$ and visiting the control pair $(p_\xloop, q_\xloop)$
  infinitely often if, and only if, $x \step{+}_R x$ for some $x \in
  \Gamma^*$.
\end{claim}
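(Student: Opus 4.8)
The plan is to read off, from the structure of $S$, exactly what the two channel contents look like after several iterations of Sender's loop, and to use the emptiness test $\testr{Z}$ as the synchronisation device that ties the run back to a genuine loop of $T$. Observe first that every visit to $p_\xloop$ is reached only by completing one pass through steps $1$--$7$, so an infinite run visiting $(p_\xloop,q_\xloop)$ infinitely often must perform infinitely many iterations, hence infinitely many $\testr{Z}$ tests. During the $i$-th iteration Sender writes a block $x_i=z_i\alpha_i z'_i$ followed by $\ttsharp$ to $\ttr$, and $\ttsharp$ followed by $y_i=z_i\beta_i z'_i$ to $\ttl$, where $\alpha_i\to\beta_i$ is a rule of $R$; together with the initial block $y_0$ written to $\ttl$, the total contents written are $x_1\ttsharp x_2\ttsharp\cdots$ on $\ttr$ and $y_0\ttsharp y_1\ttsharp\cdots$ on $\ttl$. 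Because each rule is length-preserving and $z_i,z'_i$ are sent identically to both channels, $\size{x_i}=\size{y_i}$ and $x_i\to_R y_i$ for every $i$.

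For the ``if'' direction I would start from a loop $w_0\to_R w_1\to_R\cdots\to_R w_k=w_0$ (all of common length $m$) and drive Sender through iterations that realise $x_i=w_{(i-1)\bmod k}$ and $y_i=w_{i\bmod k}$, taking $y_0=w_0$. Then the contents of $\ttr$ and of $\ttl$ are the \emph{same} word $w_0\ttsharp w_1\ttsharp\cdots$, so Receiver can read matching pairs from the two heads without ever using a loss. The schedule is the obvious lockstep one: before the $i$-th iteration one has $\ttr=\epsilon$ and $\ttl=w_{(i-1)\bmod k}$, so both agents sit in their loop states (a visit to $(p_\xloop,q_\xloop)$); Sender then appends $w_{(i-1)\bmod k}\ttsharp$ to $\ttr$ and $\ttsharp\,w_{i\bmod k}$ to $\ttl$, after which Receiver consumes $w_{(i-1)\bmod k}\ttsharp$ from both, restoring $\ttr=\epsilon$ and leaving $\ttl=w_{i\bmod k}$. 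This passes every $\testr{Z}$ test and returns to $(p_\xloop,q_\xloop)$, yielding the required infinite run.

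For the ``only if'' direction the key observation is that at the $\testr{Z}$ test opening the $(i{+}1)$-th iteration the channel $\ttr$ is \emph{empty}; since $\ttr$ is reliable, Receiver must by then have read \emph{all} of $x_1\ttsharp\cdots x_i\ttsharp$, and since each Receiver step reads equal symbols from both channels, this same word is read from $\ttl$, hence is a scattered subword of the $\ttl$-contents written so far: $x_1\ttsharp\cdots x_i\ttsharp\subword y_0\ttsharp y_1\ttsharp\cdots\ttsharp y_i$. As $\ttsharp\notin\Gamma$ occurs exactly $i$ times on each side, any such embedding must match the $j$-th $\ttsharp$ on the left with the $j$-th on the right, forcing $x_j\subword y_{j-1}$ for all $j\le i$, and, letting $i\to\infty$, for all $j\ge 1$. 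Now $\size{x_j}\le\size{y_{j-1}}=\size{x_{j-1}}$ by length-preservation, so $\size{x_1}\ge\size{x_2}\ge\cdots$ is non-increasing and eventually constant, say equal to $m$ for all $j\ge J$. For such $j$ a subword relation between words of equal length is an equality, whence $x_j=y_{j-1}$ and therefore $x_{j-1}\to_R y_{j-1}=x_j$; thus $x_J\to_R x_{J+1}\to_R\cdots$ is an infinite derivation among the finitely many words of length $m$, and by the pigeonhole principle some word recurs, $x_a=x_b$ with $a<b$, giving $x_a\step{+}_R x_a$.

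I expect the main obstacle to be this last extraction: obtaining merely an infinite derivation from the infinite run is easy, but isolating an honest \emph{loop} $x\step{+}_R x$ needs the two quantitative ingredients above---the $\ttsharp$-alignment that produces the block-wise relations $x_j\subword y_{j-1}$, and the length-preservation of $T$ that converts these into length monotonicity and ultimately into equalities to which pigeonhole applies. The ``if'' direction, by contrast, is a routine scheduling argument once the two channel contents are seen to coincide.
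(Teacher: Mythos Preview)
Your proposal is correct and follows essentially the same approach as the paper: both directions use the same scheduling for ``if'' and the same length-monotonicity/pigeonhole extraction for ``only if''. The one presentational difference is in deriving $x_j\subword y_{j-1}$: the paper argues directly that no $\ttsharp$ written to $\ttl$ can be lost (since each $\ttsharp$ read from $\ttr$ forces a matching $\ttsharp$-read from $\ttl$), whereas you obtain the same block-wise conclusion from the global embedding $x_1\ttsharp\cdots x_i\ttsharp\subword y_0\ttsharp\cdots\ttsharp y_i$ together with a $\ttsharp$-counting alignment---both routes are equivalent.
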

\begin{proof}
For the ``$\Leftarrow$'' direction we assume that $T$ has a loop $x=x_0
\to_R x_1 \to_R \ldots \to_R x_n=x$ with $n>0$. Let
$C_i\egdef(p_\xloop,q_\xloop,\epsilon,x_i)$. $S$ obviously has a run
$C_\init\step{*}C_0$, sending $x_0$ on $\ttl$. For each $i\geq
0$, $S$ has a run
$C_i\step{+}C_{i+1}$: it starts with appending the pair $x_i\to_R x_{i+1}$
on the channels, hence visiting $(.,.,x_i\,\ttsharp,x_i\,\ttsharp\, x_{i+1})$,
from which Receiver can read the $x_i\,\ttsharp$ prefix on both channels,
thus reaching $C_{i+1}$. Note that no messages are lost in these runs.
Chaining them gives an infinite run that visits $(p_\xloop,q_\xloop)$
infinitely many times.

For the ``$\Rightarrow$'' direction, we assume $S$ has an infinite run
starting from $C_\init$ that visits $(p_\xloop, q_\xloop)$ infinitely
often. Since Sender checks the emptiness of $\ttr$ before running
through its loop, we conclude that no $\ttsharp$ character written to
$\ttl$ is lost during the run.
Let $y_0$ be written on $\ttl$ before the first visit of $p_\xloop$;
for $i\geq 1$, let $(x_i, y_i)$ be the pair of strings
guessed by Sender during the $i$th iteration
of its loop 1--7
($x_i$ written on $\ttr$
and $y_i$ on $\ttl$).
Receiver can
only empty the reliable channel $\ttr$ if $x_i\subword y_{i-1}$ for all $i
\geq 1$. This implies $\size{x_i} \leq \size{y_{i-1}}$. We also have
$\size{x_i} = \size{y_i}$ since $T$ is length-preserving. Therefore
eventually, say for all $i\geq n$, all $x_i$ and $y_i$
have the same length. Then $x_i=y_{i-1}$ for $i>n$ (since
$x_i\subword y_{i-1}$ and $\size{x_i}=\size{y_{i-1}}$).
Hence $T$ admits an infinite derivation of the
form
\[
x_n \to_R y_n = x_{n+1} \to_R y_{n+1} = x_{n+2} \to_R \cdots
\]
Since there are only finitely many strings of a given length, there
are
two positions $m'>m\geq n$ such that $x_m=x_{m'}$; hence
$T$ has a loop $x_m \step{+}_R x_m$.
\end{proof}

\subsection{Write-lossy semantics}
\label{subsec-writelossy}

As another illustration of the power of tests, we consider UCSTs with
\emph{write-lossy semantics}, that is, UCSTs with the assumption that
messages are only lost during steps that write them to $\ttl$. Once
messages are in $\ttl$, they are never lost.
If we start with the empty channel $\ttl$ and we only allow the
emptiness tests on $\ttl$, then any
computation in normal lossy semantics can be mimicked by
a computation in write-lossy semantics:
any occurrence of a message that gets finally lost
will simply not be written.
Adding the non-emptiness test makes a difference, since the
reachability problem becomes undecidable.

We now make this reasoning more formal, using
the new
transition relation $C\step{}_\wrlo C'$ that is intermediary between the
reliable and the lossy semantics.

Each $\ttl$-writing rule $\delta$ of the form $p\step{\ttl!x}p'$
in a UCST $S$ will give rise to \emph{write-lossy steps} of the form
$(p,q,u,v)\step{\wrlo}(p',q,u,v)$, where $\delta$ is performed
but nothing is
actually written.
We write $C\step{}_\wrlo C'$ when there is a reliable or a
write-lossy step from $C$ to $C'$, and use $C\step{}_\rel C'$ and
$C\step{}_\los C'$ to denote the existence of a reliable step, and
respectively, of a reliable or a lossy step. Then $\step{}_\rel
\:\subseteq\: \step{}_\wrlo \:\subseteq\: \step{*}_\los$.

Now we make precise the equivalence of the two semantics
when we start with the empty $\ttl$
and only use  the emptiness tests:

\begin{lem}
\label{lem-wl-coincide}
Assume $S$ is a UCST[$Z$] system. Let $C_\init=(p,q,u,\epsilon)$ be a
configuration (where $\ttl$ is empty). Then, for any $C_\final$
configuration, $C_\init\step{*}_\los C_\final$ iff $C_\init\step{*}_\wrlo
C_\final$.
\end{lem}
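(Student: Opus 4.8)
The plan is to treat the two implications of the ``iff'' separately. The implication $C_\init\step{*}_\wrlo C_\final \Rightarrow C_\init\step{*}_\los C_\final$ is essentially free: the excerpt already records $\step{}_\wrlo\,\subseteq\,\step{*}_\los$, so replaying a $\step{*}_\wrlo$ run step by step (a reliable step is already lossy, and a write-lossy step arising from a rule $p\step{\ttl!x}p'$ is simulated by writing $x$ reliably and immediately losing that single occurrence) produces a $\step{*}_\los$ run with the same endpoints. All the work lies in the converse.

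For $C_\init\step{*}_\los C_\final \Rightarrow C_\init\step{*}_\wrlo C_\final$, I would fix a run $C_\init=C_0\step{}_\los C_1\step{}_\los\cdots\step{}_\los C_n=C_\final$ with $C_0=(p,q,u,\epsilon)$ and $C_i=(p_i,q_i,u_i,v_i)$. Because $\ttl$ is empty in $C_\init$, every symbol occurrence that ever sits in $\ttl$ was appended by a unique $\ttl$-write of the run and leaves $\ttl$ (if at all) through a unique later step, either a read by Receiver or a message loss. Call an occurrence \emph{doomed} if it is removed by a loss step, and \emph{kept} otherwise (it is eventually read, or survives in $C_\final$). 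The idea is to replay the run while never materialising the doomed occurrences: form a new run by replacing each $\ttl$-write of a doomed occurrence with the matching write-lossy step, deleting every message-loss step, and copying all remaining steps verbatim (every $\ttr$ action, every test, every control move, and every read of a kept occurrence).

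To see that this yields a legitimate $\step{*}_\wrlo$ run I would carry the invariant that the new run visits the configurations $(p_i,q_i,u_i,\tilde v_i)$, where $\tilde v_i$ is the subword of $v_i$ obtained by deleting the doomed occurrences; a loss step of the original run then corresponds to taking no step, since it removes a doomed occurrence and leaves $\tilde v$ unchanged. The check is a routine case analysis. Steps on $\ttr$, tests on $\ttr$, and Sender/Receiver control moves leave $\tilde v$ fixed and are copied; an $\ttl$-write either appends a kept occurrence (a reliable write, $\tilde v$ grows) or a doomed one (a write-lossy step, $\tilde v$ unchanged). The two channel-sensitive cases are reads of $\ttl$ and emptiness tests: a read removes the head of $v_i$, which is kept by definition and, having nothing before it, is also the head of $\tilde v_i$, so the read is reproducible; and a $\testl{Z}$ test fires only when $v_i=\epsilon$, whence $\tilde v_i=\epsilon$ and the same test fires. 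Finally $\tilde v_0=\epsilon$ (as $v_0=\epsilon$) and $\tilde v_n=v_n$ (no occurrence present in $C_\final$ is doomed), so the new run indeed runs from $C_\init$ to $C_\final$.

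The conceptual crux, and the reason the statement restricts to UCST[$Z$] with an initially empty $\ttl$, is the emptiness-test case. The whole construction rests on the implication $v_i=\epsilon\Rightarrow\tilde v_i=\epsilon$, which is trivially true but would fail for a non-emptiness test $\testl{N}$, since $\tilde v_i$ can be empty while $v_i$ is not, exactly when every currently buffered occurrence is doomed. Likewise, an initially nonempty $\ttl$ would contain occurrences with no write step available to suppress, breaking the replay. Under the stated hypotheses the doomed/kept classification is well defined and every surviving test is preserved, so I expect the case analysis on reads and tests to be the only place demanding genuine care.
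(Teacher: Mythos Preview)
Your proof is correct, but it follows a different route from the paper's. The paper proves the non-trivial direction via a local commutation claim: if $C\step{}_\wrlo C'\supword_1 C''$ then $C\supword D\step{}_\wrlo C''$ for some $D$ (using that only $Z$ tests occur). Since $\step{*}_\los=(\step{}_\wrlo\cup\supword_1)^*$, iterating this swap pushes all losses to the front, giving $C_\init\supword C'\step{*}_\wrlo C_\final$ for some $C'$; the hypothesis that $\ttl$ is empty in $C_\init$ then forces $C'=C_\init$. Your argument is instead global and constructive: you classify each $\ttl$-occurrence as doomed or kept according to its fate in the fixed run, and build the write-lossy run in one pass by suppressing doomed occurrences at the moment they would be written. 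The paper's approach is shorter and yields the slightly stronger intermediate fact that, for \emph{any} starting configuration in a UCST[$Z$], lossy reachability factors as $\supword$ followed by $\step{*}_\wrlo$; your approach, by contrast, makes the dependence on both hypotheses (no $N$ tests, $\ttl$ initially empty) completely transparent through the invariant $\tilde v_i\subword v_i$ and the read/test case analysis.
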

\begin{proof}
The ``$\Leftarrow$'' direction is trivial.
For the ``$\Rightarrow$'' direction
we claim that
\begin{gather}
\label{eq-claim-wrlo-commute}
\tag{$\dagger$}
\text{if $C\step{}_\wrlo C'\supword_1 C''$, then also
$C\supword D\step{}_\wrlo C''$ for some $D$.}
\end{gather}
Indeed, if (the occurrence of) the
message in $C'$ that is missing in $C''$ occurs
in $C$, then it is possible to first lose this message, leading to $D$,
before mimicking the step that went from $C$ to $C'$ (we rely here on the
fact that $S$ only uses $Z$ tests). Otherwise, $C''$ is obtained by losing
the message that has just been (reliably) written when moving from $C$ to
$C'$, and taking $D=C$ is possible.

Now, since $\step{*}_\los$ is $\bigl(\step{}_\wrlo \cup \supword_1\bigr)^*$
and since $\bigl(\supword_1\bigr)^*$ is $\supword$, we can use
\eqref{eq-claim-wrlo-commute} and conclude that $C\step{*}_\los D$ implies
that $C\supword C'\step{*}_\wrlo D$ for some $C'$. Finally, in the case
where $C=C_\init$ and $\ttl$ is empty, only $C'=C_\init$ is possible.
\end{proof}
\begin{cor}
\label{coro-ucsz-wlossy}
E-G-Reachability is decidable for UCST[$Z$] with write-lossy
semantics.
\end{cor}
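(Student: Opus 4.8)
The plan is to reduce write-lossy E-G-Reachability to ordinary (lossy) E-G-Reachability and then invoke Theorem~\ref{thm-main}. The crucial point is that in every E-G-Reach instance the initial configuration has the form $C_\init=(p_\init,q_\init,\epsilon,\epsilon)$, so in particular $\ttl$ is empty at the start; this is exactly the hypothesis required by Lemma~\ref{lem-wl-coincide}.

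First I would fix an instance $S,p_\init,p_\final,q_\init,q_\final,U',V'$ of E-G-Reach with $S$ a UCST[$Z$] system, so that the write-lossy question asks whether there exist $u'\in U'$ and $v'\in V'$ with $C_\init\step{*}_\wrlo(p_\final,q_\final,u',v')$. Since $\ttl$ is empty in $C_\init$, Lemma~\ref{lem-wl-coincide} gives, for every fixed target configuration $C_\final=(p_\final,q_\final,u',v')$, the equivalence $C_\init\step{*}_\wrlo C_\final$ iff $C_\init\step{*}_\los C_\final$. As this equivalence holds uniformly over all choices of $u',v'$, the existential questions coincide: the write-lossy instance and the corresponding lossy instance (on the same data) are literally the same question.

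Next I would observe that UCST[$Z$] is a subclass of UCST[$Z,N$], since it simply refrains from using $N$ tests. Hence by Theorem~\ref{thm-main} the problem G-G-Reach is decidable for UCST[$Z$] under the standard lossy semantics, and so is its special case E-G-Reach (obtained by taking $U=V=\{\epsilon\}$, which are regular). Combining this with the previous step shows that the write-lossy E-G-Reach instance is decidable, as its answer agrees with that of a lossy E-G-Reach instance.

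I expect no serious obstacle here: the substantive work is already carried by Lemma~\ref{lem-wl-coincide} (coincidence of the two semantics from an empty $\ttl$) and by Theorem~\ref{thm-main} (decidability of the lossy problem), and the argument is just a matter of aligning the two. The only point deserving care is that the empty-$\ttl$ hypothesis of the lemma is precisely what the leading ``E'' of E-G-Reach supplies; if the initial contents of $\ttl$ were unconstrained, as in G-G-Reach, Lemma~\ref{lem-wl-coincide} would not apply and this reduction would break down.
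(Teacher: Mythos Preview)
Your proposal is correct and is exactly the intended argument: the paper states this as an immediate corollary of Lemma~\ref{lem-wl-coincide} (coincidence of write-lossy and lossy reachability from an $\ttl$-empty start) together with Theorem~\ref{thm-main}, and you have spelled out precisely that combination, including the observation that the ``E'' in E-G-Reach supplies the empty-$\ttl$ hypothesis the lemma needs.
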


The write-lossy semantics is meaningful when modeling unreliability of the
writing actions as opposed to unreliability of the channels. In the
literature, write-lossy semantics is mostly used as a way of restricting
the nondeterminism of message losses without losing any essential
generality, relying on equivalences like Lemma~\ref{lem-wl-coincide} (see,
e.g., \cite[section~5.1]{CS-lics08}).

However, for our UCST systems, the write-lossy and the standard lossy
semantics do not coincide when $N$ tests are allowed. In fact,
Theorem~\ref{thm-main} does not extend to write-lossy systems.
\begin{thm}
\label{thm-ucst-wlossy-undec}
E-E-Reach is undecidable for UCST[$Z_1^\ttl,N_1^\ttl$] with write-lossy
semantics.
\end{thm}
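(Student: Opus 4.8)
The plan is to show that, under the write-lossy semantics, the non-emptiness test $N_1^\ttl$ lets Sender use $\ttl$ as a \emph{reliable} one-message request channel, and thereby turn the proxy construction underlying Theorem~\ref{thm-UCST-P1r-undec} into a faithful simulation of a queue automaton. Since the halting problem for queue automata (with empty queue upon halting) is undecidable, this yields undecidability of E-E-Reach. Concretely, I would reuse essentially the Receiver of Fig.~\ref{fig-reduc1}: Receiver stays in its proxy state $q_\prx$ and, for each $a\in\Mess$, may read $a$ from $\ttl$ (interpreted as a request) and then read the same $a$ from $\ttr$ on Sender's behalf, returning to $q_\prx$.

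Sender simulates a queue automaton whose queue is the reliable channel $\ttr$. An \emph{enqueue} of $a$ is just a write $\ttr!a$. A \emph{dequeue} that guesses head symbol $a$ is implemented by the gadget: assert that $\ttl$ is empty by a $\testl{Z}$-step; then write the request by $\ttl!a$ and branch on a test, taking the $\testl{N}$-edge to continue and the $\testl{Z}$-edge to loop back and rewrite; finally wait for the request to be consumed by a second $\testl{Z}$-step, after which Sender resumes its control. A wrong guess of the head symbol merely deadlocks Receiver at its $\ttr?a$ step, so only correct dequeues contribute to runs reaching the target; this guessing is harmless for reachability exactly as in Theorem~\ref{thm-UCST-P1r-undec}.

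The crux---and the only place where write-lossiness is used---is arguing that this request transmission is reliable. Because the gadget only ever writes to an \emph{empty} $\ttl$ and the protocol keeps at most one pending message in $\ttl$, the two tests precisely detect presence versus absence of that single request: after $\ttl!a$ the $\testl{N}$-branch can fire iff the write was not dropped, so the retry loop terminates exactly with $\ttl$ holding a confirmed occurrence of $a$. Crucially, under $\step{}_\wrlo$ a message is never lost \emph{after} being written, so the confirmed $a$ persists until Receiver reads it, and the concluding $\testl{Z}$-step certifies that this read has occurred. This is precisely what fails in the standard lossy semantics, where $a$ could disappear after the $N$-test, consistently with the decidability established in Theorem~\ref{thm-main}. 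Hence each dequeue gadget removes exactly the current head symbol of $\ttr$, and Sender's finite control mirrors the queue automaton step for step.

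Finally, for E-E-Reach I would start from empty channels (empty queue) and, just before entering the halting state, have Sender flush the queue by a sequence of dequeue gadgets so that $\ttr$ is emptied while $\ttl$ stays empty by the protocol invariant; the run then ends in $(p_\final,q_\prx,\epsilon,\epsilon)$. The main obstacle is not the simulation itself but keeping the reliability argument airtight: maintaining the invariant that $\ttl$ carries at most one message, so that $\testl{Z}$ and $\testl{N}$ are unambiguous, and verifying that the write-lossy nondeterminism of each $\ttl!a$ is entirely tamed by the surrounding $Z/N$ tests.
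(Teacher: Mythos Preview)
Your approach is the paper's: the same proxy Receiver and the same key observation that, under write-lossy semantics, a $\testl{N}$ test immediately after $\ttl!a$ certifies the request was actually written, after which it cannot disappear until read. The paper's gadget is simply $\ttl!a;\,\testl{N};\,\testl{Z}$ (your leading $\testl{Z}$ is redundant but harmless).

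However, your retry branch---the ``$\testl{Z}$-edge to loop back and rewrite''---is not merely superfluous: it breaks the converse direction of the reduction. Between Sender's $\ttl!a$ and its subsequent test, Receiver may legitimately read $a$ from $\ttl$ \emph{and} the matching $a$ from $\ttr$; then $\ttl$ is empty again, Sender takes the retry branch and issues a fresh request. One ``dequeue $a$'' gadget can thus pop several consecutive $a$'s from $\ttr$ while Sender's control advances by a single queue-automaton step. Your UCST may therefore reach $(p_\final,q_\prx,\epsilon,\epsilon)$ even when the simulated automaton never halts: take a machine that writes $aab$, dequeues one $a$, and then halts iff the new head is $b$; it never halts, yet your gadget can strip both $a$'s in one pass and then fire the $b$-dequeue. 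The fix is exactly the paper's: drop the retry branch. A lost write, or a Receiver read that races ahead of Sender's $\testl{N}$, then simply deadlocks that particular run, which is harmless for reachability.
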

\begin{proof}[Proof Idea]
As in Section~\ref{ssec-simulating-queue}, Sender simulates a queue automaton using tests and the help of
Receiver. See Fig.~\ref{fig-reduc2}. Channel $\ttl$ is initially empty. To
read, say, $a$ from $\ttr$, Sender does the following: (1) write $a$ on
$\ttl$; (2) check that $\ttl$ is nonempty (hence the write was not lost);
(3) check that, i.e., wait until, $\ttl$ is empty. Meanwhile, Receiver
reads identical letters from $\ttr$ and $\ttl$.
\end{proof}
\begin{figure}[htbp]
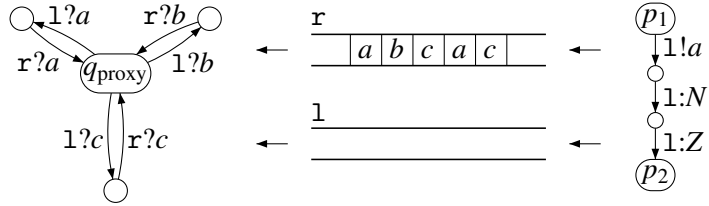

\centering
{\setlength{\unitlength}{1.04mm}
\begin{gpicture}(89,24)(4,-3)

\gasset{ilength=4,flength=4,Nw=5,Nh=5,Nmr=999,loopdiam=5}

\node[Nw=9](q1)(17,13){$q_\prx$}
\node[Nw=3,Nh=3](q2)(5,20){}
\node[Nw=3,Nh=3](q3)(17,-2){}
\node[Nw=3,Nh=3](q4)(29,20){}
{\gasset{curvedepth=-1,ELside=r}
 \drawedge[ELpos=63,ELdist=0.3](q1,q2){$\ttl?a$}
 \drawedge[ELpos=35,ELdist=0.2](q2,q1){$\ttr?a$}
 \drawedge[ELpos=55,ELdist=0.5](q1,q3){$\ttl?c$}
 \drawedge[ELpos=45,ELdist=0.5](q3,q1){$\ttr?c$}
 \drawedge[ELpos=65,ELdist=0.2](q1,q4){$\ttl?b$}
 \drawedge[ELpos=35,ELdist=0.2](q4,q1){$\ttr?b$}
}

\node[Nh=4](p1)(86,20){$p_1$}
\node[Nh=4](p2)(86,0){$p_2$}
\node[Nw=2,Nh=2](pp0)(86,13){}
\node[Nw=2,Nh=2](pp1)(86,7){}
\drawedge(p1,pp0){$\ttl ! a$}
\drawedge(pp0,pp1){$\testl{N}$}
\drawedge[ELpos=40](pp1,p2){$\testl{Z}$}

{\gasset{AHnb=0,Nframe=n}

 {\gasset{linewidth=0.2}
 \drawline(42,2)(72,2)
 \drawline(42,6)(72,6)
 \drawline(42,14)(72,14)
 \drawline(42,18)(72,18)}

\put(42,20){\makebox(0,0)[l]{channel $\ttr$ (reliable)}}
\put(42,8){\makebox(0,0)[l]{channel $\ttl$ (lossy)}}

 \drawline(47,14)(47,18) \put(48,15){$a$}
 \drawline(51,14)(51,18) \put(52,15){$b$}
 \drawline(55,14)(55,18) \put(56,15){$c$}
 \drawline(59,14)(59,18) \put(60,15){$a$}
 \drawline(63,14)(63,18) \put(64,15){$c$}
 \drawline(67,14)(67,18)

 \drawline[AHnb=1](39,16)(35,16)
 \drawline[AHnb=1](39,4)(35,4)
 \drawline[AHnb=1](79,16)(75,16)
 \drawline[AHnb=1](79,4)(75,4)
}

\end{gpicture}}
\caption{Write-lossy Sender simulates ``$p_1\step{\ttr?a}p_2$'' with $N$ and $Z$ tests and proxy Receiver}
\label{fig-reduc2}
\end{figure}

%
\noindent Thus, at least in the write-lossy setting, we can separate UCST[$Z$] and
UCST[$Z_1^\ttl,N_1^\ttl$] w.r.t.\  decida\-bility of reachability.



\section{Conclusion}
\label{sec-concl}

UCSes are communicating systems where a Sender can send messages to
a Receiver via one reliable and one unreliable, lossy, channel, but
where no direct communication is possible in the other direction. We
introduced UCSTs, an extension of UCSes where steps can be guarded by tests,
i.e., regular predicates on channel contents. This extension introduces
limited but real possibilities for synchronization between Sender and
Receiver. For example, Sender (or Receiver) may use tests to detect whether
the other agent has read (or written) some message. As a consequence, adding
tests leads to undecidable reachability problems in general. Our main
result is that reachability remains decidable when only emptiness and
non-emptiness tests are allowed. The proof goes through a series of
reductions from UCST[$Z,N$] to UCST[$Z_1^\ttl$] and finally
 to $\PEPpcod$, an extension of Post's Embedding Problem that was motivated
by the present article and whose decidability was recently proved by the last two
authors~\cite{KS-msttocs}.

These partial results do not yet provide a clear picture of what tests
on channel contents make reachability undecidable for UCSTs. At the
time of this writing, the two most pressing questions we would like to
see answered are:
\begin{enumerate}
\item
what about occurrence and non-occurrence tests, defined as
$\{O_a,NO_a~|~a\in\Mess\}$ with $O_a=\Mess^*.a.\Mess^*$ and
$NO_a=(\Mess\setminus\{a\})^*$? Such tests generalize $N$ and $Z$ tests and have been considered for channel systems used as a tool for questions on Metric
Temporal Logic~\cite{bouyer2007}.
\item
what about UCSTs with tests restricted to the lossy $\ttl$
channel? The undecidable reachability questions
in Theorem~\ref{thm-UCST-P1r-undec} all rely on tests on the reliable $\ttr$ channel.
\end{enumerate}



\bibliographystyle{alpha}
\bibliography{LMCS_2014_999_singlefile}


\end{document}